\newif\ifsoda\sodafalse
\newif\ifpazo\pazofalse
    \DeclareMathAlphabet{\mathcal}{OMS}{cmsy}{m}{n}
    \DeclareMathAlphabet{\mathbfcal}{OMS}{cmsy}{b}{n}
    \title{Vertex Ordering Problems in Directed Graph Streams}
    \author{Amit Chakrabarti \and Prantar Ghosh \and Andrew McGregor \and Sofya Vorotnikova}
    \date{}
\def\clap#1{\hbox to 0pt{\hss#1\hss}}
\newcommand{\xqedhere}[2]{\rlap{\hbox to#1{\hfil\llap{\ensuremath{#2}}}}}
\newcommand{\eps}{\varepsilon}
\newcommand{\indic}{\mathbbm{1}}
\ifsoda \renewcommand{\indic}{\mathds{1}} \fi
\ifpazo \renewcommand{\indic}{\mathbb{1}} \fi
\newcommand{\EE}{\mathbb{E}}
\newcommand{\RR}{\mathbb{R}}
\newcommand{\cA}{\mathcal{A}}
\newcommand{\cD}{\mathcal{D}}
\newcommand{\cE}{\mathcal{E}}
\newcommand{\cO}{\mathcal{O}}
\newcommand{\cQ}{\mathcal{Q}}
\newcommand{\cS}{\mathcal{S}}
\newcommand{\bx}{\mathbf{x}}
\newcommand{\by}{\mathbf{y}}
\newcommand{\bz}{\mathbf{z}}
\newcommand{\tO}{\widetilde{O}}
\DeclareMathOperator{\R}{R}
\newcommand{\best}{\beta}
\newcommand{\prb}{q}
\newcommand{\qhat}{\hat{q}}
\newcommand{\Gstar}{G^*}
\renewcommand{\ge}{\geqslant}
\renewcommand{\le}{\leqslant}
\renewcommand{\geq}{\geqslant}
\renewcommand{\leq}{\leqslant}
\renewcommand{\b}{\{0,1\}}
\newcommand{\etal}{{et al.}\xspace}
\newcommand{\Dyes}{\cD_{\mathrm{yes}}}
\newcommand{\Dno}{\cD_{\mathrm{no}}}
\newcommand{\rk}{rk}
\newcommand{\prob}[1]{\Pr\left [ #1 \right ]}
\newcommand{\ch}[2]{\ensuremath{{#1}_{#2}}}
\newcommand{\ang}[1]{\langle{#1}\rangle}
\newcommand{\ceil}[1]{\lceil{#1}\rceil}
\DeclareMathOperator{\cost}{cost}
\DeclareMathOperator{\poly}{poly}
\DeclareMathOperator{\polylog}{polylog}
\DeclareMathOperator{\err}{err}
\DeclareMathOperator{\Est}{Est}
\DeclareMathOperator{\Tou}{Tou}
\newcommand{\red}{\mathrm{red}}
\newcommand{\din}{\mathrm{d_{in}}}
\newcommand{\dout}{\mathrm{d_{out}}}
\newcommand{\Bin}{\textup{Bin}}
\newcommand{\expec}[1]{{\mathbb E}[#1]\xspace}
\newcommand{\idx}{\textsc{index}\xspace}
\newcommand{\disj}{\textsc{disj}\xspace}
\newcommand{\tpj}{\textsc{tpj}\xspace}
\newcommand{\fas}{\textsc{fas}\xspace}
\newcommand{\fast}{\textsc{fas-t}\xspace}
\newcommand{\fassiz}{\textsc{fas-size}\xspace}
\newcommand{\fassizt}{\textsc{fas-size-t}\xspace}
\newcommand{\topo}{\textsc{topo-sort}\xspace}
\newcommand{\topot}{\textsc{topo-sort-t}\xspace}
\newcommand{\acyc}{\textsc{acyc}\xspace}
\newcommand{\acyct}{\textsc{acyc-t}\xspace}
\newcommand{\sink}{\textsc{sink-find}\xspace}
\newcommand{\sinkt}{\textsc{sink-find-t}\xspace}
\newcommand{\stconn}{\textsc{stconn-dag}\xspace}
\newcommand{\shortpath}{\textsc{shortpath-dag}\xspace}
\newcommand{\sci}{\textsc{sci}\xspace}
\newcommand{\ormpjeq}{\textsc{mpj-meet}\xspace}
\newcommand{\rank}{\textsc{rank-aggr}\xspace}
\newcommand{\PlantDag}{\mathsf{PlantDAG}\xspace}
\newcommand{\pparagraph}[1]{\medskip\noindent{\bfseries #1.}~}
    \crefname{fact}{fact}{facts}
    \crefname{observation}{observation}{observations}
    \crefname{@theorem}{theorem}{theorems}
    \newcommand{\sodaqed}{\qquad\vbox{\hrule height0.6pt\hbox{%
      \vrule height1.3ex width0.6pt\hskip0.8ex
      \vrule width0.6pt}\hrule height0.6pt}}
    \newcommand{\qed}{\sodaqed}
    \newcommand{\qedhere}{\rlap{\hbox to\textwidth{\hfil\llap{\ensuremath{\sodaqed}}}}}
\newtheorem{theorem}{Theorem}[section]
\newtheorem{lemma}[theorem]{Lemma}
\newtheorem{corollary}[theorem]{Corollary}
\newtheorem{observation}[theorem]{Observation}
\newtheorem{claim}[theorem]{Claim}
\newtheorem{fact}[theorem]{Fact}
\newtheorem{proposition}[theorem]{Proposition}
\theoremstyle{definition}  \newtheorem{definition}[theorem]{Definition}
\begin{document}

    \maketitle
    \thispagestyle{empty}
%
  We consider {\em directed} graph algorithms in a streaming setting, focusing on problems concerning orderings of the vertices. This includes such fundamental problems as topological sorting and acyclicity testing. We also study the related problems of finding a minimum feedback arc set (edges whose removal yields an acyclic graph), and finding a sink vertex. We are interested in both adversarially-ordered and randomly-ordered streams.
  For arbitrary input graphs with edges ordered adversarially, we show that most of these problems have high space complexity, precluding sublinear-space solutions. Some lower bounds also apply when the stream is randomly ordered: e.g., in our most technical result we show that testing acyclicity in the $p$-pass random-order model requires roughly $n^{1+1/p}$ space. For other problems, random ordering can make a dramatic difference: e.g., it is possible to find a sink in an acyclic tournament in the one-pass random-order model using polylog$(n)$ space whereas under adversarial ordering roughly $n^{1/p}$ space is necessary and sufficient given $\Theta(p)$ passes. We also design sublinear algorithms for the feedback arc set problem in tournament graphs; for random graphs; and for randomly ordered streams. In some cases, we give lower bounds establishing that our algorithms are essentially space-optimal. Together, our results complement the much maturer body of work on algorithms for {\em undirected} graph streams. 

    \newpage
    \setcounter{page}{1}
\fi


\section{Introduction} \label{sec:intro}

While there has been a large body of work on undirected graphs in the data stream model \cite{McGregor14}, the complexity of processing directed graphs (digraphs) in this model is relatively unexplored. The handful of exceptions include multipass algorithms emulating random walks in directed graphs~\cite{SarmaGP11,Jin19}, establishing prohibitive space lower bounds on finding sinks~\cite{HRR98} and answering reachability queries~\cite{FKMSZ05b}, and ruling out semi-streaming constant-pass algorithms for directed reachability~\cite{GuruswamiOnak16}. This is rather unfortunate given that many of the massive graphs often mentioned in the context of motivating work on graph streaming are directed, e.g., hyperlinks, citations, and Twitter ``follows'' all correspond to directed edges.

In this paper we consider the complexity of a variety of fundamental problems related to vertex ordering in directed graphs. For example, one basic problem that motivated\footnote{The problem was explicitly raised in an open problems session at the Shonan Workshop ``Processing Big Data Streams" (June 5-8, 2017) and generated considerable discussion.} much of this work is as follows: given a stream consisting of edges of an acyclic graph in an arbitrary order, how much memory is required to return a topological ordering of the graph? In the offline setting, this can be computed in $O(m+n)$ time using Kahn's algorithm \cite{Kahn:1962} or via depth-first trees \cite{Tarjan1976} but nothing was known in the data stream setting.

We also consider the related minimum feedback arc set problem, i.e., estimating the minimum number of edges (arcs) that need to be removed such that the resulting graph is acyclic. This problem is NP-hard and the best known approximation factor is $O(\log n \log\log n)$ for arbitrary graphs~\cite{Even1998}, although a PTAS is known in the case of tournaments~\cite{Kenyon-MathieuS06}. Again, nothing was known in the data stream model. In contrast, the analogous problem for undirected graphs is well understood in the data stream model. The number of edges required to make an undirected graph acyclic is $m-n+c$ where $c$ is the number of connected components. The number of connected components can be computed in $O(n\log n)$ space by constructing a spanning forest~\cite{FKMSZ05b,ahngm12}.



\pparagraph{Previous Work}
Some versions of the problems we study in this work have been considered previously in the query complexity model. For example, Huang et al.~\cite{HuangKK11} consider the ``generalized sorting problem" where $G$ is an acyclic graph with a unique topological order. The algorithm is presented with an undirected version of this graph and may query any edge to reveal its direction. The goal is to learn the topological ordering with the minimum number of queries. Huang et al.~\cite{HuangKK11} and Angelov et al.~\cite{AngelovKM08} also studied the average case complexity of various problems where the input graph is chosen from some known distribution. Ailon~\cite{NIPS2011_4428} studied the equivalent problem for feedback arc set in tournaments. Note that all these query complexity results are adaptive and do not immediately give rise to small-space data stream algorithms.

Perhaps the relative lack of progress on streaming algorithms for directed graph problems stems from their being considered ``implicitly hard'' in the literature, a point made in the recent work of Khan and Mehta~\cite{KhanM19}. Indeed, that work and the also-recent work of Elkin~\cite{Elkin17} provide the first nontrivial streaming algorithms for computing a depth-first search tree and a shortest-paths tree (respectively) in semi-streaming space, using $O(n/\polylog n)$ passes. Notably, fairly non-trivial work was needed to barely beat the trivial bound of $O(n)$ passes.

Some of our work here applies and extends the work of Guruswami and Onak~\cite{GuruswamiOnak16}, who gave the first super-linear (in $n$) space lower bounds in the streaming model for decision problems on graphs. In particular, they showed that solving reachability in $n$-vertex digraphs using $p$ passes requires $n^{1+\Omega(1/p)}/p^{O(1)}$ space. Via simple reductions, they then showed similar lower bounds for deciding whether a given (undirected) graph has a short $s$--$t$ path or a perfect matching.

\subsection{Results}\hfill
\begin{table*}[!hbt]
\renewcommand{\arraystretch}{1.2}
\begin{minipage}{\textwidth} 
\centering
\begin{tabular}{c c c c c}
\toprule
{\bf Problem}
& {\bf Passes}  
& {\bf Input Order} 
& {\bf Space Bound} 
& {\bf Notes}\\
\midrule
\acyc 
& $1$
&
& $\Theta(n^2)$
& \\
\acyc 
& $p$ 
& 
& $n^{1+\Omega(1/p)}/p^{O(1)}$
&\\
mult.~approx.~\fassiz 
& $1$
&
& $\Theta(n^2)$
& \\
mult.~approx.~\fassiz 
& $p$
&
& $n^{1+\Omega(1/p)}/p^{O(1)}$
& \\
\topo 
& $1$
&
& $\Theta(n^2)$
& \\
\topo 
& $p$ 
& 
& $n^{1+\Omega(1/p)}/(p+1)^{O(1)}$
&\\
mult.~approx.~\fas 
& $1$
&
& $\Theta(n^2)$
& \\
mult.~approx.~\fas 
& $p$
&
& $n^{1+\Omega(1/p)}/(p+1)^{O(1)}$
& \\
\stconn 
& $p$
& random
& $n^{1+\Omega(1/p)}/p^{O(1)}$
& error probability~$1/p^{\Omega(p)}$ \\
\acyc
& $p$
& random
& $n^{1+\Omega(1/p)}/p^{O(1)}$
& error probability $1/p^{\Omega(p)}$ \\
mult.~approx.~\fassiz
& $p$
& random
& $n^{1+\Omega(1/p)}/p^{O(1)}$
& error probability $1/p^{\Omega(p)}$ \\
\topo
& $p$
& random
& $n^{1+\Omega(1/p)}/(p+1)^{O(1)}$
& error probability $1/p^{\Omega(p)}$ \\
mult.~approx.~\fas
& $p$
& random
& $n^{1+\Omega(1/p)}/(p+1)^{O(1)}$
& error probability $1/p^{\Omega(p)}$ \\
$(1+\eps)$-approx.~\fast
& $1$
& 
& $\tO(\eps^{-2} n)$
& exp.~time post-processing \\
$3$-approx.~\fast
& $p$
& 
& $\tO(n^{1+1/p})$
& \\
\acyct
& $1$
& 
& $\tO(n)$
& \\
\acyct
& $p$
& 
& $\Omega(n/p)$
& \\
\sinkt
& $2p-1$
& 
& $\tO(n^{1/p})$
& \\
\sinkt
& $p$
& 
& $\Omega(n^{1/p}/p^2)$
& \\
\sinkt
& $1$
& random
& $\tO(1)$
& \\
\topo
& $1$
& random
& $\tO(n^{3/2})$
& random DAG + planted path \\
\topo
& $O(\log n)$
& 
& $\tO(n^{4/3})$
& random DAG + planted path \\
$(1+\eps)$-apx.~\rank
& $1$
& 
& $\tO(\eps^{-2} n)$
& exp.~time post-processing \\
\bottomrule
\end{tabular}

\end{minipage}
\caption{%
Summary of our algorithmic and space lower bound results. These problems are defined in \Cref{sec:prelim}. The input stream is adversarially ordered unless marked as ``random'' above. Besides the above results, we also give an oracle (query complexity) lower bound in \Cref{sec:fast-oracle-lbs}.%
}
\label{table:results}
\end{table*}

\pparagraph{Arbitrary Graphs} To set the stage, in \Cref{sec:gen-digraphs} we present a number of negative results for the case when the input digraph can be arbitrary. In particular, we show that there is no one-pass sublinear-space algorithm for such fundamental digraph problems as testing whether an input digraph is acyclic, topologically sorting it if it is, or finding its feedback arc set if it is not. These results set the stage for our later focus on specific families of graphs, where we can do much more, algorithmically.

For our lower bounds, we consider both arbitrary and random stream orderings. In \Cref{sec:gen-digraphs:arb}, we concentrate on the arbitrary ordering and show that checking whether the graph is acyclic, finding a topological ordering of a directed acyclic graph (DAG), or any multiplicative approximation of feedback arc set requires $\Omega(n^2)$ space in one pass. The lower bound extends to $n^{1+\Omega(1/p)}/p^{O(1)}$ when the number of passes is $p\geq 1$. In \Cref{sec:gen-digraphs:rand}, we show that essentially the same bound holds even when the stream is {\em randomly} ordered. This strengthening is one of our more technically involved results and it is based on generalizing a fundamental result by Guruswami and Onak \cite{GuruswamiOnak16} on $s$--$t$ connectivity in the multi-pass data stream model.

As a by-product of our generalization, we also obtain the first random-order super-linear (in $n$) lower bounds for the {\em undirected} graph problems of deciding (i)~whether there exists a short $s$--$t$ path (ii)~whether there exists a perfect matching.

\pparagraph{Tournaments} A {\em tournament} is a digraph that has exactly one directed edge between each pair of distinct vertices. If we assume that the input graph is a tournament, it is trivial to find a topological ordering, given that one exists, by considering the in-degrees of the vertices. Furthermore, it is known that ordering the vertices by in-degree yields a 5-approximation to feedback arc set~\cite{CoppersmithFR10}.

In \Cref{sec:tournaments}, we present an algorithm which computes a $(1+\eps)$-approximation to feedback arc set in one pass using $\tO(\eps^{-2} n)$ space\footnote{Throughout the paper, $\tO(f(n))=O(f(n) \polylog n )$.}. However, in the post-processing step, it estimates the number of back edges for every permutation of vertices in the graph, thus resulting in exponential post-processing time. Despite its ``brute force'' feel, our algorithm is essentially optimal, both in its space usage (unconditionally) and its post-processing time (in a sense we shall make precise later). We address these issues in \Cref{sec:fast-oracle-lbs}. On the other hand, in \Cref{sec:fast-multi-pass}, we show that with $O(\log n)$ additional passes it is possible to compute a 3-approximation to feedback arc set while using only polynomial time and $\tO(n)$ space.

Lastly, in \Cref{sec:sink}, we consider the problem of finding a sink in a tournament which is guaranteed to be acyclic. Obviously, this problem can be solved in a single pass using $O(n)$ space by maintaining an ``is-sink'' flag for each vertex. Our results show that for arbitrary order streams this is tight. We prove that finding a sink in $p$ passes requires $\Omega(n^{1/p}/p^2)$ space. We also provide an $O(n^{1/p} \log (3p))$-space sink-finding algorithm that uses 
$O(p)$ passes, for any $1\leq p \leq \log n$. In contrast, we show that if the stream is randomly ordered, then using $\polylog n$ space and a single pass is sufficient. This is a significant separation between the arbitrary-order and random-order data stream models.


\pparagraph{Random Graphs} In \Cref{sec:randgraphs}, we consider a natural family of random acyclic graphs (see \Cref{def:plantdag} below) and present two algorithms for finding a topological ordering of vertices. We show that, for this family, $\widetilde{O}(n^{4/3})$ space is sufficient to find the best ordering given $O(\log n)$ passes. Alternatively, $\widetilde{O}(n^{3/2})$ space is sufficient given only a single pass, on the assumption that the edges in the stream are randomly ordered.

\pparagraph{Rank Aggregation} In \Cref{sec:rank-aggr}, we consider the problem of rank aggregation (formally defined in the next section), which is closely related to the feedback arc set problem. We present a one-pass, $\widetilde{O}(\eps^{-2} n)$ space algorithm that returns $(1+\eps)$-approximation to the rank aggregation problem. The algorithm is very similar to our $(1+\eps)$-approximation of feedback arc set in tournaments and has the same drawback of using exponential post-processing time.

\bigskip

A summary of these results is given in \Cref{table:results}.

\subsection{Models and Preliminaries}\hfill
\label{sec:prelim}

\pparagraph{Vertex Ordering Problems in Digraphs} An {\em ordering} of an $n$-vertex digraph $G = (V,E)$ is a list consisting of its vertices. We shall view each ordering $\sigma$ as a function $\sigma \colon V \to [n]$, with $\sigma(v)$ being the position of $v$ in the list. To each ordering $\sigma$, there corresponds a set of {\em back edges} $B_G(\sigma) = \{(v,u) \in E:\, \sigma(u) < \sigma(v)\}$. We say that $\sigma$ is a {\em topological ordering} if $B_G(\sigma) = \varnothing$; such $\sigma$ exists iff $G$ is acyclic. We define $\best_G = \min\{|B_G(\sigma)|:\, \sigma$ is an ordering of $G\}$, i.e., the size of a minimum feedback arc set for $G$.

We now define the many interrelated digraph problems studied in this work. In each of these problems, the input is a digraph $G$, presented as a stream of its edges. The ordering of the edges is adversarial unless specified otherwise. 

\begin{description}[font=\normalfont,topsep=2pt,itemsep=0pt,labelindent=\parindent]
  \item[\acyc:] Decide whether or not $G$ is acyclic.
  \item[\topo:] Under the promise that $G$ is acyclic, output a topological ordering of its vertices.
  \item[\stconn:] Under the promise that $G$ is acyclic, decide whether it has an $s$-to-$t$ path, 
  these being two prespecified vertices. 
  \item[\sink:] Under the promise that $G$ is acyclic, output a sink of $G$.
  \item[\fassiz ($\alpha$-approximation):] Output an integer $\hat\best \in [\best_G, \alpha \best_G]$.
  \item[\fas ($\alpha$-approximation):] Output an ordering $\sigma$ such that $|B_G(\sigma)| \le \alpha \best_G$.
  \item[\fast:] Solve \fas under the promise that $G$ is a tournament. In a similar vein, we define the promise problems \acyct, \topot, \sinkt, \fassizt.
\end{description}
For randomized solutions to these problems we shall require that the error probability be at most $1/3$.

We remark that the most common definition of the minimum feedback arc set problem in the literature on optimization is to identify a small set of edges whose removal makes the graph acyclic, so \fassiz is closer in spirit to this problem than \fas. As we shall see, our algorithms will apply to both variants of the problem. On the other hand, lower bounds sometimes require different proofs for the two variants. Since $\best_G = 0$ iff $G$ is acyclic, we have the following basic observation.

\begin{observation} \label{obs:fas}
  Producing a multiplicative approximation for any of \fas, \fast, \fassiz, and \fassizt entails solving (respectively) \topo, \topot, \acyc, and \acyct.
\end{observation}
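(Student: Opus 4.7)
The plan is to read off the observation directly from the definitions, using the single fact already flagged in the paragraph preceding it: $\best_G = 0$ iff $G$ is acyclic. I would handle the two ``flavors'' of feedback-arc-set output (a number versus an ordering) separately.

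For the size versions, suppose an algorithm $\cA$ produces an $\alpha$-approximation to \fassiz, i.e.\ a value $\hat\best \in [\best_G, \alpha \best_G]$. Then $\hat\best = 0$ precisely when $\best_G = 0$, which happens precisely when $G$ is acyclic. So running $\cA$ and returning ``yes'' iff $\hat\best = 0$ solves \acyc using the same number of passes and the same space (up to $O(\log n)$ for the comparison). The identical argument with $\cA$ restricted to tournament inputs turns a \fassizt approximation into an \acyct algorithm.

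For the ordering versions, suppose $\cA$ produces an $\alpha$-approximation to \fas, i.e.\ an ordering $\sigma$ with $|B_G(\sigma)| \leq \alpha \best_G$. Under the \topo promise that $G$ is acyclic we have $\best_G = 0$, so $|B_G(\sigma)| = 0$ and $\sigma$ is a valid topological ordering — exactly what \topo must output. Again the argument carries over verbatim to the tournament case, giving \topot from \fast.

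There is no real obstacle here: each implication is a one-line unwinding of the definitions, and no loss in space, passes, or error probability is incurred. The only thing worth stating explicitly, for later use in the lower-bound sections, is that the reduction is parameter-free in $\alpha$, so any multiplicative guarantee — however weak — suffices.
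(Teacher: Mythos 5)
Your proof is correct and is exactly the intended argument: the paper gives no separate proof, relying only on the remark that $\best_G = 0$ iff $G$ is acyclic, which is precisely the fact you unwind for each of the four cases. Nothing to add.
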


For an ordering $\pi$ of a vertex set $V$, define $E^\pi = \{(u,v) \in V^2:\, \pi(u) < \pi(v)\}$. Define $\Tou(\pi) = (V,E^\pi)$ to be the unique acyclic tournament on $V$ consistent with $\pi$.

As mentioned above, we will also consider vertex ordering problems on random graphs from a natural distribution. This distribution, which we shall call a ``planted path distribution,'' was considered by Huang et al.~\cite{HuangKK11} for average case analysis in their work on generalized sorting.

\begin{definition}[Planted Path Distribution] \label{def:plantdag} \label{def:randdag}
Let $\PlantDag_{n,\prb}$ be the distribution on digraphs on $[n]$ defined as follows. Pick a permutation $\pi$ of $[n]$ uniformly at random. Retain each edge $(u,v)$ in $\Tou(\pi)$ with probability $1$ if $\pi(v) = \pi(u)+1$, and with probability $\prb$, independently, otherwise.
\end{definition}

\pparagraph{Rank Aggregation} The feedback arc set problem in tournaments is closely related to the problem of \emph{rank aggregation} (\rank). Given $k$ total orderings $\sigma_1, \ldots, \sigma_k$ of $n$ objects we want to find an ordering that best describes the ``preferences'' expressed in the input. Formally, we want to find an ordering that minimizes 
$\cost(\pi) := \sum_{i=1}^k d(\pi, \sigma_i)$,
where the distance $d(\pi,\sigma)$ between two orderings is the number of pairs of objects ranked differently by them. That is,
\[
  d(\pi, \sigma) := \sum_{a,b\in [n]} \indic\{\pi(a) < \pi(b),\, \sigma(b) < \sigma(a)\} \,,
\]
where the notation $\indic\{\phi\}$ denotes a $0/1$-valued indicator for the condition $\phi$.

In the streaming model, the input to \rank can be given either as a concatenation of $k$ orderings, leading to a stream of length $kn$, or as a sequence of triples $(a,b,i)$ conveying that $\sigma_i(a) < \sigma_i(b)$, leading to a stream of length $k \binom{n}{2}$. Since we want the length of the stream to be polynomial in $n$, we assume $k = n^{O(1)}$.

\pparagraph{Lower Bounds through Communication Complexity} Space lower bounds for data streaming algorithms are most often proven via reductions from standard problems in communication complexity. We recall two such problems, each involving two players, Alice and Bob. In the $\idx_N$ problem, Alice holds a vector $\bx \in \b^N$ and Bob holds an index $k \in [N]$: the goal is for Alice to send Bob a message allowing him to output $x_k$. In the $\disj_N$ problem, Alice holds $\bx \in \b^N$ and Bob holds $\by \in \b^N$: the goal is for them to communicate interactively, following which they must decide whether $\bx$ and $\by$ are disjoint, when considered as subsets of $[N]$, i.e., they must output $\neg \bigvee_{i=1}^N x_i \wedge y_i$. In the special case $\disj_{N,s}$, it is promised that the cardinalities $|\bx| = |\by| = s$. In each case, the communication protocol may be randomized, erring with probability at most $\delta$. We shall use the following well-known lower bounds.

\begin{fact}[See, e.g., \cite{Ablayev96,Razborov92}] \label{fact:comm-lbs}
  For error probability $\delta = \frac13$, the one-way randomized complexity $\R_{1/3}^\to(\idx_N) = \Omega(N)$ and the general randomized complexity $\R_{1/3}(\disj_{N,N/3}) = \Omega(N)$.
\end{fact}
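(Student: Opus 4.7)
The plan is to establish both bounds via standard communication-complexity techniques: \idx\ by a direct information-theoretic calculation, and \disj\ by invoking the classical Kalyanasundaram--Schnitger/Razborov lower bound and adapting it to the size-constrained version.

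For $\R_{1/3}^\to(\idx_N) = \Omega(N)$, I would fix the hard distribution where $\bx$ is uniform on $\{0,1\}^N$ and the index $k$ is uniform and independent on $[N]$. Let $M = M(\bx)$ be Alice's one-way message of $c$ bits. Since Bob, on input $(M, k)$, outputs $x_k$ correctly with probability $\geq 2/3$, a Fano's-inequality-plus-concavity argument yields $\EE_k H(x_k \mid M) \leq H(1/3)$, and hence $\sum_{k=1}^N I(x_k; M) \geq N(1 - H(1/3))$. Because the coordinates of $\bx$ are mutually independent, $I(x_k; M \mid x_{<k}) = H(x_k) - H(x_k \mid M, x_{<k}) \geq I(x_k; M)$, so the chain rule gives $c \geq H(M) \geq I(\bx; M) = \sum_k I(x_k; M \mid x_{<k}) \geq N(1 - H(1/3)) = \Omega(N)$. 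Yao's minimax lifts this distributional bound to randomized protocols.

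For $\R_{1/3}(\disj_{N, N/3}) = \Omega(N)$, I would appeal to the classical $\Omega(N)$ lower bound for unrestricted $\disj_N$, whose hard input distribution already concentrates on inputs where $|\bx|$ and $|\by|$ are each $\Theta(N)$ (Razborov's construction draws each pair $(x_i, y_i)$ almost uniformly from $\{(0,0),(0,1),(1,0)\}$, with a planted $(1,1)$ on yes-instances). To force both sizes to exactly $N/3$, one embeds the hard core of size $m = \Theta(N)$ into a universe of size $N$ by padding with coordinates fixed to values known in advance to both players; these adjust $|\bx|$ and $|\by|$ to the target values without requiring any additional communication, so the $\Omega(m) = \Omega(N)$ bound carries through.

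The main obstacle I expect is the \disj\ half: \idx\ collapses to a few lines of entropy manipulation, whereas matching the exact-cardinality promise $|\bx|=|\by|=N/3$ requires a careful interface with the structure of Razborov's hard distribution. Since the statement already cites standard references, a clean exposition can black-box the Kalyanasundaram--Schnitger/Razborov bound and close with the light padding reduction sketched above.
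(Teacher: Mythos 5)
The paper states this as a known fact cited from \cite{Ablayev96,Razborov92} and does not prove it, so there is no in-paper argument to compare against; your proposal is a correct rendition of the standard proofs from those references. The information-theoretic argument for $\idx_N$ (Fano plus the chain rule for mutual information, using the independence of the coordinates of $\bx$) is precisely Ablayev's argument and is sound as written. For $\disj_{N,N/3}$, invoking Razborov's lower bound and then padding to meet the exact-cardinality promise is the standard lift; to make the padding concrete, start from the hard core $\disj_{m,m/4}$ with $m = 2N/3$ and pad Alice with $N/6$ coordinates fixed to $(1,0)$ and Bob with $N/6$ coordinates fixed to $(0,1)$, giving $|\bx| = |\by| = m/4 + N/6 = N/3$ over a universe of size $m + N/3 = N$ without affecting disjointness, which recovers $\Omega(m) = \Omega(N)$.
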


\pparagraph{Other Notation and Terminology} We call an edge \emph{critical} if it lies on a directed Hamiltonian path of length $n-1$ in a directed acyclic graph. We say an event holds \emph{with high probability} (w.h.p.) if the probability is at least $1-1/\poly(n)$. Given a graph with a unique total ordering, we say a vertex $u$ has rank $r$ if it occurs in the $r$th position in this total ordering. 


\section{General Digraphs and the Hardness of some Basic Problems} \label{sec:gen-digraphs}

In this section, our focus is bad news. In particular, we show that there is no one-pass sublinear-space algorithm for the rather basic problem of testing whether an input digraph is acyclic, nor for topologically sorting it if it is. These results set the stage for our later focus on {\em tournament} graphs, where we can do much more, algorithmically.

\subsection{Arbitrary Order Lower Bounds}
\label{sec:gen-digraphs:arb}
\ifsoda \mbox{}\smallskip 

\noindent
\fi
To begin, note that the complexity of \topot is easily understood: maintaining in-degrees of all vertices and then sorting by in-degree provides a one-pass $O(n\log n)$-space solution. However, the problem becomes maximally hard without the promise of a tournament.

\begin{theorem} \label{thm:topo-sort-lb}
  Solving \topo in one pass requires $\Omega(n^2)$ space.
\end{theorem}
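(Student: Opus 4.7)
The plan is to reduce from $\idx_N$ with $N = \lfloor n/2\rfloor^2 = \Theta(n^2)$, so that \Cref{fact:comm-lbs} yields the claimed bound. Alice holds a bit vector $\bx \in \b^N$ indexed by ordered pairs $(i,j)\in [n/2]^2$, and Bob holds a target pair $(i^*, j^*)$; Bob must output $x_{i^*,j^*}$.

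Alice encodes $\bx$ as an acyclic bipartite digraph on $U\cup V$ with $U = \{u_1,\ldots,u_{n/2}\}$ and $V = \{v_1,\ldots,v_{n/2}\}$, streaming the edge $u_i \to v_j$ whenever $x_{ij}=1$. Since every edge crosses from $U$ to $V$, this graph is acyclic for every $\bx$, so the \topo promise is preserved. Alice then sends Bob the $s$-bit internal state of the \topo algorithm. Bob streams an $O(n)$-edge \emph{focusing gadget} whose role is to (i)~keep the combined graph acyclic for every $\bx$, and (ii)~pin down every comparison in any valid topological order of the combined graph except for the one comparison between $u_{i^*}$ and $v_{j^*}$, which is forced to match $x_{i^*,j^*}$. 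Given such a gadget, Bob simply reads the bit off the algorithm's output and a one-pass algorithm of space $s$ yields an $s$-bit one-way \idx protocol, so $s = \Omega(N) = \Omega(n^2)$.

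The concrete gadget I would first try is a directed Hamilton-path backbone through the vertices of $(U\setminus\{u_{i^*}\}) \cup (V\setminus\{v_{j^*}\})$ (with all $u$-vertices appearing before all $v$-vertices, so as not to conflict with any Alice edge), together with a few local edges that pin $u_{i^*}$ into its natural slot in the $U$-block and $v_{j^*}$ into its natural slot in the $V$-block, forcing them to occupy adjacent positions in any valid order.

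The main obstacle is that this naive gadget yields only a one-sided reveal: when $x_{i^*,j^*}=1$, Alice's crossing edge $u_{i^*}\to v_{j^*}$ forces $\pi(u_{i^*}) < \pi(v_{j^*})$; but when $x_{i^*,j^*}=0$ the algorithm is free to choose either relative order. The obvious fix of streaming a reverse edge $v_{j^*}\to u_{i^*}$ would create a cycle in the $x_{i^*,j^*}=1$ case, breaking the acyclicity promise. My plan for overcoming this is to reroute Bob's reverse constraints indirectly through the backbone --- by adding auxiliary edges inside $U\setminus\{u_{i^*}\}$ and $V\setminus\{v_{j^*}\}$ that give $v_{j^*}$ strictly more \emph{required} predecessors than $u_{i^*}$, so that when Alice's crossing edge is absent ($x_{i^*,j^*}=0$) the only valid topological ordering places $v_{j^*}$ before $u_{i^*}$, whereas when it is present no cycle is introduced. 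Verifying that such a routing is always realizable independently of $\bx$ is where I expect most of the technical work to live.
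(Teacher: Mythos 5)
Your high-level plan (reduce from $\idx$ on a matrix of bits, encode Alice's matrix as a bipartite digraph, have Bob ``activate'' the relevant comparison) is the right instinct, and you have correctly identified the central obstacle: a single bipartite encoding only gives a one-sided reveal. However, the proposed fix does not close the gap, and in fact no fix of the kind you describe can work with this encoding.

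The flaw in the proposed repair is the phrase ``strictly more required predecessors.'' A topological order of a DAG is an arbitrary linear extension of the reachability partial order; if $v_{j^*}$ and $u_{i^*}$ are incomparable (neither reaches the other), then there exist valid topological orders with either one first, no matter how many ancestors each has. So to force $v_{j^*}$ before $u_{i^*}$ when $x_{i^*,j^*}=0$, you need a directed path $v_{j^*}\leadsto u_{i^*}$ in the combined graph. But such a path cannot coexist with the acyclicity promise. Its first edge leaves $v_{j^*}\in V$, and its last edge enters $u_{i^*}\in U$; since all of Alice's edges go from $U$ to $V$, both of those edges must be Bob's, and Bob's edges are fixed independently of $\bx$. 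If the path uses no Alice edges, it exists for every $\bx$, so adding Alice's edge $u_{i^*}\to v_{j^*}$ when $x_{i^*,j^*}=1$ creates a cycle. If it does use some Alice edges, flip $x_{i^*,j^*}$ to $1$ while keeping those Alice edges present: the path still exists and again forms a cycle with $u_{i^*}\to v_{j^*}$. Either way the promise is violated. So with a single bipartite layer, a two-sided reveal is impossible.

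The paper sidesteps this with a \emph{doubling} trick. Alice encodes her matrix in two disjoint bipartite graphs: the ``$0$-layer'' on $L^0\cup R^0$ gets edge $(\ell^0_i,r^0_j)$ iff $x_{ij}=0$, and the ``$1$-layer'' on $L^1\cup R^1$ gets edge $(\ell^1_i,r^1_j)$ iff $x_{ij}=1$. Bob adds only the two edges $(r^0_z,\ell^1_y)$ and $(r^1_z,\ell^0_y)$. Exactly one of $(\ell^0_y,r^0_z)$, $(\ell^1_y,r^1_z)$ is present, so exactly one of the two length-$2$ paths $\ell^0_y\to r^0_z\to\ell^1_y$ or $\ell^1_y\to r^1_z\to\ell^0_y$ is created, forcing the comparison $\sigma(\ell^0_y)\lessgtr\sigma(\ell^1_y)$ in the direction that reveals $x_{yz}$, while never closing a cycle. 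The decoded bit lives in the relative order of two vertices neither of which carries a crossing Alice edge, which is what lets the paper get a forced comparison in \emph{both} cases without risking acyclicity. You would need some analogue of this two-copy construction to make your reduction go through.
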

\begin{proof}
  We reduce from $\idx_N$, where $N = p^2$ for a positive integer $p$. Using a canonical bijection from $[p]^2$ to $[N]$, we rewrite Alice's input vector as a matrix $\bx = (\bx_{ij})_{i,j\in [p]}$ and Bob's input index as $(y,z) \in [p]^2$. Our reduction creates a graph $G = (V,E)$ on $n = 4p$ vertices: the vertex set $V = L^0 \uplus R^0 \uplus L^1 \uplus R^1$, where each $|L^b| = |R^b| = p$. These vertices are labeled, with $\ell^0_i$ being the $i$th vertex in $L^0$ (and similarly for $r^0_i, \ell^1_i, r^1_i$).
  
  Based on their inputs, Alice and Bob create streams of edges by listing the following sets:
  \ifsoda
  \begin{align*}
    E_\bx &= \{(\ell^b_i, r^b_j) :\, b \in \b,\, i, j \in [p],\, x_{ij} = b\} \,, \\
    E_{yz} &= \{(r^0_z, \ell^1_y), \, (r^1_z, \ell^0_y)\} \,.
  \end{align*}
  \else
  \[
    E_\bx = \{(\ell^b_i, r^b_j) :\, b \in \b,\, i, j \in [p],\, x_{ij} = b\} \,, \quad
    E_{yz} = \{(r^0_z, \ell^1_y), \, (r^1_z, \ell^0_y)\} \,.
  \]
  \fi
  The combined stream defines the graph $G$, where $E = E_\bx \cup E_{yz}$.
  
  We claim that $G$ is acyclic. In the digraph $(V, E_\bx)$, every vertex is either a source or a sink. So the only vertices that could lie on a cycle in $G$ are $\ell^0_y, r^0_z, \ell^1_y$, and $r^1_z$. Either $(\ell^0_y, r^0_z) \notin E$ or $(\ell^1_y, r^1_z) \notin E$, so there is in fact no cycle even among these four vertices.
  
  Let $\sigma$ be a topological ordering of $G$. If $x_{yz} = 0$, then we must, in particular, have $\sigma(\ell^0_y) < \sigma(\ell^1_y)$, else we must have $\sigma(\ell^1_y) < \sigma(\ell^0_y)$. Thus, by simulating a one-pass algorithm $\cA$ on Alice's stream followed by Bob's stream, consulting the ordering $\sigma$ produced by $\cA$ and outputting $0$ iff $\sigma(\ell^0_y) < \sigma(\ell^1_y)$, the players can solve $\idx_N$. It follows that the space used by $\cA$ must be at least $\R^\to_{1/3}(\idx_N) = \Omega(N) = \Omega(p^2) = \Omega(n^2)$.
\end{proof}

For our next two results, we use reductions from \stconn. It is a simple exercise to show that a one-pass streaming algorithm for \stconn requires $\Omega(n^2)$ space. Guruswami and Onak~\cite{GuruswamiOnak16} showed that a $p$-pass algorithm requires ${n^{1+\Omega(1/p)}}/{p^{O(1)}}$ space.\footnote{Although their paper states the lower bound for $s$-$t$ connectivity in general digraphs, their proof in fact shows the stronger result that the bound holds even when restricted to DAGs.} 

\begin{proposition} \label{thm:acyc-lb}
  Solving \acyc requires $\Omega(n^2)$ space in one pass and ${n^{1+\Omega(1/p)}}/{p^{O(1)}}$ space in $p$ passes.
\end{proposition}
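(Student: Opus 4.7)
The plan is to reduce \stconn (on DAGs) to \acyc via a simple edge-addition gadget, so that the known space lower bounds for \stconn transfer directly. As noted in the excerpt, one-pass \stconn on DAGs requires $\Omega(n^2)$ space, and a $p$-pass algorithm requires $n^{1+\Omega(1/p)}/p^{O(1)}$ space by the Guruswami--Onak result (which, as the footnote indicates, holds even restricted to DAGs). Both bounds will carry over unchanged.

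The reduction works as follows. Given an instance of \stconn consisting of a DAG $G$ with designated vertices $s$ and $t$, define $G'$ to be the digraph obtained from $G$ by adding the single edge $(t,s)$. The key observation is:
\begin{itemize}[topsep=2pt,itemsep=0pt]
\item If $G$ contains an $s$-to-$t$ path $P$, then $P$ together with the new edge $(t,s)$ forms a directed cycle in $G'$, so $G'$ is not acyclic.
\item If $G$ contains no $s$-to-$t$ path, then any cycle in $G'$ must use the new edge $(t,s)$ (since $G$ itself is acyclic by the \stconn promise), but then removing $(t,s)$ from that cycle yields an $s$-to-$t$ path in $G$, a contradiction. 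Hence $G'$ is acyclic.
\end{itemize}
Thus $G'$ is acyclic iff $G$ has no $s$-to-$t$ path, so any streaming algorithm for \acyc solves \stconn. To simulate the reduction in the stream model with $p$ passes and space $S$, the \stconn algorithm prepends (or appends) the single edge $(t,s)$ to the edge stream of $G$ at the start of each of its $p$ passes, then feeds the remaining edges to the \acyc solver unchanged. The total space used is $S + O(\log n)$, and the vertex count is unchanged.

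Since the reduction is length- and vertex-preserving up to additive constants, the claimed bounds for \acyc follow immediately: $\Omega(n^2)$ for one pass and $n^{1+\Omega(1/p)}/p^{O(1)}$ for $p$ passes. There is no real obstacle here; the only subtlety to verify is that the Guruswami--Onak lower bound indeed applies to \stconn restricted to DAGs (which is precisely what the footnote points out), so that our reduction, which produces a DAG input to \stconn, lies in the regime covered by their theorem.
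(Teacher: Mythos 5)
Your proposal is correct and matches the paper's proof exactly: both reduce \stconn on DAGs to \acyc by adding the single edge $(t,s)$ and invoking the known one-pass and multi-pass lower bounds for \stconn (with the DAG restriction noted in the footnote). You simply spell out the verification of the iff and the streaming simulation in more detail than the paper does.
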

\begin{proof}
  Given a DAG $G$ and specific vertices $s,t$, let $G'$ be obtained by adding edge $(t,s)$ to $G$. Then $G'$ is acyclic iff $G$ has no $s$-to-$t$ path. By the discussion above, the lower bounds on \acyc follow.
\end{proof}

\begin{corollary}\label{cor:topo-multipass-lb}
Solving \topo in $p$ passes requires $n^{1+\Omega(1/p)}/(p+1)^{O(1)}$ space.   
\end{corollary}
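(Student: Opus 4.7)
The plan is to reduce \acyc to \topo at the cost of one extra pass, and then invoke the multi-pass \acyc lower bound from \Cref{thm:acyc-lb}. Concretely, I will show that any $p$-pass, space-$S$ streaming algorithm for \topo yields a $(p+1)$-pass, $O(S+n\log n)$-space algorithm for \acyc, which by \Cref{thm:acyc-lb} forces $S$ to be $n^{1+\Omega(1/(p+1))}/(p+1)^{O(1)} = n^{1+\Omega(1/p)}/(p+1)^{O(1)}$, provided we are in the regime where this quantity dominates the $O(n\log n)$ additive overhead.

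\textbf{The reduction.} Suppose $\cA$ is a $p$-pass, $S$-space algorithm solving \topo. On an arbitrary input digraph $G$ (not necessarily acyclic), simulate $\cA$ for its $p$ passes; since $\cA$ is a streaming algorithm with a fixed space budget, it still halts and produces some ordering $\sigma\colon V\to[n]$ even when the acyclicity promise fails. Store $\sigma$, which takes $O(n\log n)$ bits, and make one additional pass in which, using $O(\log n)$ extra working memory, we scan each edge $(u,v)$ and test whether $\sigma(u)<\sigma(v)$. Output ``acyclic'' iff no violation is ever found. Correctness is immediate from the definition of \topo: if $G$ is acyclic then $\sigma$ is a genuine topological ordering by the promise, so every edge satisfies $\sigma(u)<\sigma(v)$; conversely, if an ordering $\sigma$ witnesses that every edge of $G$ goes ``forward,'' then $G$ must be acyclic regardless of how $\sigma$ was produced. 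The total space is $S+O(n\log n)$ and the total number of passes is $p+1$.

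\textbf{Extracting the bound.} Applying \Cref{thm:acyc-lb} to this $(p+1)$-pass algorithm gives
\[
    S + O(n\log n) \;\ge\; \frac{n^{1+\Omega(1/(p+1))}}{(p+1)^{O(1)}}\,.
\]
Since $1/(p+1)=\Theta(1/p)$, this is $n^{1+\Omega(1/p)}/(p+1)^{O(1)}$. In the regime where the right-hand side is $\omega(n\log n)$ (i.e., the regime in which the statement is non-trivial), the $O(n\log n)$ additive term can be absorbed by a slight adjustment of the hidden constant in $\Omega(1/p)$, yielding $S=n^{1+\Omega(1/p)}/(p+1)^{O(1)}$; in the remaining regime the claimed bound is already dominated by the trivial $\Omega(n\log n)$ cost of naming a vertex, so it holds automatically. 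The only mildly subtle point to be careful about is this overhead-vs-target balancing and the substitution $1/(p+1)=\Theta(1/p)$, which is exactly where the $(p+1)^{O(1)}$ (rather than $p^{O(1)}$) in the statement comes from; no new communication lower bound is needed beyond what \Cref{thm:acyc-lb} already provides.
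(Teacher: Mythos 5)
Your proof is correct and takes essentially the same route as the paper: simulate the $p$-pass \topo algorithm, store the output ordering in $O(n\log n)$ bits, verify it in one extra pass, and invoke the $(p+1)$-pass \acyc lower bound of \Cref{thm:acyc-lb}. The only cosmetic difference is that the paper explicitly handles the case where the \topo algorithm crashes or returns a malformed object on a non-acyclic input (by outputting NO), whereas you assert it always produces some ordering; either way the conclusion is the same.
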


\begin{proof}
Given a $p$-pass $S$-space algorithm $A$ for \topo, we can obtain a $(p+1)$-pass $(S+O(n\log n))$-space algorithm for \acyc as follows. Run algorithm $A$, store the ordering it outputs, and in another pass, check if the ordering induces any back-edge. If it does, we output NO, and otherwise, we output YES. In case of any runtime error, we return NO. For correctness, observe that if the input graph $G$ is acyclic, then $A$ returns a valid topological ordering w.h.p.. Hence, the final pass doesn't detect any back-edge, and we correctly output YES. In case $G$ is not acyclic, the promise that the input graph for \topo would be a DAG is violated, and hence, $A$ either raises an error or returns some ordering that must induce a back-edge since $G$ doesn't have a topological ordering. Thus, we correctly return NO in this case. Finally, \Cref{thm:acyc-lb} implies that $S+O(n\log n)\geq n^{1+\Omega(1/p)}/(p+1)^{O(1)}$, i.e., $S \geq n^{1+\Omega(1/p)}/(p+1)^{O(1)}$. 
\end{proof}

\begin{corollary} \label{cor:fas-lb}
  A multiplicative approximation algorithm for either \fassiz or \fas requires $\Omega(n^2)$ space in one pass. In $p$ passes, such approximations require ${n^{1+\Omega(1/p)}}/{p^{O(1)}}$ space and ${n^{1+\Omega(1/p)}}/{(p+1)^{O(1)}}$ space respectively.
\end{corollary}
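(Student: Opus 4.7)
The plan is to derive this corollary as an immediate consequence of \Cref{obs:fas} together with the previously established lower bounds for \acyc and \topo. Specifically, \Cref{obs:fas} tells us that any multiplicative approximation to \fassiz yields a solution to \acyc (since $\best_G = 0$ iff $G$ is acyclic, any finite multiplicative approximation must return $0$ on acyclic inputs and a positive value otherwise), and any multiplicative approximation to \fas yields a solution to \topo (since the approximate ordering must have $|B_G(\sigma)| = 0$, i.e., be topological, whenever $\best_G = 0$).

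For the \fassiz bounds, I would simply observe that a streaming algorithm computing any multiplicative approximation can be used as a black box to decide \acyc with the same space and pass complexity, by testing whether the output is $0$. Plugging in \Cref{thm:acyc-lb} then gives $\Omega(n^2)$ space in one pass and $n^{1+\Omega(1/p)}/p^{O(1)}$ space in $p$ passes. For the \fas bounds, the same black-box reduction turns an approximation algorithm into a \topo algorithm with identical resource usage, so \Cref{thm:topo-sort-lb} yields the one-pass $\Omega(n^2)$ bound and \Cref{cor:topo-multipass-lb} yields the $p$-pass $n^{1+\Omega(1/p)}/(p+1)^{O(1)}$ bound.

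There is really no obstacle here: the entire content of the corollary has already been encapsulated by \Cref{obs:fas} and the preceding lower bounds, so the proof amounts to citing them. The only mild subtlety is to handle the \fas-to-\topo reduction cleanly in the multipass setting — since the reduction uses the approximation algorithm's ordering directly rather than checking it, no extra pass is introduced and the $(p+1)^{O(1)}$ denominator from \Cref{cor:topo-multipass-lb} transfers verbatim. I would write this proof in two or three sentences.
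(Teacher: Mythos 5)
Your proposal is correct and matches the paper's proof exactly: both derive the corollary as immediate consequences of \Cref{obs:fas} combined with \Cref{thm:topo-sort-lb}, \Cref{thm:acyc-lb}, and \Cref{cor:topo-multipass-lb}. Your note that the \fas-to-\topo reduction adds no extra pass is a correct and worthwhile clarification of why the $(p+1)^{O(1)}$ denominator transfers unchanged.
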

\begin{proof}
  This is immediate from \Cref{obs:fas}, \Cref{thm:topo-sort-lb}, \Cref{thm:acyc-lb}, and \Cref{cor:topo-multipass-lb}.
\end{proof}



  
  
  


\subsection{Random Order Lower Bounds}
\label{sec:gen-digraphs:rand}
\ifsoda \mbox{}\smallskip 

\noindent
\fi
We consider the \stconn, \acyc, and \fas problems in a uniformly randomly ordered digraph stream. Recall that for adversarially ordered streams, these problems require about $n^{1+\Omega(1/p)}$ space in $p$ passes. The hardness ultimately stems from a similar lower bound for the \shortpath problem. In this latter problem, the input is an $n$-vertex DAG with two designated vertices $v_s$ and $v_t$, such that either (a) there exists a path of length at most $2p+2$ from $v_s$ to $v_t$ or (b) $v_t$ is unreachable from $v_s$. The goal is to determine which of these is the case.

Our goal in this section is to show that the same lower bound continues to hold under random ordering, provided we insist on a sufficiently small error probability, about $1/p^{\Omega(p)}$. We prove this for \shortpath.  As this is a special case of \stconn, a lower bound for \shortpath carries over to \stconn. Further, by the reductions in \Cref{thm:acyc-lb} and \Cref{cor:fas-lb,cor:topo-multipass-lb}, the lower bounds also carry over to \acyc, \topo, and \fas. We also show a barrier result arguing that this restriction to low error is necessary: for the \shortpath problem, if an error probability of at least $2/p!$ is allowed, then $\tO(n)$ space is achievable in $p$ passes.

Our proof uses the machinery of the Guruswami--Onak lower bound for \shortpath under an adversarial stream ordering~\cite{GuruswamiOnak16}. As in their work, we derive our space lower bound from a communication lower bound for {\em set chasing intersection} (henceforth, \sci). However, unlike them, we need to prove a ``robust'' lower bound for \sci, in the sense of Chakrabarti, Cormode, and McGregor~\cite{ChakrabartiCM16}, as explained below. To define \sci, we first set up a special family of multilayer pointer jumping problems, described next. 

Picture a layered digraph $\Gstar$ with $2k+1$ layers of vertices, each layer having $m$ vertices, laid out in a rectangular grid with each column being one layer. From left to right, the layers are numbered $-k, -k+1, \ldots, k$. Layer $0$ is called the {\em mid-layer}. The only possible edges of $\Gstar$ are from layer $\ell$ to layer $\ell-1$, or from layer $-\ell$ to layer $-\ell+1$, for $\ell\in[k]$ (i.e., edges travel from the left and right ends of the rectangular grid towards the mid-layer). We designate the first vertex in layer $-k$ as $v_s$ and the first vertex in layer $k$ as $v_t$. 

Each vertex not in the mid-layer has exactly $t$ outgoing edges, numbered $1$st through $t$th, possibly with repetition (i.e., $\Gstar$ is a multigraph). Think of these edges as {\em pointers}. An input to one of our communication problems (to be defined soon) specifies the destinations of these pointers. Thus, an input consists of $2mkt$ {\em tokens}, where each token is an integer in $[m]$ specifying which of the $m$ possibilities a certain pointer takes. The pointers emanating from layer~$\ell$ of vertices constitute the $\ell$th layer of pointers. Our communication games will involve $2k$ players named $P_{-k}, \ldots, P_{-1}, P_1, \ldots, P_k$. We say that $P_\ell$ is the {\em natural owner} of the portion of the input specifying the $\ell$th layer of pointers.

In the $\sci_{m,k,t}$ problem, the goal is to determine whether or not there exists a mid-layer vertex reachable from $v_s$ as well as $v_t$. Consider the communication game where each pointer is known to its natural owner and the players must communicate in $k-1$ rounds, where in each round they broadcast messages in the fixed order $P_{-1}, \ldots, P_{-k}, P_1, \ldots, P_k$. Guruswami and Onak showed that this problem requires total communication $\Omega(m^{1+1/(2k)}/k^{16}\log^{3/2}m)$ in the parameter regime $t^{2k} \ll m$. This almost immediately implies a similar lower bound for \shortpath---simply reverse the directions of the pointers in positive-numbered layers---which then translates into a data streaming lower bound along standard lines.

The key twist in {\em our} version of the \sci problem is that each pointer is allocated to one of the $2k$ players {\em uniformly at random}: thus, most pointers are not allocated to their natural owners. The players have to determine the output to \sci communicating exactly in the same pattern as before, up to a small error probability taken over the protocol's internal randomness as well as the random allocation. This setup potentially makes the problem easier because there is a good chance that the players will be able to ``jump two pointers'' within a single round. Our main technical result is to show that a lower bound of the form $m^{1+\Omega(1/k)}$ holds despite this. In the terminology of Chakrabarti et al.~\cite{ChakrabartiCM16}, who lower-bounded a number of communication problems under such random-allocation setups, this is a {\em robust} communication lower bound.

\begin{theorem} \label{thm:sci}
  Suppose that $t^{2k} = o(m/\polylog(m))$ and that protocol $\Pi$ solves $\sci_{m,k,t}$ with error $\eps < (2k)^{-2k-2}$ when the input is randomly allocated amongst the $2k$ players, as described above. Then, $\Pi$ communicates $\Omega(m^{1+1/(2k)}/k^{16}\log^{3/2}m)$ bits.
\end{theorem}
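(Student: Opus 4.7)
The plan is to reduce from the naturally-allocated $\sci_{m,k,t}$ problem, for which Guruswami and Onak~\cite{GuruswamiOnak16} have already established the target $\Omega(m^{1+1/(2k)}/k^{16}\log^{3/2}m)$ communication lower bound. The central concern is that a random allocation $\sigma$ might concentrate several consecutive pointers of a single $v_s$--$v_t$ chain at one player, enabling that player to ``jump'' multiple layers within a single round and shortcut the protocol. The key leverage against this is the extreme error tolerance $\eps < (2k)^{-2k-2}$, which I would use to condition on allocations that happen to agree with the natural allocation on one target chain.

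Concretely, let $\mu$ denote Guruswami and Onak's hard input distribution on $\sci$-instances and let $\nu$ denote the uniform distribution over allocations $\sigma\colon\text{tokens}\to[2k]$. Fix a canonical chain $C$ of $2k$ consecutive pointers from $v_s$ to $v_t$ through the mid-layer; in YES-instances drawn from $\mu$, this chain is the planted witness. Let $\cE_C$ be the event that $\sigma$ agrees with the natural allocation on every pointer in $C$. Since each pointer is allocated uniformly and independently, $\Pr_\nu[\cE_C]=(2k)^{-2k}$, and Markov's inequality gives
\[
  \Pr_{\mu\times\nu}\bigl[\Pi \text{ errs}\,\big|\, \cE_C\bigr]\le\frac{\Pr_{\mu\times\nu}[\Pi \text{ errs}]}{\Pr_\nu[\cE_C]}\le\frac{(2k)^{-2k-2}}{(2k)^{-2k}}=\frac{1}{(2k)^2}\le\frac{1}{3}.
\]
Thus $\Pi$ remains a correct $\sci$-protocol on $\mu$ when conditioned on $\cE_C$.

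The technical heart of the argument, and the step I expect to be the main obstacle, is promoting this conditional correctness into a genuine $(k-1)$-round natural-allocation protocol $\Pi'$ whose communication cost matches that of $\Pi$ up to lower-order terms. Under $\cE_C$, the $2k$ pointers of $C$ already sit with their natural owners, so along the chain there is no informational gap between the random-allocation and natural-allocation settings; for pointers outside $C$, which under $\mu$ are independent and nearly uniform, I would have the natural owners use private randomness to locally sample consistent values for whichever ``random player'' $\sigma$ assigns them to, thereby avoiding any actual redistribution. The naive alternative---genuinely routing every token to its $\sigma$-designated player---would cost $\Theta(mkt\log m)$ bits, which the parameter regime $t^{2k}=o(m/\polylog m)$ just barely permits, so this local resampling step is essential to avoid eating the lower bound. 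Carrying it out rigorously requires adapting the information-complexity bookkeeping of \cite{GuruswamiOnak16} so that the internal information of $\Pi$ is charged only along the unique planted chain (well-defined given $\cE_C$ and the same $t^{2k}=o(m/\polylog m)$ regime, which ensures uniqueness w.h.p.\ under $\mu$). Once this simulation is in hand, $\Pi'$ is a bona fide $(k-1)$-round protocol for naturally-allocated $\sci_{m,k,t}$ with constant error, and the Guruswami--Onak bound forces it, and hence $\Pi$, to communicate $\Omega(m^{1+1/(2k)}/k^{16}\log^{3/2}m)$ bits, proving the theorem.
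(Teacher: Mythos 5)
Your high-level instinct — leverage the tiny error bound $\eps < (2k)^{-2k-2}$ against the probability $(2k)^{-2k}$ that a random allocation ``respects'' the natural allocation along the $2k$ pointers of a witness chain — is exactly the right quantitative relationship, and your Markov step is arithmetically sound. But the reduction you propose is built on the wrong source problem and is missing the piece that makes the whole argument close.

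\textbf{The conditioning on a ``canonical chain'' $C$ is not well-defined.} A $\sci$ instance is determined by $t$ pointers per vertex, and a YES-witness consists of \emph{some} choice of pointer indices $i_1,\ldots,i_k,j_1,\ldots,j_k \in [t]$ through \emph{some} sequence of intermediate vertices. Guruswami--Onak's hard distribution does not plant a witness along a fixed, input-independent chain, so the event $\cE_C$ for a fixed $C$ does not capture ``the protocol saw the true witness naturally allocated.'' The quantity $(2k)^{-2k}$ in the paper arises differently: it is the probability, \emph{over the public randomness of the reduction}, that none of the $2k$ pointers on a (data-dependent) witness path gets damaged. That event is analyzed inside the constructed protocol's correctness argument, not as a conditioning of the hard distribution.

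\textbf{The simulation step has no correctness argument on NO instances — and this is where your plan breaks, not merely where it ``needs polish.''} Whatever you substitute for off-chain tokens (public-random values as the paper does, or private-random values as you suggest), the simulated execution of $\Pi$ runs on a \emph{different} $\sci$ instance $G''$ than the true one. Your Markov bound certifies $\Pi$'s accuracy on the true instance under the conditioned allocation; it says nothing about $\Pi$'s behavior on the damaged $G''$. The danger is precisely that damaging off-chain pointers of a NO instance creates a spurious $v_s$--$v_t$ meeting, turning it into a YES instance of $\sci$. To rule that out one needs structural control on the pointer functions, and this is exactly why the paper does \emph{not} reduce from naturally-allocated $\sci$: it reduces from $\ormpjeq$ (their $\textsc{or}\circ\textsc{lpce}$), whose $(C\log m)$-thinness promise lets one bound, layer by layer, the probability that a single damaged pointer reaches a bad set (the $r^{k-1}/m$ union bound in Case~3 of the paper's proof). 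Together with the public-randomness vertex renumbering — another ingredient absent from your sketch, needed to kill the $o(1)$ probability that the $\ormpjeq$ 0-instance happens to be a $\sci$ 1-instance already — this is what makes the false-positive probability $o(1)$. Without thinness, your claim that $t^{2k} = o(m/\polylog m)$ ``ensures uniqueness w.h.p.'' is unsupported, and the proposed rewrite of the information-complexity bookkeeping is not a concrete plan but a restatement of the open gap.

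In short: your Markov calculation is a real observation, but the paper's proof is a \emph{forward} simulation from $\ormpjeq$ to randomly-allocated $\sci$ with explicit public-coin damaging, thinness, and renumbering. The reduction from naturally-allocated $\sci$ you propose lacks the promise structure that controls false positives, and fixing that would essentially force you to go through $\ormpjeq$ anyway.
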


To prove this result, we consider a problem we call $\ormpjeq_{m,k,t}$, defined next (Guruswami and Onak called this problem $\textsc{or} \circ \textsc{lpce}$). Consider an input $\Gstar$ to $\sci_{m,k,t}$ and fix an $i\in[t]$. If we retain only the $i$th pointer emanating from each vertex, for each layer $\ell$, the $\ell$th layer of pointers defines a function $f_{\ell,i} \colon [n] \to [n]$. Let $x_i$ (respectively, $y_i$) denote the index of the unique mid-layered vertex reached from $v_s$ (respectively, $v_t$) by following the retained pointers. Formally,
\ifsoda
\begin{align*}
  x_i &= f_{-1,i}( f_{-2,i}( \cdots f_{-k,i}(1) \cdots )) \,, \\
  y_i &= f_{1,i} ( f_{2,i} ( \cdots f_{k,i}(1)  \cdots )) \,.
\end{align*}
\else
\[
  x_i = f_{-1,i}( f_{-2,i}( \cdots f_{-k,i}(1) \cdots )) \,, \quad
  y_i = f_{1,i} ( f_{2,i} ( \cdots f_{k,i}(1)  \cdots )) \,.
\]
\fi
Define a function to be {\em $r$-thin} if every element in its range has at most $r$ distinct pre-images. The instance $\Gstar$ is said to {\em meet at $i$} if $x_i = y_i$ and is said to be {\em $r$-thin at $i$} if each function $f_{\ell,i}$ is $r$-thin. The desired output of $\ormpjeq$ is
\ifsoda
\begin{align*}
  \ormpjeq(\Gstar) = \bigvee_{i=1}^t &\indic\{\Gstar \text{ meets at } i\}
  \vee\\ 
  &\indic\{\Gstar \text{ is not $(C\log m)$-thin at $i$}\} \,,
\end{align*}
\else
\[
  \ormpjeq(\Gstar) = \bigvee_{i=1}^t \indic\{\Gstar \text{ meets at } i\}
  \vee \indic\{\Gstar \text{ is not $(C\log m)$-thin at $i$}\} \,,
\]
\fi
for an appropriate universal constant $C$. The corresponding communication game allocates each pointer to its natural owner and asks them to determine the output using the same communication pattern as for \sci. Here is the key result about this problem.

\begin{lemma}[Lemma~7 of Guruswami--Onak~\cite{GuruswamiOnak16}] \label{lem:ormpjeq}
  \ifsoda
  The $(k-1)$-round constant-error communication complexity of $\ormpjeq$ is lower-bounded as follows:
  \[ \R^{k-1}(\ormpjeq_{m,k,t}) = \Omega(tm / (k^{16}\log m)) - O(kt^2) \,. \]
  \mbox{}\hfill\qed
  \else
  The $(k-1)$-round constant-error communication complexity
  $\R^{k-1}(\ormpjeq_{m,k,t}) = \Omega(tm / (k^{16}\log m)) - O(kt^2)$. \qed
  \fi
\end{lemma}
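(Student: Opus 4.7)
The plan is a direct-sum argument that reduces $\ormpjeq_{m,k,t}$ to a single-copy primitive $\mpjeq_{m,k}$ consisting of one system of pointer functions with the same meet-or-non-thin output predicate. I would prove $\R^{k-1}(\mpjeq_{m,k}) = \Omega(m/(k^{16}\log m))$ and then amplify to $\ormpjeq$ via an OR-composition direct sum, paying an additive $O(kt^2)$ loss that accounts for coordination overhead.

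The single-copy lower bound is the technical heart. Fix a hard distribution $\mu$ under which each pointer function $f_\ell$ is a uniform, independent random map on $[m]$. Under $\mu$, each $f_\ell$ is $(C\log m)$-thin with probability $1 - 1/\poly(m)$ by a standard balls-in-bins argument, and the left and right chains meet at the mid-layer with probability $\Theta(1/m)$, so the predicate is close to balanced. The key step is then a round-elimination argument of the Hellinger/Bar-Yossef--Jayram--Kumar--Sivakumar flavor, adapted to the $2k$-player layered setup with a fixed broadcast order within each round: a $(k-1)$-round protocol cannot genuinely follow a chain of length $k$ from $v_s$ to the mid-layer, and any attempt to shortcut via probabilistic sampling either leaks insufficient information to identify the true meeting vertex or forces the protocol to rely on a layer whose empirical function fails the thinness test. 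Each round-elimination step cleanly drops the effective chain length by one while losing a $\poly(k, \log m)$ factor in the scale $m$; iterating $k$ times is where the accumulated $k^{16}$ polynomial overhead comes from.

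For the direct-sum step, I would plant a genuine $\mpjeq$ instance at a uniformly random coordinate $i^* \in [t]$ and surround it with $t-1$ ``zero'' inputs drawn from an appropriate marginal of $\mu$, then apply the conditional-information-cost framework for OR-composed problems. The information cost of any $(k-1)$-round protocol for $\ormpjeq$ under this planted distribution is at least $t$ times the single-copy information cost, modulo an additive $O(kt^2)$ slack that comes from embedding the $t$ ``easy'' copies across $2k$ players and $k-1$ rounds. Combining the single-copy bound with this amplification gives exactly the claimed $\Omega(tm/(k^{16}\log m)) - O(kt^2)$.

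The main obstacle is the round-elimination for $\mpjeq_{m,k}$: one must decompose each round's transcript into per-layer informational contributions, transfer those bounds to error via Hellinger or $\chi^2$ distance, and then argue that excising one round produces a valid protocol for a shrunken instance whose chain length has dropped by exactly one. The thinness relaxation is precisely what keeps this from collapsing under rare heavy-hitter events, which would otherwise break the inductive step; the $k^{16}$ factor is essentially the price of carrying this bookkeeping through all $k$ rounds while juggling $2k$ speakers in a fixed order.
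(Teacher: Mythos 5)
The paper does not prove this lemma; it is imported verbatim from Guruswami--Onak (their Lemma~7), which is why the lemma environment ends with a \(\qed\) and no proof follows. There is therefore no in-paper argument to compare against, so I evaluate your sketch on its own terms, and it has a genuine gap at its foundation: the hard distribution you fix is not hard. If every \(f_{\ell,i}\) is an independent uniformly random map on \([m]\), then \(x_i\) and \(y_i\) are each (essentially) uniform on \([m]\) and independent, so \(\Pr[\Gstar\text{ meets at }i]=\Theta(1/m)\); non-thinness occurs with probability \(1/\poly(m)\). Over all \(t\) coordinates, with \(t^{2k}=o(m/\polylog m)\), this gives \(\Pr[\ormpjeq(\Gstar)=1]=o(1)\), which is the opposite of ``close to balanced'': the zero-communication protocol that always outputs \(0\) succeeds with probability \(1-o(1)\), so no lower bound whatsoever can be extracted from this \(\mu\). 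A correct argument must plant meetings with constant probability (or condition the distribution appropriately, as Guruswami--Onak do), and the round-elimination machinery must then be carried out against that carefully conditioned distribution --- this is precisely the technical content your sketch elides.

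Two further accounting problems would remain even if the distribution were repaired. First, you explain the \(k^{16}\) as a \(\poly(k,\log m)\) loss in the scale \(m\) incurred at each of \(k\) round-elimination steps; compounding a genuine \(\poly(k)\) multiplicative loss over \(k\) steps gives \(\poly(k)^{k}\), vastly worse than \(k^{16}\), so the polynomial loss cannot come from per-step compounding and must instead be a one-shot factor (union bounds, concentration slack, thinness parameters). Second, the additive \(-O(kt^{2})\) term does not naturally fall out of an information-cost OR-direct-sum, which gives multiplicative rather than additive corrections; you would need to point to a concrete protocol step that costs \(\Theta(kt^{2})\) bits (in the source it is explicit communication spent coordinating the \(t\) coordinates and the thinness bookkeeping), and ``coordination overhead'' does not identify one. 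The overall architecture --- single-copy round elimination followed by OR-amplification --- is in the right family, but none of the three load-bearing steps (hard distribution, round elimination, additive loss) is substantiated, and the first is outright incorrect as stated.
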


Using this, we prove our main technical result.
\begin{proof}[Proof of \Cref{thm:sci}]
  Based on the $\eps$-error protocol $\Pi$ for $\sci_{m,k,t}$, we design a protocol $\cQ$ for $\ormpjeq_{m,k,t}$ as follows. Let $\Gstar$ be an instance of $\ormpjeq$ allocated to players as described above. The players first check whether, for some $i$, $\Gstar$ fails to be $r$-thin at $i$, for $r := C\log m$: this check can be performed in the first round of communication with each player communicating a single bit. If the check passes, the protocol ends with output $1$. From now on, we assume that $\Gstar$ is indeed $r$-thin at each $i\in [t]$.
  
  Using public randomness, the players randomly renumber the vertices in each layer of $\Gstar$, creating an instance $G'$ of \sci.\footnote{This step is exactly as in Guruswami-Onak~\cite{GuruswamiOnak16}. Formally, each function $f_{\ell,i}$ is replaced by a corresponding function of form $\pi_{\ell,i} \circ f_{\ell,i} \circ \pi_{\ell+1,i}^{-1}$ (for $\ell > 0$), for random permutations $\pi_{\ell,i} \colon [m] \to [m]$. To keep things concise, we omit the full details here.}
  The players then choose $\rho$, a random allocation of pointers as in the \sci problem. They would like to simulate $\Pi$ on $G'$, as allocated by $\rho$, but of course they can't do so without additional communication. Instead, using further public randomness, for each pointer that $\rho$ allocates to someone besides its natural owner, the players reset that pointer to a uniformly random (and independent) value in $[m]$. We refer to such a pointer as {\em damaged}. Since there are $2k$ players, each pointer is damaged with probability $1-1/(2k)$. Let $G''$ denote the resulting random instance of \sci. The players then simulate $\Pi$ on $G''$ as allocated by $\rho$.
  
  It remains to analyze the correctness properties of $\cQ$. Suppose that $\Gstar$ is a $1$-instance of \ormpjeq. Then there exists $i\in[t]$ such that $\Gstar$ meets at $i$. By considering the unique maximal paths out of $v_s$ and $v_t$ following only the $i$th pointers at each vertex, we see that $\Gstar$ is also a $1$-instance of \sci. Since the vertex renumbering preserves connectivity, $G'$ is also a $1$-instance of \sci. With probability $(2k)^{-2k}$, none of the $2k$ pointers on these renumbered paths is damaged; when this event occurs, $G''$ is also a $1$-instance of \sci. Therefore, $\cQ$ outputs $1$ with probability at least $(2k)^{-2k}(1-\err(\Pi)) \ge (2k)^{-2k}(1-\eps)$.
  
  Next, suppose that $\Gstar$ is a $0$-instance of \ormpjeq. It could be that $\Gstar$ is a $1$-instance of \sci. However, as Guruswami and Onak show,\footnote{%
  See the final paragraph of the proof of Lemma~11 in~\cite{GuruswamiOnak16}.}
  the random vertex renumbering ensures that $\Pr[\sci(G') = 1] < o(1)$. For the rest of the argument, assume that $\sci(G') = 0$. In order to have $\sci(G'') = 1$, there must exist a mid-layer vertex $x$ such that
  \ifsoda
  \begin{align} 
    x
    &= f_{1,i_1}( f_{2,i_2}( \cdots f_{k,i_k}(1) \cdots )) \notag \\
    &= f_{-1,j_1}( f_{-2,j_2}( \cdots f_{-k,j_k}(1) \cdots ))
    \label{eq:intersect}
  \end{align}
  \else
  \begin{equation} \label{eq:intersect}
    f_{1,i_1}( f_{2,i_2}( \cdots f_{k,i_k}(1) \cdots )) = x
    = f_{-1,j_1}( f_{-2,j_2}( \cdots f_{-k,j_k}(1) \cdots ))
  \end{equation}
  \fi
for some choice of pointer numbers $i_1, \ldots, i_k,$ $j_1, \ldots, j_k \in [t]$. We consider three cases.

  \begin{itemize}
  \item {\em Case 1: None of the pointers in the above list is damaged.~}
  In this case, \cref{eq:intersect} cannot hold, because $\sci(G') = 0$.
  
  \item {\em Case 2: The layer-$1$ pointer in the above list is damaged.~}
  Condition on a particular realization of pointers in negative-numbered layers and let $x$ denote the mid-layered vertex reached from $v_s$ by following pointers numbered $j_k,\ldots,j_1$, as in \cref{eq:intersect}. The probability that the damaged pointer at layer~$1$ points to $x$ is $1/m$. Since this holds for each conditioning, the probability that $\sci(G'') = 1$ is also $1/m$.
  
  \item {\em Case 3: The layer-$\ell$ pointer is damaged, but pointers in layers $1,\ldots,\ell-1$ are not, where $\ell \ge 2$.~} Again, condition on a particular realization of pointers in negative-numbered layers and let $x$ be as above. Since the functions $f$ in \cref{eq:intersect} are all $r$-thin, the number of vertices in layer $\ell-1$ that can reach $x$ using only undamaged pointers is at most $r^{\ell-1} \le r^{k-1}$. The probability that the damaged pointer at layer~$\ell$ points to one of these vertices is at most $r^{k-1}/m$. 
  \end{itemize}
  
  Combining the cases, the probability that \cref{eq:intersect} holds for a particular choice of pointer numbers $i_1, \ldots, i_k, j_1, \ldots, j_k \in [t]$ is at most $r^{k-1}/m$. Taking a union bound over the $t^{2k}$ choices, the overall probability $\Pr[\sci(G'') = 1] < t^{2k} r^{k-1}/m = o(1)$, for the parameter regime $t^{2k} = o(m/\polylog(m))$ and $r = O(\log m)$. Therefore, $\cQ$ outputs $1$ with probability at most $\err(\Pi) + o(1) \le \eps + o(1)$.
  
  Thus far, we have a protocol $\cQ$ that outputs $1$ with probability $\alpha$ when $\ormpjeq(\Gstar) = 0$ and with probability $\beta$ when $\ormpjeq(\Gstar) = 1$, where $\alpha \le \eps + o(1)$ and $\beta \ge (2k)^{-2k}(1-\eps)$. Recall that $\eps = (2k)^{-2k-2}$, so $\beta$ is bounded away from $\alpha$. Let $\cQ'$ be a protocol where we first toss an unbiased coin: if it lands heads, we output $0$ with probability $\delta := (\alpha+\beta)/2$ and $1$ with probability $1-\delta$; if it lands tails, we simulate $\cQ$. Then $\cQ'$ is a protocol for \ormpjeq with error probability $\frac12 - (\beta-\alpha)/4$. By \Cref{lem:ormpjeq}, this protocol must communicate $\Omega(m^{1+1/(2k)}/k^{16}\log^{3/2}m)$ bits and so must $\Pi$.  
\end{proof}

By a standard reduction from random-allocation communication protocols to random-order streaming algorithms, we obtain the following lower bound: the main result of this section.

\begin{theorem} \label{thm:rand-order-shortpath}
  For each constant $p$, a $p$-pass algorithm that solves \shortpath on $n$-vertex digraphs whose edges presented in a uniform random order, erring with probability at most $1/p^{\Omega(p)}$ must use $\Omega(n^{1+1/(2p+2)}/\log^{3/2}n)$ bits of space. 
  
  Consequently, similar lower bounds hold for the problems \stconn, \acyc, \topo, \fas, and \fassiz.
  \qed
\end{theorem}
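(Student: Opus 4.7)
The plan is to derive the streaming lower bound from \Cref{thm:sci} via the standard reduction from random-allocation communication protocols to random-order streaming algorithms, in the style of Chakrabarti, Cormode, and McGregor~\cite{ChakrabartiCM16}. The first move is a reduction from $\sci_{m,k,t}$ to \shortpath: given a layered multigraph $\Gstar$ on $2k+1$ layers of $m$ vertices, I would reverse all edges in the positive-numbered layers so that every mid-layer vertex reachable from both $v_s$ and $v_t$ in $\Gstar$ corresponds to a directed path of length exactly $2k$ from $v_s$ to $v_t$ in the modified digraph $G$ (and conversely, no such mid-layer vertex means $v_t$ is unreachable from $v_s$). Choosing $k = p+1$ makes the path length $2k = 2p+2$ fit the \shortpath promise, while the total vertex count is $n = \Theta(mk) = \Theta(m)$ for constant $p$.

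Next I would convert a $p$-pass, $s$-space, random-order streaming algorithm $\mathcal{A}$ for \shortpath into a $(k-1)$-round communication protocol $\Pi$ for $\sci_{m,k,t}$ under uniform random pointer allocation to $2k$ players. The players list their privately-held pointers (edges of $G$) in arbitrary internal order and concatenate them; because allocation is uniformly random, the concatenated stream is distributed as a uniformly random permutation of the $O(mkt)$ edges partitioned into $2k$ contiguous blocks, so simulating $\mathcal{A}$ on this stream yields a valid random-order execution. Each pass corresponds to one round in which players speak in the canonical order $P_{-1},\ldots,P_{-k},P_1,\ldots,P_k$, passing the $s$-bit memory state along; with $p = k-1$ the protocol has exactly the round structure required by \Cref{lem:ormpjeq} and hence by \Cref{thm:sci}. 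Total communication is $O(p k s) = O(p^2 s)$.

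Now I would invoke \Cref{thm:sci}, which applies so long as the error budget satisfies $\varepsilon < (2k)^{-2k-2} = 1/p^{\Omega(p)}$ and the thinness parameter satisfies $t^{2k} = o(m/\polylog m)$ (taking $t$ a small constant suffices). The theorem gives $p^2 s = \Omega(m^{1+1/(2k)}/k^{16}\log^{3/2}m)$, and since $p$ is constant and $n = \Theta(m)$, this simplifies to $s = \Omega(n^{1+1/(2p+2)}/\log^{3/2}n)$, absorbing all $\poly(p)$ factors into the hidden constant. The extensions to \stconn, \acyc, \topo, \fas, and \fassiz are immediate from the reductions in \Cref{thm:acyc-lb,cor:topo-multipass-lb,cor:fas-lb}; one only needs to check that the constant number of extra edges those reductions introduce (e.g.\ the single edge $(t,s)$ for \acyc) can be inserted into a random-order stream at a uniformly random position, so that the resulting stream remains uniformly random ordered.

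The main obstacle I anticipate is verifying that the random-allocation-to-random-order translation is tight enough to preserve the $1/p^{\Omega(p)}$ error budget. The marginal distribution over player-edge assignments induced by uniform random ordering followed by contiguous blocking is only approximately equal to uniform independent allocation, and one must confirm via Chernoff bounds on block sizes that the coupling slack is much smaller than $(2k)^{-2k-2}$. For constant $p$ this is essentially automatic, but it is precisely the rapid decay of the permitted error with $p$ that makes this step delicate, and it is also what prevents the bound from extending to super-constant $p$ without strengthening \Cref{thm:sci}.
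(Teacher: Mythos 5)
Your reconstruction follows the paper's (unwritten) proof closely: reduce $\sci_{m,k,t}$ with $k = p+1$ to \shortpath by reversing the positive-layer pointers (so a common mid-layer vertex yields a $v_s$-to-$v_t$ path of length $2k = 2p+2$, and the absence of one means $v_t$ is unreachable), simulate the $p$-pass random-order streaming algorithm as a $(k-1)$-round random-allocation protocol with total communication $O(k^2 s)$, invoke \Cref{thm:sci}, and carry the bound over to \stconn, \acyc, \topo, \fas, \fassiz via the existing reductions after noting that the $O(1)$ extra edges they introduce can be placed at uniformly random stream positions. Your concern about coupling random allocation with contiguous-block partitioning of a random-order stream is a real subtlety the paper glosses over, though for constant $p$ the resulting statistical slack is $o(1)$ and fits comfortably under the $1/p^{\Theta(p)}$ error budget.

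The one concrete error is the aside that a small constant $t$ suffices. \Cref{lem:ormpjeq} yields $\Omega(tm/(k^{16}\log m)) - O(kt^2)$, which for constant $t$ is only $\Omega(m/\log m)$ — linear, not super-linear, in $n$. To obtain the claimed $\Omega(m^{1+1/(2k)}/\log^{3/2}m)$ one must set $t = \Theta(m^{1/(2k)}/\sqrt{\log m})$, for which $t^{2k} = m/(\log m)^{k}$ does satisfy $t^{2k} = o(m/\polylog(m))$ and the $O(kt^2)$ loss is lower order. \Cref{thm:sci}'s statement should be read as being about this optimizing value of $t$, not an arbitrary admissible one; it is the bound from \Cref{lem:ormpjeq} at this $t$ that the proof of \Cref{thm:sci} transfers to $\Pi$. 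With this choice the stream has $\Theta(mkt) = \Theta\bigl(n^{1+1/(2p+2)}/\sqrt{\log n}\bigr)$ edges (still polynomial in $n$), and the rest of your argument goes through unchanged.
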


This paper is focused on {\em directed} graph problems. However, it is worth noting that a by-product of our generalization of the Guruswami--Onak bound to randomly ordered streams is that we also obtain the first random-order super-linear (in $n$) lower bounds for two important {\em undirected} graph problems.
\begin{corollary} \label{cor:rand-order-undir}
  For each constant $p$, $n^{1+\Omega(1/p)}$ space is required to solve either of the following problems in $p$ passes, erring with probability at most $1/p^{\Omega(p)}$, over a randomly ordered edge stream of an $n$-vertex undirected graph $G$:
  \begin{itemize}
    \item decide whether $G$ contains a perfect matching;
    \item decide whether the distance between prespecified vertices $v_s$ and $v_t$ is at most $2p+2$.
  \end{itemize}
\end{corollary}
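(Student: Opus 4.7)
The plan is to invoke \Cref{thm:rand-order-shortpath} and port over the two reductions sketched by Guruswami and Onak~\cite{GuruswamiOnak16}, from \shortpath to (i)~undirected short $s$--$t$ path and (ii)~perfect matching. Their reductions are stated in the adversarial-order setting, so the novelty here is verifying that each reduction transforms a uniformly random stream ordering on the \shortpath instance into an ordering on the output instance that is again uniformly random (or at worst statistically negligibly far from it), and that the tolerated error $1/p^{\Omega(p)}$ survives the transformation.

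For the undirected short $s$--$t$ path problem the reduction is essentially a relabeling. The \shortpath instances produced by \Cref{thm:rand-order-shortpath} are supported on the layered graph underlying $\sci_{m,k,t}$, where every edge connects consecutive layers. In such a layered graph, the shortest undirected $v_s$-to-$v_t$ path has the same length as the shortest directed one, because any walk of length less than the number of layers must use every edge in a ``forward'' direction. Thus we reinterpret each arc as an undirected edge, keeping its position in the stream; the ordering stays uniformly random, the YES/NO answer is preserved, and the $n^{1+\Omega(1/p)}$ lower bound transfers intact.

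For the perfect matching problem I would use the Guruswami--Onak vertex-gadget construction: attach, to each vertex $v$ of the layered DAG, a constant-size matching gadget so that the combined graph admits a perfect matching iff there is a short $v_s \to v_t$ path in the DAG. The total number of gadget edges is $O(n)$, and crucially they are \emph{fixed}, depending only on the vertex set rather than the random input, so they can be produced with zero communication. To obtain a uniformly random ordering of the combined edge set from a random ordering of the input, I would use public randomness to draw, ahead of time, a uniformly random interleaving of the $O(n)$ gadget positions with the $m$ input positions; any single player inserts the publicly known gadget edges at the pre-selected positions, while the input edges stream in their already-uniform random order. Conditioned on the interleaving, the induced order on input edges is uniform, so the combined stream is exactly uniform over all $(m + O(n))!$ orderings.

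The main subtlety is that we must preserve the small error $\delta = 1/p^{\Omega(p)}$ through both reductions. Because the interleaving step is deterministic given public randomness and the gadget edges are input-independent, the reduction introduces no additional failure probability. Hence any $p$-pass, $S$-space, $\delta$-error random-order algorithm for perfect matching (respectively, short undirected $s$--$t$ path) yields a random-order \shortpath algorithm with the same parameters, and \Cref{thm:rand-order-shortpath} forces $S = n^{1+\Omega(1/p)}$, completing the corollary.
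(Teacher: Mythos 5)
Your proposal is correct and fills in exactly the argument the paper leaves implicit: the paper states \Cref{cor:rand-order-undir} with no proof, treating it as a by-product of \Cref{thm:rand-order-shortpath} together with the Guruswami--Onak reductions, and your reduction-preserves-random-order observation is precisely what makes that ``by-product'' claim go through. The layered-graph argument for the undirected distance variant is sound (a length-$2k$ walk from layer $-k$ to layer $k$ has no room for backward steps, so undirected and directed distances of at most $2p+2$ coincide on the hard instances), and the interleaving argument for the perfect-matching reduction is the right idea: because the gadget edges are fixed once the vertex set is fixed, the reducing algorithm can sample the interleaving from its own randomness and feed the simulated algorithm a stream whose marginal distribution is exactly uniform, so the $1/p^{\Omega(p)}$ error budget and the pass count transfer unchanged. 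One small point worth making explicit: choosing random \emph{positions} for the gadget edges is not by itself enough; you must also uniformly permute the gadget edges among those positions (otherwise their relative order is fixed, which breaks uniformity of the combined stream). Your phrase ``uniformly random interleaving'' can be read as including this, but since this is the crux of why the reduction preserves random order, it deserves to be spelled out. You should also note, for the multi-pass setting, that the sampled interleaving must be reused identically on every pass, which is fine because the random edge order is itself fixed across passes.
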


\pparagraph{A Barrier Result}
Notably, \Cref{thm:rand-order-shortpath} applies only to algorithms with a rather small error probability. This is inherent: allowing just a slightly larger error probability renders the problem solvable in semi-streaming space. This is shown in the result below, which should be read as a {\em barrier} result rather than a compelling algorithm.

\begin{proposition} \label{thm:shortpath-rand-order}
  Given a randomly ordered edge stream of a digraph $G$, the \shortpath problem on $G$ can be solved using $\tO(n)$ space and $p$ passes, with error probability at most $2/p!$\,.
\end{proposition}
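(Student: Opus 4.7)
The algorithm is a bidirectional, within-pass chained BFS. Initialize $F = \{v_s\}$ and $B = \{v_t\}$; in each of the $p$ passes, scan the stream and, upon reading an edge $(u,v)$, immediately add $v$ to $F$ if $u \in F$ at that instant, and add $u$ to $B$ if $v \in B$ at that instant (so that updates within a pass chain to later edges of the same pass). Output ``short path exists'' iff $F \cap B \neq \varnothing$ at any moment. Storing the two bitmasks takes $O(n)$ bits, for $\tO(n)$ space overall. On a no-instance, $F$ is contained in the descendants of $v_s$ and $B$ in the ancestors of $v_t$, so any $w \in F \cap B$ would witness a directed path $v_s \to w \to v_t$, contradicting unreachability; the algorithm therefore never falsely accepts.

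For a yes-instance, fix a shortest $v_s$-to-$v_t$ path $P \colon v_s = u_0 \to u_1 \to \cdots \to u_k = v_t$ with $k \le 2p+2$, and set $F_p = \max\{i : u_i \in F\}$ and $B_p = \max\{j : u_{k-j} \in B\}$ at the end of the $p$-th pass; the algorithm succeeds iff $F_p + B_p \ge k$. Since each pass certainly advances each side along $P$ by at least one new vertex (we always see the next forward or backward path edge while the current tip is in the frontier), $F_p \ge p$ and $B_p \ge p$ hold deterministically. Consequently the case $k \le 2p$ is trivial, and only $k \in \{2p+1, 2p+2\}$ requires the random-order argument.

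Let $\rho \in S_k$ denote the uniformly random relative order of the edges of $P$ inside the stream. A short induction on the pass number shows that $F_p$ equals the total length of the first $p$ ascending runs of $\rho$: within pass $j$, starting from the prefix $\{u_0, \ldots, u_{F_{j-1}}\}$, within-pass chaining advances through consecutive path edges precisely as long as their stream positions keep rising, so we extend by exactly the length of one more ascending run before halting at the next descent. In particular, $F_p = p$ iff each of the first $p$ runs has length one, iff $\rho(1) > \rho(2) > \cdots > \rho(p+1)$, an event of probability exactly $1/(p+1)!$ under uniform $\rho$. The symmetric analysis on the reversed permutation yields $\Pr[B_p = p] = 1/(p+1)!$. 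For $k = 2p+2$, failure forces $F_p = p$ or $B_p = p$, since otherwise both are $\ge p+1$ and hence $F_p + B_p \ge 2p+2 = k$; a union bound therefore bounds the failure probability by $2/(p+1)! \le 2/p!$. The case $k = 2p+1$ is strictly easier, as failure there demands $F_p = B_p = p$.

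The main obstacle is rigorously establishing the ascending-runs characterization of $F_p$ (and of $B_p$): one must verify that within-pass chaining, applied along the fixed path $P$, advances through exactly one ascending run of $\rho$ and halts at the very next descent, even though many unrelated edges are interleaved in the stream. Once this is pinned down, the error bound is immediate from the elementary fact that the probability that $p+1$ uniformly random distinct positions appear in decreasing order is $1/(p+1)!$.
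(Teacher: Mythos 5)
Your proof is correct and follows essentially the same strategy as the paper's: exploit within-pass chaining to gain at least one step per pass along a short $v_s$--$v_t$ path $P$, and bound the failure probability by the probability that a prefix of $P$'s edges lands in exactly decreasing stream order. The paper phrases this via Bellman--Ford-style distance arrays $d_s, d_t$ (so its algorithm never falsely accepts even off the promise) and a weaker-but-sufficient observation: if the half-path from $v_s$ to the midpoint $z$ is not in exact reverse stream order, then some consecutive pair of its edges appears in forward order, which buys one double step and settles $d_s[z]$ within $p$ passes; a union bound over the two halves gives $1/q! + 1/r! \le 2/p!$ for half-lengths $q,r \le p+1$. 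Your ascending-runs invariant is a cleaner and more refined bookkeeping of the same phenomenon, and it even yields the slightly tighter bound $2/(p+1)!$. Two minor remarks. First, your definition $F_p = \max\{i : u_i \in F\}$ should instead be the length of the longest \emph{prefix} of $P$ contained in $F$ (equivalently, the progress achievable by chaining along $P$ alone); with the literal max-index definition, the implication ``$F_p + B_p \ge k$ implies $F \cap B \ne \varnothing$'' need not hold, since a far-along path vertex could enter $F$ by a shortcut without the intermediate ones --- but the ascending-runs lower bound applies verbatim to the prefix quantity, so the fix is cosmetic. Second, the ``main obstacle'' you flag at the end is not actually one: only the \emph{lower} bound $F_p \ge a_1 + \cdots + a_p$ is used in the error estimate, and this is immediate because edges off $P$ can only add vertices to $F$ and hence cannot interfere with chaining along $P$; the exact-equality version, which interleaved edges could in principle spoil, is never needed.
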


\begin{proof}
  Recall that we're trying to decide whether or not $G$ has a path of length at most $(2p+2)$ from $v_s$ to $v_t$. The high-level idea is that thanks to the random ordering, a ``Bellman--Ford'' style algorithm that grows a forward path out of $v_s$ and a backward path out of $v_t$ is very likely to make more than one step of progress during {\em some} pass.
  
  To be precise, we maintain arrays $d_s$ and $d_t$, each indexed by $V$. Initialize the arrays to $\infty$, except that $d_s[v_s] = d_t[v_t] = 0$. During each pass, we use the following logic.
  \begin{tabbing}
  xxx\=xxx\=\kill
    \> \small for each edge $(x,y)$ in the stream: \\
    \> \> \small if $d_s[x] + d_t[y] \le 2p+1$: output {\sc true} and halt \\
    \> \> \small $d_s[y] \gets \min( d_s[y], 1+d_s[x] )$ \\
    \> \> \small $d_t[x] \gets \min( d_t[x], 1+d_t[y] )$
  \end{tabbing}
  If we complete $p$ passes without any output, then we output {\sc false}.
  
  If $G$ has no short enough path from $v_s$ to $v_t$, this algorithm will always output {\sc false}. So let's consider the other case, when there {\em is} a $v_s$--$v_t$ path $\pi$ of length at most $2p+2$. Let vertex $z$ be the midpoint of $\pi$, breaking ties arbitrarily if needed. The subpaths $[v_s,z]_\pi$ and $[z,v_t]_\pi$ have lengths $q$ and $r$, respectively, with $q \le p+1$ and $r \le p+1$. Notice that if our algorithm is allowed to run for $q$ (resp.~$r$) passes, then $d_s[z]$ (resp.~$d_t[z]$) will settle to its correct value. If both of them settle, then the algorithm correctly outputs {\sc true}. So, the only nontrivial case is when $q,r \in \{p,p+1\}$.
  
  Let $E_s$ be the event that the random ordering of the edges in the stream places the edges of $[v_s,z]_\pi$ in the exact reverse order of $\pi$. Let $E_t$ be the event that the random ordering places the edges of $[z,v_t]_\pi$ in the exact same order as $\pi$. If $E_s$ does not occur, then for some two consecutive edges $(w,x), (x,y)$ on $[v_s,z]_\pi$, the stream puts $(w,x)$ before $(x,y)$. Therefore, once $d_s[w]$ settles to its correct value, the following pass will settle not just $d_s[x]$, but also $d_s[y]$; therefore, after $q-1 \le p$ passes, $d_s[z]$ is settled. Similarly, if $E_t$ does not occur, then after $r-1 \le p$ passes, $d_t[z]$ is settled. As noted above, if both of them settle, the algorithm correctly outputs {\sc true}.
  
  Thus, the error probability $\le \Pr[E_s \vee E_t] \le \Pr[E_s] + \Pr[E_t] = 1/q! + 1/r! \le 2/p!$\,, as required.
\end{proof}


\section{Feedback Arc Set in Tournaments} \label{sec:tournaments}


\subsection{Accurate, One Pass, but Slow Algorithm for FAS-T}
\label{sec:fast-one-pass}
\ifsoda \mbox{}\smallskip 

\noindent
\fi
We shall now design an algorithm for \fast (that also solves \fassizt) based on linear sketches for $\ell_1$-norm estimation. Recall that the $\ell_1$-norm of a vector $\bx \in \RR^N$ is $\|\bx\|_1 = \sum_{i\in [N]} |x_i|$. A $d$-dimensional $\ell_1$-sketch with accuracy parameter $\eps$ and error parameter $\delta$ is a distribution $\cS$ over $d\times N$ matrices, together with an estimation procedure $\Est \colon \RR^d \to \RR$ such that
\[
  \Pr_{S \gets \cS} \big[\,
  (1-\eps) \|\bx\|_1 \le \Est(S\bx) \le (1+\eps) \|\bx_1\| 
  \,\big] \ge 1-\delta \,.
\]
Such a sketch is ``stream friendly'' if there is an efficient procedure to generate a given column of $S$ and further, $\Est$ is efficient. Obviously, a stream friendly sketch leads to a space and time efficient algorithm for estimating $\|\bx\|_1$ given a stream of entrywise updates to $\bx$. We shall use the following specialization of a result of Kane \etal~\cite{kanenw10}.

\begin{fact}[Kane \etal~\cite{kanenw10}] \label{fact:l1}
  There is a stream friendly $d$-dimensional $\ell_1$-sketch with accuracy $\eps$ and error $\delta$ that can handle $N^{O(1)}$ many $\pm1$-updates to $\bx \in \RR^N$, with each update taking $O(\eps^{-2} \log\eps^{-1} \log\delta^{-1} \log N)$ time, with $d = O(\eps^{-2} \log\delta^{-1})$, and with entries of the sketched vector fitting in $O(\log N)$ bits.
\end{fact}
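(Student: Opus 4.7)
The plan is to follow Indyk's approach based on $1$-stable (Cauchy) distributions, then layer on the efficient-sampling and derandomization techniques of Kane, Nelson, and Woodruff~\cite{kanenw10} to attain the claimed space and update-time bounds.

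First, I would exploit the $1$-stability property: if $C_1,\dots,C_N$ are i.i.d.\ standard Cauchy random variables and $\bx \in \RR^N$, then $\sum_{i} C_i x_i$ is distributed as $\|\bx\|_1 \cdot C$ for a standard Cauchy $C$. Thus, taking $S \in \RR^{d \times N}$ with i.i.d.\ Cauchy entries, each coordinate of $S\bx$ is $\|\bx\|_1$ times an independent standard Cauchy. Set $\Est(S\bx) := \mathrm{median}_{j\in[d]} |(S\bx)_j|$ divided by the median of $|C|$. A Chernoff-style analysis for the empirical median of $d$ i.i.d.\ samples from a distribution with continuous density at its median yields that $d = O(\eps^{-2}\log\delta^{-1})$ rows suffice for a $(1\pm\eps)$-approximation with probability at least $1-\delta$, matching the stated sketch dimension.

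Second, to fit each sketched coordinate in $O(\log N)$ bits despite receiving up to $N^{O(1)}$ unit updates, I would truncate each Cauchy sample to $\poly(N)$ precision and argue that the running value $(S\bx)_j$ stays polynomially bounded with high probability via a standard tail bound for sums of truncated Cauchys. The accumulated truncation error is then negligible at scale $\|\bx\|_1$, so the estimator's guarantee survives.

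Third, and this is the main obstacle, one cannot afford to store $dN$ independent Cauchy entries explicitly. The heart of the Kane--Nelson--Woodruff improvement is to generate the column $S_{\cdot,i}$ on demand from a short seed while preserving the analysis. I would use a limited-independence family (either $k$-wise or $\eps$-biased, as in~\cite{kanenw10}) to derive each entry of $S_{\cdot,i}$, observing that the median estimator's correctness hinges only on low-moment quantities of the sketch coordinates, which limited independence already fools. Combined with fast inversion of a discretized Cauchy CDF and a hash-based layout that produces an entire column in $\tO(\eps^{-2})$ time, the per-update cost becomes $O(\eps^{-2}\log\eps^{-1}\log\delta^{-1}\log N)$, as claimed. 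Correctness then follows by showing that the low-moment behavior of the derandomized sketch matches that of the fully independent Cauchy construction up to $(1\pm\eps)$ error, and then invoking the Cauchy median analysis from the first step.
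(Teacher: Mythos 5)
The paper does not prove this statement: it is imported verbatim as \Cref{fact:l1} and attributed to Kane, Nelson, and Woodruff~\cite{kanenw10}, with no argument given beyond the citation. There is therefore no ``paper proof'' to compare against; the most useful thing I can do is assess your sketch on its own terms.

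Your outline is directionally right --- Indyk-style $1$-stable (Cauchy) sketch, median-of-$d$-estimates with $d = O(\eps^{-2}\log\delta^{-1})$, truncation for bit complexity, and on-the-fly pseudorandom generation of columns --- but the derandomization step (your third point) is described by a mechanism that does not actually work. You assert that ``the median estimator's correctness hinges only on low-moment quantities of the sketch coordinates, which limited independence already fools.'' The Cauchy distribution has \emph{no} finite moments, and the correctness of the median estimator is a statement about the empirical CDF near the median (namely that each coordinate falls above or below $(1\pm\eps)\|\bx\|_1$ with probability $\approx \tfrac12 \mp \Theta(\eps)$), which is not a moment condition and is not automatically preserved by $k$-wise or $\eps$-biased independence. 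This is exactly the obstruction that Indyk's original construction handled with Nisan's PRG at a $\log$-factor cost in space, and it is precisely the technical heart of what Kane--Nelson--Woodruff had to overcome (e.g., via careful discretization of the stable distribution and estimators whose analysis \emph{can} be reduced to bounded-moment/anti-concentration statements, rather than the naive median). As written, your step 3 would not yield the claimed $d = O(\eps^{-2}\log\delta^{-1})$ space with the claimed update time; you would either need a fundamentally different derandomization argument or a different estimator. The first two steps and the truncation argument are fine as a sketch, but the sketch does not constitute a proof of the Fact without filling that gap.
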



\begin{theorem} \label{thm:fast-exp}
  There is a one-pass algorithm for \fast that uses $O(\eps^{-2} n \log^2 n)$ space and
  returns a $(1+\eps)$-approximation with probability at least $\frac23$, but requires exponential post-processing time.
\end{theorem}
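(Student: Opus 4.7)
The plan is to encode both the input tournament and each candidate ordering as $\pm 1$ vectors in a common space of dimension $N := \binom{n}{2}$, express the number of back edges as the $\ell_1$ distance between them, and then maintain a single linear $\ell_1$-sketch of the tournament that supports estimating this distance for every ordering simultaneously. Fix any canonical labelling of $V$. Encode $G$ as $\bg \in \{-1,+1\}^{N}$ indexed by unordered pairs, with $g_{\{u,v\}}=+1$ if the edge between $u$ and $v$ agrees with the canonical direction and $-1$ otherwise. For each ordering $\pi$ of $V$, let $\bh_\pi \in \{-1,+1\}^N$ encode the acyclic tournament $\Tou(\pi)$ in the same way. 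The key identity is
\[
  \|\bg - \bh_\pi\|_1 \;=\; 2\,|B_G(\pi)|\,,
\]
since the $\{u,v\}$-coordinates of $\bg$ and $\bh_\pi$ agree precisely when the edge between $u$ and $v$ is forward under $\pi$; matching coordinates contribute $0$ and mismatching ones contribute $2$.

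During the pass, the algorithm maintains a single $\ell_1$-sketch $S\bg$ supplied by \Cref{fact:l1} with accuracy $\eps$ and error parameter $\delta := 1/(3\cdot n!)$. Each streamed edge triggers exactly one $\pm 1$ update to a single coordinate of $\bg$ (no coordinate is ever updated twice because $G$ is a tournament), so the total number of updates is $N$, well within the sketch's $N^{O(1)}$ capacity. Substituting $\delta$ into \Cref{fact:l1} gives dimension $d = O(\eps^{-2}\log(n!)) = O(\eps^{-2} n\log n)$ with $O(\log N) = O(\log n)$ bits per entry, for a total of $O(\eps^{-2} n\log^2 n)$ bits of space, as required. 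In post-processing, for each of the $n!$ orderings $\pi$ we brute-force compute $S\bh_\pi$, form $S(\bg-\bh_\pi) = S\bg - S\bh_\pi$ by linearity, and call the estimator to obtain $\widehat L_\pi$. Output the ordering $\hat\pi$ minimising $\widehat L_\pi$ (or $\widehat L_{\hat\pi}/2$ for the \fassizt variant).

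Correctness is a union bound. With probability at least $1 - n!\cdot\delta = 2/3$, every $\widehat L_\pi$ lies in $(1\pm\eps)\|\bg-\bh_\pi\|_1 = (1\pm\eps)\cdot 2|B_G(\pi)|$ simultaneously. On that event, if $\pi^*$ attains $\best_G$,
\[
  2(1-\eps)\,|B_G(\hat\pi)| \;\le\; \widehat L_{\hat\pi} \;\le\; \widehat L_{\pi^*} \;\le\; 2(1+\eps)\,\best_G\,,
\]
so $|B_G(\hat\pi)| \le \tfrac{1+\eps}{1-\eps}\best_G \le (1+3\eps)\best_G$ for $\eps\le 1/2$; rescaling $\eps$ gives the advertised $(1+\eps)$-approximation. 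The case $\best_G = 0$ is handled automatically since $S$ maps the zero vector to zero.

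The conceptually important step is the identity $\|\bg-\bh_\pi\|_1 = 2|B_G(\pi)|$, which lets a single data-oblivious linear sketch of $\bg$ answer distance queries against every $\bh_\pi$. The only quantitative subtlety is checking that the union bound over $n!$ candidate orderings inflates the sketch dimension by just the logarithmic factor $\log(n!) = O(n\log n)$; this is exactly what keeps the space at $\tilde O(n/\eps^2)$ rather than $\tilde O(n^2/\eps^2)$. Everything else (per-edge update time, post-processing cost, and the pair-encoding bookkeeping) is routine.
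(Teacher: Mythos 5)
Your proposal is correct and takes essentially the same approach as the paper: encode the tournament as a vector in $\RR^{\binom{n}{2}}$, observe that $|B_G(\pi)|$ equals (up to a constant) the $\ell_1$ distance between that vector and the vector encoding $\Tou(\pi)$, maintain a single linear $\ell_1$-sketch with failure probability $1/(3\cdot n!)$, and union-bound over all $n!$ orderings in post-processing. The only cosmetic differences are your $\pm1$ encoding versus the paper's $0/1$ indicators (so the paper gets $\|\bx-\by^\pi\|_1 = |B_G(\pi)|$ exactly, with no factor of $2$) and the order in which the accuracy parameter is rescaled; one tiny nit is that $\frac{1+\eps}{1-\eps}\le 1+3\eps$ actually needs $\eps\le 1/3$ rather than $\eps\le 1/2$, but the subsequent rescaling absorbs this.
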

\begin{proof}
  Identify the vertex set of the input graph $G=(V,E)$ with $[n]$ and put $N = \binom{n}{2}$. We index vectors $\bz$ in $\RR^N$ as $z_{uv}$, where $1 \le u < v \le n$. Define a vector $\bx \in \b^N$ based on $G$ and vectors $\by^\pi \in \b^N$ for each permutation $\pi \colon [n] \to [n]$ using indicator variables as follows.
  \[
    x_{uv} = \indic\{(u,v) \in E\} \,, \quad
    y^\pi_{uv} = \indic\{\pi(u) < \pi(v)\} \,.
  \]
  A key observation is that the $uv$-entry of $\bx - \by^\pi$ is nonzero iff the edge between $u$ and $v$ is a back edge of $G$ according to the ordering $\pi$. Thus, $|B_G(\pi)| = \|\bx - \by^\pi\|_1$. 
  
  Our algorithm processes the graph stream by maintaining an $\ell_1$-sketch $S\bx$ with accuracy $\eps/3$ and error $\delta = 1/(3\cdot n!)$. By \Cref{fact:l1}, this takes $O(\eps^{-2} n \log^2 n)$ space and $O(\eps^{-2} \log\eps^{-1} n \log^2 n)$ time per edge.
  
  In post-processing, the algorithm considers all $n!$ permutations $\pi$ and, for each of them, computes $S(\bx - \by^\pi) = S\bx - S\by^\pi$. It thereby recovers an estimate for $\|\bx - \by^\pi\|_1$ and finally outputs the ordering $\pi$ that minimizes this estimate. By a union bound, the probability that every estimate is $(1\pm\eps/3)$-accurate is at least $1 - n!\cdot\delta = 2/3$. When this happens, the output ordering provides a $(1+\eps)$-approximation to \fast by our key observation above.
\end{proof}

Despite its ``brute force'' feel, the above algorithm is essentially optimal, both in its space usage (unconditionally) and its post-processing time (in a sense we shall make precise later). We address these issues in \Cref{sec:fast-oracle-lbs}.

\subsection{Multiple Passes: FAS-T in Polynomial Time}
\label{sec:fast-multi-pass}
\ifsoda \mbox{}\smallskip 

\noindent
\fi
For a more time-efficient streaming algorithm, we design one based on the  \textsc{KwikSort} algorithm of Ailon et al.~\cite{AilonCN08}. This (non-streaming) algorithm operates as follows on a tournament $G=(V,E)$.
\begin{itemize}[itemsep=1pt]
    \item Choose a random ordering of the vertices: $v_1, v_2, \ldots , v_n$.
    \item Vertex $v_1$ partitions $V$ into two sub-problems $\{u:\, (u,v_1)\in E\}$ and $\{w:\, (v_1,w)\in E\}$. 
    At this point we know the exact place of $v_1$ in the ordering.
    \item Vertex $v_2$ further partitions one of the these sub-problems. Proceeding in this manner, after $v_1, v_2, \ldots, v_i$ are considered, there are $i+1$ sub-problems.
    \item Continue until all $n$ vertices are ordered.
\end{itemize}
When $v_i$ is being used to divide a sub-problem we refer to it as a \emph{pivot}.

\pparagraph{Emulating \textsc{KwikSort} in the Data Stream Model}
We will emulate \textsc{KwikSort} in $p$ passes over the data stream. In each pass, we will consider the action of multiple pivots. Partition $v_1, \ldots , v_n$ into $p$ groups $V_1, \ldots, V_p$, where $V_1 = \{v_1, \ldots, v_{cn^{1/p}\log n}\}$, $V_2$ consists of the next $cn^{2/p}\log n$ vertices in the sequence, and $V_j$ contains $cn^{j/p} \log n$ vertices coming after $V_{j-1}$. Here $c$ is a sufficiently large constant. 
At the end of pass $j+1$, we want to emulate the effect of pivots in $V_{j+1}$ on the sub-problems resulting from considering pivots in $V_1$ through $V_{j}$. In order to do that, in pass $j+1$ for each vertex $v \in V_{j+1}$ we store all edges between $v$ and vertices in the same sub-problem as $v$, where the sub-problems are defined at the end of pass $j$.

The following combinatorial lemma plays a key role in  analyzing this algorithm's space usage.

\begin{lemma}[Mediocrity Lemma] \label{lemma:division}
In an $n$-vertex tournament, if we pick a vertex $v$ uniformly at random, then \mbox{$\Pr[\eps n < \din(v) < (1-\eps)n] \ge 1-4\eps$}. Similarly, \mbox{$\Pr[\eps n < \dout(v) < (1-\eps)n] \ge 1-4\eps$}. In particular, with probability at least $1/3$, $v$ has in/out-degree between $n/6$ and $5n/6$.\footnote{
The Mediocrity Lemma is tight: consider sets of vertices $A, B,C$ where $|A|=|C|=2\eps n$ and $|B|=(1-4\eps)n$. Edges on $B$ do not form any directed cycles. Subgraphs induced by $A$ and $C$ are balanced, i.e., the in-degree equals the out-degree of every vertex (where degrees here are considered within the subgraph). All other edges are directed from $A$ to $B$, from $B$ to $C$, or from $A$ to $C$. Then vertices with in/out-degrees between $\eps n$ and $(1-\eps)n$ are exactly the vertices in $B$, and a random vertex belongs to this set with probability $1-4\eps$.
}
\end{lemma}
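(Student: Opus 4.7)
The plan is to use a simple double-counting argument on sums of in-degrees over ``small-in-degree'' and ``large-in-degree'' vertex sets, and then invoke symmetry for the out-degree statement.

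First, let $S = \{v \in V : \din(v) \le \eps n\}$ and $T = \{v \in V : \din(v) \ge (1-\eps)n\}$, with $s = |S|$ and $t = |T|$. The goal reduces to showing $s + t \le 4\eps n$, since then a uniformly random $v$ satisfies $\eps n < \din(v) < (1-\eps)n$ with probability at least $1 - (s+t)/n \ge 1 - 4\eps$.

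To bound $s$, observe that edges inside $S$ contribute to the in-degrees of $S$-vertices, so
\[
  \binom{s}{2} \le \sum_{v \in S} \din(v) \le s \cdot \eps n,
\]
which gives $s \le 2\eps n + 1$. To bound $t$, I would count edges with head in $T$: every such edge is either internal to $T$ (contributing $\binom{t}{2}$ in total) or comes from $V \setminus T$ (at most $t(n-t)$ such edges). Combining with the lower bound on in-degrees in $T$,
\[
  t(1-\eps)n \;\le\; \sum_{v \in T} \din(v) \;\le\; \binom{t}{2} + t(n-t) \;=\; t\!\left(n - \tfrac{t+1}{2}\right),
\]
and dividing by $t$ (if $t>0$) yields $t \le 2\eps n - 1$. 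Adding the two bounds gives $s + t \le 4\eps n$, as required.

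For the out-degree statement, I would simply note that reversing every edge of a tournament produces another tournament in which each vertex's out-degree becomes the original in-degree; the identical argument then yields $\Pr[\eps n < \dout(v) < (1-\eps)n] \ge 1 - 4\eps$. (Equivalently, use $\dout(v) = n - 1 - \din(v)$ to translate the in-degree bound.) Finally, instantiating $\eps = 1/6$ in either statement gives $\Pr[n/6 < \din(v) < 5n/6] \ge 1/3$, and the same for $\dout(v)$.

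The main obstacle is really just getting the edge-counting bound on $\sum_{v \in T} \din(v)$ correct: one has to be careful that internal $T$-edges are counted once (each such edge has its head in $T$), while edges from $V \setminus T$ to $T$ contribute at most $t(n-t)$. Once that identity is set up, everything else is arithmetic.
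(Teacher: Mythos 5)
Your proof is correct and rests on the same double-counting idea as the paper's. The paper bounds only the set $H=\{v : \din(v)\ge (1-\eps)n\}$ via $\sum_{v\in H}\din(v)\le\binom{h}{2}+h(n-h)$ (this is identical to your bound on $T$) and then invokes symmetry---reversing the tournament---to conclude that the set with $\dout(v)\ge(1-\eps)n$ is also small, deducing the final bound from the identity $\din(v)+\dout(v)=n-1$. You instead bound the low-in-degree tail $S=\{v:\din(v)\le\eps n\}$ directly via a second, mirrored count, $\binom{s}{2}\le\sum_{v\in S}\din(v)\le s\eps n$. Both routes are valid and of comparable length; your direct bound on $S$ avoids the small off-by-one slippage in translating between $\din$ and $\dout$ thresholds that the paper's symmetry step glosses over, and the $\pm1$ slacks in your two bounds conveniently cancel to give exactly $s+t\le 4\eps n$. (Both proofs, like the lemma statement, implicitly assume $\eps$ is not so tiny that $S$ or $T$ would be forced nonempty for trivial reasons; this does not affect the $\eps=1/6$ instantiation that the paper uses.)
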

\begin{proof}
Let $H$ be a set of vertices of in-degree at least $(1-\eps)n$. Let $h = |H|$. On the one hand, $\sum_{v \in H} \din(v) \ge (1-\eps)nh$. On the other hand, the edges that contribute to the in-degrees of vertices in $H$ have both endpoints in $H$ or one endpoint in $H$ and one in $V \setminus H$. The number of such edges is
\[
  \sum_{v \in H} \din(v) \le \binom{h}{2} + h(n-h)
  = \frac{1}{2}(2nh - h^2 - h) \,.
\]
Therefore, $(2nh - h^2 - h)/2 \ge (1-\eps)nh$. This implies $h < 2\eps n$.

Thus, the number of vertices with in-degree at least $(1-\eps)n$ (and out-degree at most $\eps n$) is 
$h < 2\eps n$.
By symmetry, the number of vertices with out-degree at least $(1-\eps)n$ (and in-degree at most $\eps n$) is also less than $2\eps n$. Thus, the probability a random vertex has in/out-degree between $\eps n$ and $(1-\eps)n$ is
$(n-2h)/{n} > (n-2 \cdot 2\eps n)/{n} = 1-4\eps$.
\end{proof}

\pparagraph{Space Analysis}
Let $M_{j}$ be the maximum size of a sub-problem after pass $j$.
The number of edges collected in pass $j+1$ is then at most $M_{j} |V_{j+1}|$. By \Cref{lem:space} (below), this is at most $cn^{1+1/p}\log n$. Once the post-processing of pass $j+1$ is done, the edges collected in that pass can be discarded. 

\begin{lemma}\label{lem:space}
With high probability,  $M_{j}\leq n^{1-j/p}$ for all $j$.
\end{lemma}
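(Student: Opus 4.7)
The plan is to prove this by induction on $j$. The base case $j = 0$ is immediate, since all $n$ vertices form a single sub-problem of size $n = n^{1-0/p}$. For the inductive step, I will condition on the high-probability event $M_{j-1} \leq n^{1-(j-1)/p}$, fix a sub-problem $S$ present at the start of pass $j$ with $|S| = s \leq M_{j-1}$, and fix a distinguished vertex $u \in S$. Let $T_u^{(i)}$ denote the sub-problem containing $u$ after processing the first $i$ pivots of $V_j$ in a uniformly random order within $V_j$; since the end-of-pass sub-problem structure does not depend on the processing order, this is purely an analysis convenience. The sub-goal becomes showing $|T_u^{(|V_j|)}| \leq n^{1-j/p}$ with probability at least $1 - n^{-\Omega(1)}$, after which a union bound over the at most $n$ choices of $u$, the at most $n$ choices of $S$, and the $p$ passes closes the induction.

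For the shrinking argument, call step $i+1$ a \emph{shrinking step} if $v^{(i+1)} \in T_u^{(i)}$ and $v^{(i+1)}$ is mediocre in the sub-tournament on $T_u^{(i)}$, so that $|T_u^{(i+1)}| \leq (5/6)\,|T_u^{(i)}|$. Since $v^{(i+1)}$ is uniform over $V \setminus U_{j-1}$ modulo the already-used pivots of $V_j$, the probability it lies in $T_u^{(i)}$ is at least $|T_u^{(i)}|/n$, and, conditional on landing there, the Mediocrity Lemma applied to $T_u^{(i)}$ makes $v^{(i+1)}$ mediocre with probability at least $1/3$. Because $T_u^{(\cdot)}$ is monotonically non-increasing, the only way to end pass $j$ with $|T_u^{(|V_j|)}| > n^{1-j/p}$ is for $|T_u^{(i)}| > n^{1-j/p}$ throughout, in which case the per-step shrinking probability is uniformly at least $(1/3)\,n^{-j/p}$.

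The number of shrinking steps needed to push $|T_u|$ below $n^{1-j/p}$ is $K := \lceil \log_{6/5}(s/n^{1-j/p}) \rceil = O(\log n / p)$, using $s \leq n^{1-(j-1)/p}$. Over the $|V_j| = cn^{j/p}\log n$ steps of pass $j$, the expected number of shrinking steps in this ``bad'' regime is therefore at least $(c/3)\log n$, which exceeds $2K$ once the absolute constant $c$ is taken large enough. I will then couple the process monotonically with i.i.d.\ $\mathrm{Bernoulli}((1/3)\,n^{-j/p})$ trials (valid while in the bad regime, where the per-step probability has a deterministic lower bound) and apply a standard Chernoff tail bound to conclude that $\Pr[|T_u^{(|V_j|)}| > n^{1-j/p}] \leq n^{-\Omega(c)}$; picking $c$ sufficiently large absorbs the $O(n^2 p)$ union bound.

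I expect the main technical obstacle to be justifying this coupling cleanly, since the success probability at each step depends on the random history $T_u^{(0)}, \ldots, T_u^{(i-1)}$. The key observation that makes it go through is that we only need to control histories in which $|T_u^{(i)}|$ never drops below $n^{1-j/p}$, because monotonicity of $T_u^{(\cdot)}$ forces all ``bad'' histories to lie in that regime; on such histories the per-step success probability admits a uniform deterministic lower bound, which is exactly what the coupling to i.i.d.\ Bernoullis needs.
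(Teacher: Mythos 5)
Your proposal is correct and matches the paper's proof in all essentials: fix a target vertex, call a pivot ``good'' (your ``shrinking step'') if it lands in that vertex's sub-problem and is mediocre there, lower-bound the per-pivot success probability by $n^{-j/p}/3$ while the sub-problem exceeds $n^{1-j/p}$, and apply a Chernoff bound to the $\mathrm{Bin}(cn^{j/p}\log n, n^{-j/p}/3)$ count of good pivots. The only cosmetic difference is that you wrap the argument in an induction on $j$ and condition on $M_{j-1}\le n^{1-(j-1)/p}$ to get the tighter step count $K=O(\log n/p)$, whereas the paper skips the induction entirely and uses the crude but sufficient bound $\log_{6/5} n$ on the number of shrinking steps needed (since any sub-problem has size at most $n$), which makes the per-pass analysis self-contained.
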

\begin{proof}
  Let $M_j^v$ denote the size of the sub-problem that contains $v$, after the $j$th pass. We shall prove that, for each $v$, $\Pr[M_j^v > n^{1-j/p}] \le 1/n^{10}$. The lemma will then follow by a union bound.
  
  Take a particular vertex $v$. If, before the $j$th pass, we already have $M_{j-1}^v \le n^{1-j/p}$, there is nothing to prove. So assume that $M_{j-1}^v > n^{1-j/p}$.
  Call a pivot ``good'' if it reduces the size of the sub-problem containing $v$ by a factor of at least $5/6$.
  A random pivot falls in the same sub-problem as $v$ with probability at least $n^{1-j/p}/n$; when this happens, by the Mediocrity Lemma, the  probability that the pivot is good is at least $1/3$. Overall, the probability that the pivot is good is at least $n^{-j/p}/3$.

  In the $j$th pass, we use $cn^{j/p}\log n$ pivots. If at least $\log_{6/5} n$ of them are good, we definitely have $M_j^v \le n^{1-j/p}$. Thus, by a Chernoff bound, for a sufficiently large $c$, we have
  \ifsoda
  \begin{align*}
    \Pr&\left[M^v_{j} > n^{1-j/p}\right] \\
    &\leq \Pr\left[\Bin\left(cn^{j/p}\log n,\, n^{-j/p}/3\right) < \log_{6/5} n\right] \\
    &\leq 1/n^{10} \,.
  \end{align*}
  \vskip-\lastskip\hfill
  \else
  \[
    \Pr\left[M^v_{j} > n^{1-j/p}\right]
    \leq \Pr\left[\Bin\left(cn^{j/p}\log n,\, n^{-j/p}/3\right) < \log_{6/5} n\right]
    \leq 1/n^{10} \,.
    \qedhere
  \]
  \fi
\end{proof}

\begin{theorem}
There exists a polynomial time $p$-pass data stream algorithm using $\tO(n^{1+1/p})$ space that returns a $3$-approximation (in expectation) for \fast.
\end{theorem}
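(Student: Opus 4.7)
The plan is to verify that the streaming emulation described above faithfully reproduces the output of \textsc{KwikSort} on a uniformly random initial ordering, and then to invoke the known guarantees of \textsc{KwikSort} together with the space bound from \Cref{lem:space}.

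\textbf{Correctness of the emulation.} First I would argue that after $p$ passes the algorithm has determined the same ordering that non-streaming \textsc{KwikSort} would have produced, given the fixed random permutation $v_1,\dots,v_n$ used to define $V_1,\dots,V_p$. The invariant to maintain is: at the end of pass $j$, the algorithm knows the entire tree of sub-problems resulting from using $v_1,\dots,v_{|V_1|+\cdots+|V_j|}$ as pivots (in this exact order). This is preserved inductively: at the start of pass $j+1$, for every $v \in V_{j+1}$ the sub-problem containing $v$ is already known, so the algorithm can flag the ``interesting'' edges (those between $v$ and its current sub-problem) on the fly during the pass and buffer only these. After the pass, it processes the pivots in $V_{j+1}$ in their original sequential order; since for each pivot it has all edges incident to it within its current sub-problem, it can replicate the exact splitting that \textsc{KwikSort} would perform. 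After pass $p$, every vertex is in a singleton sub-problem, so the ordering is fully determined and identical to \textsc{KwikSort}'s output.

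\textbf{Approximation ratio.} Given the fidelity of the emulation, the output is distributed identically to \textsc{KwikSort}'s output on a uniformly random permutation of the vertices. Ailon, Charikar and Newman~\cite{AilonCN08} showed that \textsc{KwikSort} returns an ordering whose expected number of back edges is at most $3\,\best_G$ on any tournament. I would just cite this and conclude the $3$-approximation in expectation.

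\textbf{Space and passes.} The pass count is $p$ by construction. For the space, in pass $j+1$ we buffer at most $M_j \cdot |V_{j+1}|$ edges, which by \Cref{lem:space} is at most $n^{1-j/p} \cdot c\,n^{(j+1)/p}\log n = c\,n^{1+1/p}\log n$ with high probability. Each edge takes $O(\log n)$ bits. The stored sub-problem structure across passes is an explicit partition of $V$, taking $O(n\log n)$ additional bits. On the low-probability failure event of \Cref{lem:space}, the algorithm can abort and output an arbitrary ordering, which affects the expected cost by only an additive $O(n^2/n^{10})$ and so is absorbed.

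\textbf{Running time.} Processing each edge in a pass requires an $O(\log n)$ sub-problem lookup, and the post-processing between passes is a sequence of pivot partitions on the buffered edges, which is linear in the buffer size. Everything is polynomial in $n$. The main conceptual obstacle is the fidelity argument above: one must be careful that the within-pass pivots of $V_{j+1}$ are applied in the correct sequential order and only to their current sub-problems, but with the buffering rule stated this is straightforward.
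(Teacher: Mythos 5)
Your proof is correct and follows the same approach the paper uses: space from \Cref{lem:space}, approximation ratio from Ailon et al.'s analysis of \textsc{KwikSort}, with the fidelity of the emulation spelled out in more detail than the paper's terse proof. The only minor quibble is that aborting on the failure event of \Cref{lem:space} technically yields an additive $n^{-\Omega(1)}$ slack in the expectation rather than a clean multiplicative $3$, which matters in the degenerate case $\best_G = 0$; but the paper glosses over this too, and it can be handled by having the abort branch fall back to sorting by in-degree (giving a $5$-approximation deterministically), which keeps the overall bound at $3+o(1)$.
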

\begin{proof}
The pass/space tradeoff follows from Lemma~\ref{lem:space} and the discussion above it; the approximation factor follows directly from the analysis of Ailon et al. \cite{AilonCN08}.
\end{proof}

\subsection{A Space Lower Bound}
\label{sec:acyct-space-lb}
\ifsoda \mbox{}\smallskip 

\noindent
\fi
Both our one-pass algorithm and the $O(\log n)$-pass instantiation of our multi-pass algorithm use at least $\Omega(n)$ space. For \fassizt, where the desired output is a just a number, it is reasonable to ask whether $o(n)$-space solutions exist. We now prove that they do not.

\begin{proposition} \label{thm:acyc-t-lb}
  Solving \acyct is possible in one pass and $O(n\log n)$ space. Meanwhile, any $p$-pass solution requires $\Omega(n/p)$ space.
\end{proposition}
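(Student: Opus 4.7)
For the upper bound, my plan is to exploit the fact that a tournament $T$ on $n$ vertices is acyclic if and only if its in-degree sequence is exactly the multiset $\{0,1,\ldots,n-1\}$. The forward direction is immediate, since a transitive tournament admits an ordering in which the $i$th vertex has in-degree $i-1$. For the converse, the unique vertex $v$ of in-degree $0$ must dominate every other vertex via outgoing edges (else $v$ would have a nonzero in-degree), and deleting $v$ yields a tournament on $n-1$ vertices whose in-degrees form $\{0,1,\ldots,n-2\}$, so induction closes the argument. The one-pass algorithm then simply maintains the in-degree of every vertex using $O(n\log n)$ bits and, at the end of the stream, accepts iff the resulting sequence is a permutation of $\{0,1,\ldots,n-1\}$.

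For the lower bound, I would reduce from $\disj_{N}$ with $N=\lfloor n/3\rfloor$. Given inputs $\bx,\by\in\{0,1\}^N$, the plan is to build $N$ disjoint triangle gadgets on vertex triples $\{a_i,b_i,c_i\}_{i\in[N]}$, with any leftover vertices added as universal sinks. All cross-gadget edges are oriented from lower-indexed to higher-indexed gadgets (so no cycle can cross gadgets), and within gadget $i$ the edge $b_i\to c_i$ is always present. Alice inserts $a_i\to b_i$ into the stream when $x_i=1$ and $b_i\to a_i$ otherwise; Bob inserts $c_i\to a_i$ when $y_i=1$ and $a_i\to c_i$ otherwise. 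Since both parties know the gadget structure, all fixed edges can be inserted by either player without any communication, so the two streams concatenate to a valid tournament stream.

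A short case check confirms that gadget $i$ is cyclic, via the triangle $a_i\to b_i\to c_i\to a_i$, iff $x_i=y_i=1$; in every other case gadget $i$ is a transitive triangle. Because cross-gadget edges only point forward, no cycle can span multiple gadgets, so the constructed tournament is acyclic iff $\bx$ and $\by$ are disjoint. The standard communication--streaming simulation then converts a $p$-pass $S$-space algorithm for \acyct into a randomized protocol for $\disj_N$ exchanging $O(pS)$ bits, so \Cref{fact:comm-lbs} forces $pS=\Omega(N)=\Omega(n)$, yielding $S=\Omega(n/p)$.

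The only real thing to double-check will be the gadget verification: the three within-gadget edges must cover the pairs $\{a_i,b_i\}$, $\{b_i,c_i\}$, $\{a_i,c_i\}$ exactly once (they do, by construction), and the "lower-indexed to higher-indexed" rule for cross-gadget edges must be consistent with the within-gadget assignments (it is, since the rule never touches within-gadget pairs). With these routine checks in place, the reduction is clean and the claimed $\Omega(n/p)$ bound follows.
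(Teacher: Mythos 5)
Your upper bound is the same as the paper's: maintain in-degrees and check that they form the set $\{0,1,\dots,n-1\}$. For the lower bound, you also reduce from set disjointness, but via a genuinely different and cleaner construction. The paper builds a single tournament on $n=7N/3$ vertices $\{v_1,\dots,v_{2N},w_1,\dots,w_{N/3}\}$: it starts from a transitive base ordering, has Alice flip the edges $(v_{2i-1},v_{2i})$ for $i\in\bx$, and has Bob splice the ``witness'' vertices $w_\ell$ into positions indexed by $\by$, so that a flipped pair together with an inserted witness forms a 3-cycle exactly when $i\in\bx\cap\by$; verifying acyclicity in the ``disjoint'' case requires a small argument about modifying a topological order. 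Your construction instead uses $N$ disjoint triangle gadgets $\{a_i,b_i,c_i\}$ with all cross-gadget edges pointing from lower to higher index, which localizes everything: cycles can only live inside a single gadget, and the case check that gadget $i$ is cyclic iff $x_i=y_i=1$ is immediate. The modularity makes your proof shorter and the correctness argument essentially trivial, at the cost of nothing; both constructions are tournaments, both give the $\Omega(N)=\Omega(n)$ communication bound, and the standard $p$-pass simulation yields $\Omega(n/p)$ space in each case. Two minor points to tidy: (i) invoke the lower bound as $\R_{1/3}(\disj_{N,N/3})=\Omega(N)$ as stated in \Cref{fact:comm-lbs}, noting that a protocol for general $\disj_N$ in particular solves the promise version (or just run your reduction on promise instances directly); and (ii) ``universal sinks'' for the $\le 2$ leftover vertices is slightly loose since a tournament has a unique sink---just say you place them at the end of a fixed total order so they cannot lie on any cycle.
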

\begin{proof}
  For the upper bound, we maintain the in-degrees of all vertices in the input graph $G$. Since $G$ is a tournament, the set of in-degrees is exactly $\{0,1,\ldots,n-1\}$ iff the input graph is acyclic.
  
  For the lower bound, we reduce from $\disj_{N,N/3}$. Alice and Bob construct a tournament $T$ on $n = 7N/3$ vertices, where the vertices are labeled $\{v_1, \ldots, v_{2N}, w_1, \ldots, w_{N/3}\}$. Alice, based on her input $\bx$, adds edges $(v_{2i}, v_{2i-1})$ for each $i \in \bx$. For each remaining pair $(i,j) \in [2N]\times [2N]$ with $i<j$, she adds the edge $(v_i,v_j)$.
  Let $a_1 < \cdots < a_{N/3}$ be the sorted order of the elements in Bob's set $\by$. For each $k = a_\ell \in \by$, Bob defines the alias $v_{2N+k} = w_\ell$ and then adds the edges
  \ifsoda
  \begin{align*}
    E_k = \,&\{(v_i, v_{2N+k}):\, 1 \le i \le 2k-1\}
    ~\cup \\
    & \{(v_{2N+k}, v_j):\, 2k \le j \le 2N\} \,.
  \end{align*}
  \else
  \[
    E_k = \{(v_i, v_{2N+k}):\, 1 \le i \le 2k-1\}
    \cup \{(v_{2N+k}, v_j):\, 2k \le j \le 2N\} \,.
  \]
  \fi
Finally, he adds the edges $\{(w_i, w_j):\, 1 \le i < j \le N/3\}$. This completes the construction of $T$.

We claim that the tournament $T$ is acyclic iff $\bx \cap \by = \varnothing$. The ``only if'' part is direct from construction, since if $\bx$ and $\by$ intersect at some index $k \in [N]$, we have the directed cycle $(v_{2k},v_{2k-1},v_{2N+k},v_{2k})$.
For the ``if'' part, let $\sigma$ be the ordering $(v_1,\ldots,v_{2N})$ and let $T' = \Tou(\sigma)$, as defined in \Cref{sec:prelim}. We show how to modify $\sigma$ into a topological ordering of $T$, proving that $T$ is acyclic. Observe that, by construction, the tournament $T\setminus  \{w_1,\ldots,w_{N/3}\}$ can be obtained from $T'$ by flipping only the edges $(v_{2i-1},v_{2i})$ for each $i \in \bx$. Each time we perform such an edge flip, we modify the topological ordering of $T'$ by swapping the associated vertices of the edge. The resultant ordering would still be topological as the vertices were consecutive in the ordering before the flip. Thus, after performing these swaps, we get a topological ordering of  $T\setminus \{w_1,\ldots,w_{N/3}\}$. Now, consider some $k \in \by$. Since $\bx \cap \by = \varnothing$, $k \notin \bx$ and so, $v_{2k}$ succeeds $v_{2k-1}$ in this ordering, just as in $\sigma$, since we never touched these two vertices while performing the swaps. Thus, for each such $k$, we can now insert $v_{2N+k}$ between $v_{2k-1}$ and $v_{2k}$ in the ordering and obtain a topological ordering of $T$. This proves the claim.

Thus, given a $p$-pass solution to \acyct using $s$ bits of space, we obtain a protocol for $\disj_{N,N/3}$ that communicates at most $(2p-1)s$ bits. By \Cref{fact:comm-lbs}, $(2p-1)s = \Omega(N) = \Omega(n)$, i.e., $s = \Omega(n/p)$.
\end{proof}

\begin{theorem} \label{cor:fast-space-lb}
A $p$-pass multiplicative approximation for \fassizt requires $\Omega(n/p)$ space.
\end{theorem}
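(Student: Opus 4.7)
The plan is that this theorem is essentially a direct corollary of two results already established in the excerpt: \Cref{obs:fas} and \Cref{thm:acyc-t-lb}. First I would invoke \Cref{obs:fas}, which observes that any multiplicative approximation algorithm for \fassizt entails solving \acyct. The reasoning is clean: if $\hat\best$ is an $\alpha$-approximation satisfying $\best_G \le \hat\best \le \alpha\, \best_G$, then $\hat\best = 0$ iff $\best_G = 0$ iff $G$ is acyclic. So an algorithm solving multiplicative-approximation \fassizt can be turned into one solving \acyct by outputting YES iff the returned estimate is $0$, with no extra space overhead.

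Next I would apply the lower bound of \Cref{thm:acyc-t-lb}, which we have just proved via a reduction from $\disj_{N,N/3}$: any $p$-pass algorithm for \acyct requires $\Omega(n/p)$ bits of space. Combining the two, a $p$-pass $s$-space multiplicative approximation algorithm for \fassizt yields a $p$-pass $s$-space algorithm for \acyct, which forces $s = \Omega(n/p)$.

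There is no substantive obstacle here; the bulk of the work was done in proving \Cref{thm:acyc-t-lb}, where the crux was constructing the tournament $T$ on $\Theta(N)$ vertices from the $\disj_{N,N/3}$ instance so that $T$ is acyclic precisely when the sets are disjoint. The only small thing worth stating carefully in the write-up is the ``$\hat\best = 0 \Leftrightarrow \best_G = 0$'' observation, so that the reduction is visibly space-preserving and contributes no additive overhead.
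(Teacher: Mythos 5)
Your proposal is correct and matches the paper's proof exactly: the theorem is stated to be immediate from \Cref{obs:fas} and \Cref{thm:acyc-t-lb}, which is precisely the two-step reduction you describe. Your added remark spelling out why $\hat\best = 0 \Leftrightarrow \best_G = 0$ makes the reduction space-preserving is a reasonable clarification but not a departure.
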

\begin{proof}
  This is immediate from \Cref{obs:fas,thm:acyc-t-lb}.
\end{proof}

\subsection{An Oracle Lower Bound}
\label{sec:fast-oracle-lbs}
\ifsoda \mbox{}\smallskip 

\noindent
\fi
Let us now consider the nature of the post-processing performed by our one-pass \fast algorithm in \Cref{sec:fast-one-pass}. During its streaming pass, that algorithm builds an {\em oracle} based on $G$ that, when queried on an ordering $\sigma$, returns a fairly accurate estimate of $|B_G(\sigma)|$. It proceeds to query this oracle $n!$ times to find a good ordering. This raises the question: is there a more efficient way to exploit the oracle that the algorithm has built? A similar question was asked in Bateni et al.~\cite{BateniEM17} in the context of using sketches for the maximum coverage problem.

Were the oracle {\em exact}---i.e., on input $\sigma$ it returned $|B_G(\sigma)|$ exactly---then two queries to the oracle would determine which of $(i,j)$ and $(j,i)$ was an edge in $G$. It follows that $O(n\log n)$ queries to such an exact oracle suffice to solve \fast and \fassizt. However, what we actually have is an {\em $\eps$-oracle}, defined as one that, on query $\sigma$, returns $\hat\best \in \RR$ such that $(1-\eps)|B_G(\sigma)| \le \hat\best \le (1+\eps)|B_G(\sigma)|$. We shall show that an $\eps$-oracle cannot be exploited efficiently: a randomized algorithm will, with high probability, need exponentially many queries to such an oracle to solve either \fast or \fassizt.

To prove this formally, we consider two distributions on $n$-vertex tournaments, defined next.


\begin{definition} \label{def:tou-dists}
  Let $\Dyes, \Dno$ be distributions on tournaments on $[n]$ produced as follows. To produce a sample from $\Dyes$, pick a permutation $\pi$ of $[n]$ uniformly at random; output $\Tou(\pi)$. To produce a sample from $\Dno$, for each $i,j$ with $1 \le i < j \le n$, independently at random, include edge $(i,j)$ with probability $\frac12$; otherwise include edge $(j,i)$.
\end{definition}

Let $\sigma$ be an ordering of $[n]$. By linearity of expectation, if $T$ is sampled from either $\Dyes$ or $\Dno$,
\[
  \EE |B_T(\sigma)| = m := \frac12 \binom{n}{2} \,.
\]
In fact, we can say much more.
\begin{lemma} \label{lem:back-edge-conc}
  There is a constant $c$ such that, for all $\eps > 0$, sufficiently large $n$, a fixed ordering $\sigma$ on $[n]$, and random $T$ drawn from either $\Dyes$ or $\Dno$,
  \[
    \Pr\left[ (1-\eps)m < |B_T(\sigma)| < (1+\eps)m \right] \ge 1 - 2^{-c\eps^2 n} \,.
  \]
\end{lemma}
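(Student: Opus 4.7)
The plan is to handle the two distributions separately, since the randomness is quite different in each case.

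The $\Dno$ case is essentially immediate. For each unordered pair $\{i,j\}$ with $i<j$, exactly one of the two orientations contributes to $B_T(\sigma)$, and by the definition of $\Dno$ the choice is an independent fair coin flip. Hence $|B_T(\sigma)| = \sum_{i<j} Y_{ij}$ where $Y_{ij} \sim \text{Bernoulli}(1/2)$ are mutually independent with sum having mean $m$. A standard Chernoff bound gives $\Pr[\,|\,|B_T(\sigma)|-m\,| > \eps m\,] \le 2e^{-\eps^2 m/3} \le 2e^{-c\eps^2 n^2}$, which is much stronger than needed.

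The $\Dyes$ case requires a bit more work because the entries of $T = \Tou(\pi)$ are not independent. First I would note that, by substituting $\rho = \pi \circ \sigma^{-1}$, the quantity $|B_T(\sigma)|$ equals the number of inversions $\#\{a<b:\rho(a)>\rho(b)\}$ of the uniformly random permutation $\rho$, independent of the particular $\sigma$. The key step is then the standard \emph{inversion-table} decomposition: for a uniformly random $\rho$ on $[n]$, the counts $X_i = \#\{j<i:\rho(j)>\rho(i)\}$ are \emph{independent}, with $X_i$ uniformly distributed on $\{0,1,\ldots,i-1\}$, and $\sum_{i=1}^n X_i$ equals the total number of inversions. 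Thus $|B_T(\sigma)|$ is realized as a sum of independent bounded random variables with mean $m = \sum_i (i-1)/2$.

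Now Hoeffding's inequality applies directly:
\[
  \Pr\!\left[\,|\,|B_T(\sigma)| - m\,| > \eps m\right]
  \;\le\; 2\exp\!\left(-\frac{2(\eps m)^2}{\sum_{i=1}^n (i-1)^2}\right)
  \;\le\; 2\exp\!\left(-\frac{2\eps^2 m^2}{n^3/3}\right),
\]
and since $m = \Theta(n^2)$ this simplifies to $2e^{-c\eps^2 n}$ for an absolute constant $c$ and $n$ large enough, matching the statement. The $\Dyes$ case dominates, so combining the two cases with the same constant $c$ finishes the proof. The only nontrivial ingredient is recognising that the number of inversions of a uniform permutation decomposes as a sum of independent bounded variables; once that is in hand, everything else is a one-line application of Chernoff/Hoeffding.
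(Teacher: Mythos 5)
Your proof is correct, and for the $\Dyes$ case it takes a genuinely different route from the paper. The paper decomposes the edge set of $K_n$ into $n-1$ perfect matchings $M_1, \ldots, M_{n-1}$, observes that the number of back edges inside each matching is distributed as $\Bin(n/2, 1/2)$ (because the relative orders under a uniform $\pi$ of the $n/2$ disjoint pairs in a matching are mutually independent fair coins), concentrates each of those $n-1$ binomials, and union-bounds. You instead reduce to counting inversions of the uniform permutation $\rho = \pi \circ \sigma^{-1}$ and invoke the Lehmer-code / inversion-table decomposition, which expresses the inversion count as a sum of $n$ independent random variables with $X_i$ uniform on $\{0,\dots,i-1\}$, and then apply Hoeffding once. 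Both give $2^{-\Theta(\eps^2 n)}$, since in your calculation $\sum_i (i-1)^2 = \Theta(n^3)$ and $m = \Theta(n^2)$, so the exponent is $\Theta(\eps^2 n^4 / n^3) = \Theta(\eps^2 n)$. Your approach has two small advantages: it needs only a single concentration inequality rather than a union bound over $n-1$ events, and it sidesteps the (unstated) parity issue that a 1-factorization of $K_n$ into perfect matchings exists only for even $n$. The paper's matching decomposition, on the other hand, is perhaps more "graph-theoretic" in flavor and generalizes more readily to settings where one wants to localize the back-edge count to structured edge subsets. Either way, the two proofs are of comparable length and difficulty; yours is a perfectly good alternative.
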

\begin{proof}
  When $T \gets \Dno$, the random variable $|B_T(\sigma)|$ has binomial distribution $\Bin(2m,\frac12)$, so the claimed bound is immediate.
  
  Let $T \gets \Dyes$. Partition the edges of the tournament into perfect matchings $M_1, \ldots, M_{n-1}$. 
  For each $i\in[n-1]$, let $X_i$ be the number of back edges of $T$ involving $M_i$, i.e.,
  \ifsoda
  \begin{align*}
    X_i = |\{(u,v)\in M_i:\,& \text{ either } (u,v) \in B_T(\sigma)\\
    &\text{ or } (v,u) \in B_T(\sigma)\}| \,.
  \end{align*}
  \else
  \[
    X_i = |\{(u,v)\in M_i:\, \text{either } (u,v) \in B_T(\sigma) \text{ or } (v,u) \in B_T(\sigma)\}| \,.
  \]
  \fi
  Notice that $X_i \sim \Bin(n/2,\frac12)$, whence 
  \[
    \Pr\left[ \textstyle (1-\eps)n/4 < X_i < \frac12 (1+\eps)n/4 \right] \ge 1 - 2^{b\eps^2 n} \,,
  \]
  for a certain constant $b$. By a union bound, the probability that {\em all} of the $X_i$s are between these bounds is at least $1 - (n-1)2^{-b\eps^2 n} \ge 1 - 2^{-c\eps^2 n}$, for suitable $c$. When this latter event happens, we also have $(1-\eps)m < |B_T(\sigma)| =
  \frac12\sum_{i=1}^{n-1} X_i < (1+\eps)m$.
\end{proof}



We define a {\em $(q,\eps)$-query algorithm} for a problem $P$ to be one that access an input digraph $G$ solely through queries to an $\eps$-oracle and, after at most $q$ such queries, outputs its answer to $P(G)$. We require this answer to be correct with probability at least $\frac23$.

Now consider the particular oracle $\cO_{T,\eps}$, describing an $n$-vertex tournament $T$, that behaves as follows when queried on an ordering $\sigma$.
\begin{itemize}[itemsep=0pt]
  \item If $(1-\eps/2)m < |B_T(\sigma)| < (1+\eps/2)m$, then return $m$.
  \item Otherwise, return $|B_T(\sigma)|$.
\end{itemize}
Clearly, $\cO_{T,\eps}$ is an $\eps$-oracle. The intuition in the next two proofs is that this oracle makes life difficult by seldom providing useful information.

\begin{proposition} \label{thm:oracle-topo}
  Every $(q,\eps)$-query algorithm for \topot makes $\exp(\Omega(\eps^2 n))$ queries.
\end{proposition}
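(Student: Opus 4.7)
My plan is to apply Yao's principle with the hard distribution being $\Dyes$. A tournament $T \gets \Dyes$ is acyclic and its unique topological ordering $\pi$ is drawn uniformly from the $n!$ permutations of $[n]$, so any correct algorithm must recover $\pi$ from the oracle with probability at least $2/3$. The point of the oracle $\cO_{T,\eps}$ is that it returns the ``useless'' value $m$ whenever $|B_T(\sigma)|$ lies in $((1-\eps/2)m,\,(1+\eps/2)m)$, and by \Cref{lem:back-edge-conc} applied with parameter $\eps/2$, for every fixed ordering $\sigma$ this happens with probability at least $1 - 2^{-c'\eps^2 n}$ over $T \gets \Dyes$. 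So individually each query is almost surely uninformative.

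The main issue is that the algorithm is both randomized and adaptive, so one cannot directly union-bound over the queries made on a random input. I would handle this with a canonical-execution argument. Fix the internal randomness of the algorithm and then consider the hypothetical execution in which \emph{every} oracle response is $m$. This determines a deterministic sequence of queries $\sigma_1^*, \ldots, \sigma_q^*$ (depending only on the fixed internal randomness, not on $T$) together with a fixed output $\pi^*$. Applying \Cref{lem:back-edge-conc} to each $\sigma_i^*$ and union-bounding over $i \in [q]$, the probability over $T \gets \Dyes$ that the oracle returns $m$ on every $\sigma_i^*$---so that the real execution follows the canonical branch and hence outputs $\pi^*$---is at least $1 - q \cdot 2^{-c'\eps^2 n}$.

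To conclude, I would observe that $\pi^*$ is a fixed permutation, whereas under $\Dyes$ the true ordering $\pi$ is uniform and independent of the internal randomness, so $\Pr[\pi^* = \pi] \le 1/n!$. Averaging over the internal randomness gives
\[
  \Pr[\text{correct}] \;\le\; q \cdot 2^{-c'\eps^2 n} + \frac{1}{n!} \,.
\]
For sufficiently large $n$ the second term is well below $1/3$, so the correctness requirement $\Pr[\text{correct}] \ge 2/3$ forces $q \ge \exp(\Omega(\eps^2 n))$, which is the desired bound. The only mildly delicate point is justifying that the canonical execution can stand in for all executions via a union bound over a \emph{fixed} (non-adaptive) sequence of queries; fixing the internal randomness first makes this transparent, and is the main obstacle one must address.
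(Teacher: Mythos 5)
Your argument is correct and follows the same structure as the paper's proof: hard distribution $\Dyes$, the oracle $\cO_{T,\eps}$, the canonical branch in which every response is $m$, and a union bound via \Cref{lem:back-edge-conc} over the $q$ queries on that branch. The only difference is the final step: the paper observes that the event $\cE$ \emph{itself} implies $|B_T(\sigma_q)|$ is near $m$, hence nonzero, so the output cannot be a topological ordering; you instead note that on $\cE$ the output is a fixed permutation while the true $\pi$ is uniform, giving an extra (harmless) additive $1/n!$. Both conclude identically, and your fix-the-randomness-then-average phrasing is a clean way to discharge the Yao step.
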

\begin{proof}
WLOG, consider a $(q,\eps)$-query algorithm, $\cA$, that makes exactly $q$ queries, the last of which is its output. Using Yao's minimax principle, fix $\cA$'s random coins, obtaining a deterministic $(q,\eps)$-query algorithm $\cA'$ that succeeds with probability $\ge \frac23$ on a random tournament $T \gets \Dyes$. Let $\sigma_1, \ldots, \sigma_q$ be the sequence of queries that $\cA'$ makes when the answer it receives from the oracle to each of $\sigma_1, \ldots, \sigma_{q-1}$ is $m$.
  
  Suppose that the oracle supplied to $\cA'$ is $\cO_{T,\eps}$. Let $\cE$ be the event that $\cA'$'s query sequence is $\sigma_1, \ldots, \sigma_q$ and it receives the response $m$ to each of these queries. For a particular $\sigma_i$,
  \ifsoda
  \begin{align*}
    \Pr&[\cO_{T,\eps}(\sigma_i) = m] \\
    &= \Pr[(1-\eps/2)m < |B_T(\sigma_i)| < (1+\eps/2)m] \\
    &\ge 1 - 2^{-b\eps^2 n} 
  \end{align*}
  \else
  \[
    \Pr[\cO_{T,\eps}(\sigma_i) = m]
    = \Pr[(1-\eps/2)m < |B_T(\sigma_i)| < (1+\eps/2)m]
    \ge 1 - 2^{-b\eps^2 n} 
  \]
  \fi
  for a suitable constant $b$, by \Cref{lem:back-edge-conc}. Thus, by a union bound, $\Pr[\cE] \ge 1 - q2^{-b\eps^2 n}$.
  
  When $\cE$ occurs, $\cA'$ must output $\sigma_q$, but $\cE$ itself implies that $|B_T(\sigma_q)| \ne 0$, so $\cA'$ errs. Thus, the success probability of $\cA'$ is at most $1-\Pr[\cE] \le q2^{-b\eps^2 n}$. Since this probability must be at least $\frac23$, we need $q \ge \frac23\cdot 2^{b\eps^2 n} = \exp(\Omega(\eps^2 n))$.
\end{proof}

\begin{proposition} \label{thm:oracle-acyc}
  Every $(q,\eps)$-query algorithm for \acyct makes $\exp(\Omega(\eps^2 n))$ queries.
\end{proposition}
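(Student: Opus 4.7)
The plan is to mimic the structure of \Cref{thm:oracle-topo} but tailored to a decision problem. By Yao's minimax principle it suffices to exhibit an input distribution on which every deterministic $(q,\eps)$-query algorithm needs $q = \exp(\Omega(\eps^2 n))$ queries to succeed with probability at least $\tfrac{2}{3}$. The natural choice is the mixture $\cD = \tfrac{1}{2}\Dyes + \tfrac{1}{2}\Dno$. A sample from $\Dyes$ is of the form $\Tou(\pi)$ and is therefore acyclic, so its \acyct-answer is YES; a sample from $\Dno$ is acyclic only if its $\binom{n}{2}$ independent coin flips happen to realize some permutation, an event of probability at most $n!/2^{\binom{n}{2}}$, which is doubly exponentially small. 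Thus under $\cD$ the correct answer is YES or NO each with probability essentially $\tfrac{1}{2}$, making $\cD$ a genuinely hard distribution.

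Given this, I would fix a deterministic $(q,\eps)$-query algorithm $\cA'$ coming from Yao's principle and let $\sigma_1,\ldots,\sigma_q$ denote the query sequence it issues and $a \in \{\text{YES},\text{NO}\}$ the answer it outputs when every oracle response it receives equals $m$. The crucial observation is that \Cref{lem:back-edge-conc}, invoked with accuracy parameter $\eps/2$, applies \emph{identically} to both $\Dyes$ and $\Dno$: for each fixed $\sigma_i$, whether $T \gets \Dyes$ or $T \gets \Dno$, we have $(1-\eps/2)m < |B_T(\sigma_i)| < (1+\eps/2)m$ with probability at least $1 - 2^{-b\eps^2 n}$ for the constant $b$ of that lemma, and when this happens $\cO_{T,\eps}(\sigma_i) = m$. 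A union bound over the $q$ queries then gives that the event $\cE$ ``every response in $\cA'$'s execution equals $m$'' has probability at least $1 - q\cdot 2^{-b\eps^2 n}$ under both components of $\cD$.

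Conditioned on $\cE$, $\cA'$'s behaviour is identical in the two worlds, so it outputs the fixed answer $a$ and therefore errs on one of the two components (up to the negligible slack of $n!/2^{\binom{n}{2}}$ from $\Dno$-samples that happen to be acyclic). Consequently the error probability of $\cA'$ under $\cD$ is at least $\tfrac{1}{2}(1 - q\cdot 2^{-b\eps^2 n}) - o(1)$, and demanding this be below $\tfrac{1}{3}$ forces $q \ge \exp(\Omega(\eps^2 n))$. The main point to be careful about is the indistinguishability step: I need the concentration bound to apply to \emph{both} distributions so that the oracle's ``uninformative'' reply $m$ really does leak nothing about which world we are in. Happily, \Cref{lem:back-edge-conc} was proved for both $\Dyes$ and $\Dno$, so the proof reduces to the same union-bound accounting used in \Cref{thm:oracle-topo}, with the lone extra bookkeeping being the negligible probability that a $\Dno$-sample is acyclic.
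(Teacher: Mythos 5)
Your proof is correct and follows essentially the same approach as the paper's: Yao's principle on the mixture $\tfrac12(\Dyes+\Dno)$, the event $\cE$ that every oracle response is $m$, concentration from \Cref{lem:back-edge-conc}, and the observation that a $\Dno$-sample is acyclic only with doubly-exponentially small probability. The paper simply makes the case split on the fixed output (``yes'' vs.\ ``no'') explicit where you fold it into a single inequality; the accounting is otherwise the same.
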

\begin{proof}
  We proceed similarly to \Cref{thm:oracle-topo}, except that we require the deterministic $(q,\eps)$-query algorithm $\cA'$ to succeed with probability at least $\frac23$ on a random $T \gets \frac12(\Dyes + \Dno)$. We view $T$ as being chosen in two stages: first, we pick $Z \in_R \{\mathrm{yes},\mathrm{no}\}$ uniformly at random, then we pick $T \gets \cD_Z$.
  
  Define $\sigma_1, \ldots, \sigma_q$ and $\cE$ as before. So $\Pr[\cE] \ge 1 - q2^{-b\eps^2 n}$. When $\cE$ occurs, $\cA'$ must output some fixed answer, either ``yes'' or ``no.'' We consider these cases separately.
  
  Suppose that $\cA'$ outputs ``no,'' declaring that $T$ is not acyclic. Then $\cA'$ errs whenever $Z =$ yes and $\cE$ occurs. The probability of this is at least $\frac12 - q2^{-b\eps^2 n}$, but it must be at most $\frac13$, requiring $q = \exp(\Omega(\eps^2 n))$.
  
  Suppose that $\cA'$ outputs ``yes'' instead. Then it errs when $Z =$ no, $T$ is cyclic, and $\cE$ occurs. Since
  \[
    \Pr[T \text{ acyclic} \mid Z = \text{no}]
    = n!/2^{\binom{n}{2}} = \exp(-\Omega(n^2)) \,,
  \]
  we have $\frac13 \ge \Pr[\cA'$ errs$] \ge \frac12 - \exp(-\Omega(n^2)) - q2^{-b\eps^2 n}$, requiring $q = \exp(\Omega(\eps^2 n))$.
\end{proof}

\begin{theorem}
  A $(q,\eps)$-query algorithm that gives a multiplicative approximation for either \fast or \fassizt must make $q = \exp(\Omega(\eps^2 n))$ queries.
\end{theorem}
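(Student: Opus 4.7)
The plan is to obtain this bound as an immediate consequence of the two preceding oracle lower bounds, \Cref{thm:oracle-topo} and \Cref{thm:oracle-acyc}, combined with the reductions encoded in \Cref{obs:fas}. The key observation is that the reductions in \Cref{obs:fas} are purely post-processing on the output of the approximation algorithm: they do not require any additional oracle queries, so a $(q,\eps)$-query algorithm for \fast yields a $(q,\eps)$-query algorithm for \topot, and a $(q,\eps)$-query algorithm for \fassizt yields a $(q,\eps)$-query algorithm for \acyct.

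Concretely, for the \fast side, suppose $\cA$ is a $(q,\eps)$-query algorithm giving an $\alpha$-approximation to \fast on an acyclic tournament $T$ (i.e., whenever $T$ satisfies the \topot promise). Since $\best_T = 0$, any multiplicative approximation must return an ordering $\sigma$ with $|B_T(\sigma)| \le \alpha \cdot 0 = 0$, so $\sigma$ is a topological ordering of $T$. Thus $\cA$, run unchanged on its $\eps$-oracle, solves \topot using $q$ queries; invoking \Cref{thm:oracle-topo} forces $q = \exp(\Omega(\eps^2 n))$.

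For the \fassizt side, the same reasoning applies: if $\cA$ is a $(q,\eps)$-query algorithm returning $\hat\best \in [\best_T, \alpha \best_T]$, then $\hat\best = 0$ iff $T$ is acyclic, so a single comparison of $\cA$'s output to zero decides \acyct using exactly $q$ queries. Then \Cref{thm:oracle-acyc} again gives $q = \exp(\Omega(\eps^2 n))$.

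There is no real obstacle here beyond observing that the reduction in \Cref{obs:fas} is ``oracle-free,'' and that the oracle itself does not change between the promise problem and the approximation problem (both are $\eps$-oracles for the back-edge count of the same tournament). Putting these together yields the claimed lower bound for both \fast and \fassizt.
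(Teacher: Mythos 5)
Your proposal is correct and takes essentially the same route as the paper, which simply cites \Cref{obs:fas,thm:oracle-topo,thm:oracle-acyc} and calls the result immediate. You have merely unpacked the reduction (that it is oracle-free and preserves the query count), which is exactly the intended argument.
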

\begin{proof}
  This is immediate from \Cref{obs:fas,thm:oracle-topo,thm:oracle-acyc}.
\end{proof}

\section{Sink Finding in Tournaments}
\label{sec:sink}

A classical offline algorithm for \topo is to repeatedly find a sink $v$ in the input graph (which must exist in a DAG), prepend $v$ to a growing list, and recurse on $G \setminus v$. Thus, \sink itself is a fundamental digraph problem. Obviously, \sink can be solved in a single pass using $O(n)$ space by maintaining an ``is-sink'' flag for each vertex. Our results below show that for arbitrary order streams this is tight, even for tournament graphs.

In fact, we say much more. In $p$ passes, on the one hand, the space bound can be improved to roughly $O(n^{2/p})$. On the other hand, any $p$-pass algorithm requires about $\Omega(n^{1/p})$ space. While these bounds don't quite match, they reveal the correct asymptotics for the number of passes required to achieve polylogarithmic space usage: namely, $\Theta(\log n/ \log\log n)$.

In contrast, we show that if the stream is randomly ordered, then using $\polylog(n)$ space and a single pass is sufficient. This is a significant separation between the adversarial and random order data streams.




\subsection{Arbitrary Order Sink Finding}
\label{sec:sink:arb}

\begin{theorem}[Multi-pass algorithm] \label{thm:sink-t}
  For all $p$ with $1\le p \le \log n$, there is a $(2p-1)$-pass algorithm for \sinkt that uses $O(n^{1/p} \log (3p))$ space and has failure probability at most $1/3$.
\end{theorem}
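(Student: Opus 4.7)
Proof proposal. The plan is a sampling-and-refinement scheme that produces a chain of pivot vertices $v^*_1, \ldots, v^*_{p-1}$, each successively closer to the sink in the unique topological order $\pi$ of $G$. With $k := n^{1/p}$, I aim to ensure $|S_{v^*_j}| \le n/k^j$; after $p-1$ phases this gives $|S_{v^*_{p-1}}| \le n^{1/p}$, and because the sink lies in every $S_{v^*_j}$, it can be pinned down from $S_{v^*_{p-1}}$ by the naive one-pass sink-finder applied to that small set.

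The central observation is that because $G$ is an acyclic tournament, $\dout_G(v) = n - \pi(v)$, and moreover every out-neighbor of any $v \in S_{v^*_{j-1}}$ itself lies in $S_{v^*_{j-1}}$; so ranking samples drawn from $S_{v^*_{j-1}}$ by global out-degree agrees with ranking them by position inside $S_{v^*_{j-1}}$. Hence, if $m = c\,n^{1/p}\log(3p)$ uniform samples are drawn from $S_{v^*_{j-1}}$ and $v^*_j$ is taken to be the minimizer of $\dout_G$ among them, then the standard bound $(1 - 1/k)^m \le 1/(3p)$ (for a suitable constant $c$) gives $|S_{v^*_j}| \le |S_{v^*_{j-1}}|/k$ with failure probability at most $1/(3p)$. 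A union bound over the $p-1$ pivot-finding phases yields overall failure probability at most $1/3$.

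The $2p-1$ passes are organized as follows. Pass~$1$ pre-selects $m$ uniform random vertices of $V$ and tallies their out-degrees in a single streaming pass, and the argmin is $v^*_1$; this fuses ``sampling'' and ``counting'' because $C_0 = V$ needs no prior knowledge, which is what saves a pass overall. For each $j = 2, \ldots, p-1$, pass $2j-2$ performs reservoir sampling on the targets of the edges $(v^*_{j-1}, \cdot)$ to obtain $m$ uniform samples from $S_{v^*_{j-1}}$, and pass $2j-1$ streams through the edges again to count each sample's out-degree, producing $v^*_j$. Pass $2p-2$ extracts $S_{v^*_{p-1}}$ explicitly (which has size at most $n^{1/p}$ on the good event, fitting in budget), and pass $2p-1$ maintains an ``is-sink'' bit for every vertex of that set, flipping it to false on any streamed outgoing edge; the unique surviving vertex is the sink of $G$.

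Each pass stores only $O(m)$ counters or sample identifiers plus the stored list of at most $p$ pivots, for a total of $O(n^{1/p}\log(3p))$ words. The only nontrivial piece of the analysis is the concentration bound that forces the $\log(3p)$ factor into $m$; the streaming primitives --- pre-sampling, reservoir sampling on pivot-targeted edges, out-degree counting, and explicit set construction --- are all routine, so I do not anticipate hidden obstacles beyond verifying that the good event indeed keeps every intermediate set $S_{v^*_j}$ within its claimed size on that event, which follows immediately from the recursion $|S_{v^*_j}| \le |S_{v^*_{j-1}}|/k$.
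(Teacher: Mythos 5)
Your proposal is correct and follows essentially the same approach as the paper: repeatedly pick a uniformly random sample of size $\Theta(n^{1/p}\log p)$ from the current candidate set, take the sample closest to the sink (you use minimum out-degree; the paper uses maximum in-degree, which is equivalent in a tournament) as the next pivot, and restrict to that pivot's out-neighborhood, shrinking the candidate set by a factor of about $n^{1/p}$ per two passes. The only material divergence is the endgame: the paper runs the sample-then-pivot step for all $p$ rounds and outputs the final pivot $v_p$ directly, relying on the same concentration bound to argue that $v_p$ is itself the sink ($\ell_p = 0$), whereas you stop after $p-1$ pivots, spend pass $2p-2$ materializing the entire remaining candidate set (size $\le n^{1/p}$ on the good event), and pass $2p-1$ scanning it deterministically with is-sink flags. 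Both versions use $2p-1$ passes and $O(n^{1/p}\log(3p))$ storage; yours trades the paper's last probabilistic pivot for a deterministic sweep, which is arguably slightly cleaner since it avoids the mildly delicate rounding argument needed to conclude $\ell_p = 0$ rather than $\ell_p \le 1$. No gaps.
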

\begin{proof}
  Let the input digraph be $G = (V,E)$. For a set $S \subseteq V$, let $\max S$ denote the vertex in $S$ that has maximum in-degree. This can also be seen as the maximum vertex within $S$ according to the total ordering defined by the edge directions.
  
  Our algorithm proceeds as follows.
  \begin{itemize}[itemsep=1pt]
    \item {\em Initialization:} Set $s = \ceil{n^{1/p}\ln (3p)}$. Let $S_1$ be a set of $s$ vertices chosen randomly from $V$.
    \item For $i=1$ to $p-1$:
    \begin{itemize}[topsep=2pt,itemsep=1pt]
      \item {\em During pass $2i-1$:} Find $v_{i}=\max S_{i}$ by computing the in-degree of each vertex in $S_{i}$.
      \item {\em During pass $2i$:} Let $S_{i+1}$ be a set of $s$ vertices chosen randomly from $\{u:\, (v_i,u)\in E\}$.
    \end{itemize}
    \item {\em During pass $2p-1$:} Find $v_{p}=\max S_{p}$ by computing the in-degree of each vertex in $S_p$.
  \end{itemize}

For the sake of analysis, consider the quantity $\ell_i=|\{u:(v_i,u)\in E\}|$. Note that, for each $i \in [p]$,
\[
  \prob{\ell_i > \ell_{i-1}/n^{1/p}}
  = (1-1/n^{1/p})^{s}
  \le \frac{1}{3p} \,.
\]
Thus, by the union bound, $\ell_p = 0$ with probability at least $1-p/(3p)= 2/3$.  Note that $\ell_p=0$ implies that $v_p$ is a sink.
\end{proof}

We turn to establishing a multi-pass lower bound. Our starting point for this is the {\em tree pointer jumping} problem $\tpj_{k,t}$, which is a communication game involving $k$ players. To set up the problem, consider a complete ordered $k$-level $t$-ary tree $T$; we consider its root $z$ to be at level $0$, the children of $z$ to be at level $1$, and so on. We denote the $i$-th child of $y \in V(T)$ by $\ch{y}{i}$, the $j$-th child of $\ch{y}{i}$ by $\ch{y}{i,j}$, and so on. Thus, each leaf of $T$ is of the form $\ch{z}{i_1,\ldots,i_{k-1}}$ for some integers $i_1, \ldots, i_{k-1} \in [t]$.

An instance of $\tpj_{k,t}$ is given by a function $\phi \colon V(T) \to [t]$ such that $\phi(y) \in \b$ for each leaf $y$. The desired one-bit output is
\begin{align}
\tpj_{k,t}(\phi) &:= g^{(k)}(z) = g(g(\cdots g(z) \cdots)) \,, \text{where} \notag\\
g(y) &:= \begin{cases}
  \phi(y) \,, & \text{if $y$ is a leaf,} \\
  \ch{y}{\phi(y)} \,, & \text{otherwise.}
\end{cases} \label{eq:g-def}
\end{align}
For each $j \in \{0,\ldots,k-1\}$, Player~$j$ receives the input values $\phi(y)$ for each vertex $y$ at level $j$. The players then communicate using at most $k-1$ {\em rounds}, where a single round consists of one message from each player, speaking in the order Player~$k-1$, \ldots, Player~$0$. All messages are broadcast publicly (equivalently, written on a shared blackboard) and may depend on public random coins. The cost of a round is the {\em total} number of bits communicated in that round and the cost of a protocol is the {\em maximum}, over all rounds, of the cost of a round. The randomized complexity $\R^{k-1}(\tpj_{k,t})$ is the minimum cost of a $(k-1)$-round $\frac13$-error protocol for $\tpj_{k,t}$.

Combining the lower bound approach of Chakrabarti et al.~\cite{ChakrabartiCM16} with the improved round elimination analysis of Yehudayoff~\cite{Yehudayoff16}, we obtain the following lower bound on the randomized communication complexity of the problem.
%
\begin{theorem} \label{thm:tpj}
$\R^{k-1}(\tpj_{k,t}) = \Omega(t/k)$. \qed
\end{theorem}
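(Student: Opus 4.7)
The plan is to prove the lower bound by round elimination on the uniform product input distribution. Let $\mu$ be the distribution on inputs $\phi$ under which, independently, $\phi(y)$ is uniform on $[t]$ for every non-leaf $y$ and uniform on $\{0,1\}$ for every leaf. Under $\mu$ the output $\tpj_{k,t}(\phi)$ is a uniform bit, so any zero-information output errs with probability at least $\tfrac12$. The strategy is to show that each of the $k-1$ communication rounds can decrease this baseline error by at most $O(c/t)$, where $c$ is the per-round cost; iterating and insisting on total error $\le \tfrac13$ forces $c = \Omega(t/k)$.

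The heart of the argument is a round-elimination lemma: if $\Pi$ is a $(k-1)$-round, $c$-cost, $\eps$-error protocol for $\tpj_{k,t}$ under $\mu$, then there is a $(k-2)$-round, $c$-cost protocol $\Pi'$ for $\tpj_{k-1,t}$ under the analogous distribution, with error at most $\eps + O(c/t)$. Following Chakrabarti--Cormode--McGregor~\cite{ChakrabartiCM16}, I would embed a $\tpj_{k-1,t}$ instance into the subtree of $\tpj_{k,t}$ rooted at $\ch{z}{i^*}$ for a random $i^* \in [t]$, with all other subtree data drawn from $\mu$ via public randomness. To eliminate the first round of $\Pi$, the $\tpj_{k-1,t}$ players sample the transcript of that round from its distribution conditional only on the publicly known parts of the input---i.e., they \emph{pretend} the embedded-subtree inputs are also uniformly random. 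The simulated transcript is then used as the initial state of the remaining $k-2$ rounds of $\Pi$, which are executed honestly and count toward $\Pi'$'s round budget. Since the first-round transcript has entropy at most $c$ and the embedded inputs are a $1/t$-fraction of each player's input at each level, a conditional-mutual-information calculation bounds the total variation between the simulated and true transcript distributions by $O(c/t)$, provided one uses Yehudayoff's triangular-discrimination estimate~\cite{Yehudayoff16} in place of Pinsker's inequality.

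Iterating the lemma $k-2$ times reduces a $(k-1)$-round protocol for $\tpj_{k,t}$ to a $1$-round protocol for $\tpj_{2,t}$. The latter problem, under the analogous $\mu$, is equivalent to the one-way $\idx_t$ problem, which requires $\Omega(t)$ bits for constant error by \Cref{fact:comm-lbs}. Combining this base-case bound with the accumulated error $\eps + (k-2)\cdot O(c/t)$, a standard contradiction argument forces $c = \Omega(t/k)$.

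The main obstacle is securing the linear $O(c/t)$ per-round loss, rather than the $O(\sqrt{c/t})$ bound obtained from the classical route of measuring simulation error in total variation and invoking Pinsker's inequality; that weaker route only yields $\Omega(t/k^2)$. Yehudayoff's sharpening replaces the Pinsker step by a triangular-discrimination (or advantage-based) comparison between the protocol's acceptance probabilities on the true and simulated input distributions, eliminating the square-root loss and producing a dependence that is linear in the mutual information contributed by the eliminated round. Porting that refined analysis to the tree-structured speaking order---where several messages per round must be simulated jointly---is the most delicate part of the plan; the symmetry and independence of the $t$ level-$1$ subtrees under $\mu$ are what ultimately make the embedding and the information-theoretic accounting go through.
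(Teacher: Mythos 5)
Your proposal is correct and follows precisely the route the paper intends: the paper gives no proof of its own for this theorem, stating only that it follows by ``combining the lower bound approach of Chakrabarti et al.~\cite{ChakrabartiCM16} with the improved round elimination analysis of Yehudayoff~\cite{Yehudayoff16},'' and your sketch is a faithful unpacking of exactly that combination --- CCM-style embedding/round elimination on the product distribution with the base case reduced to $\idx_t$, upgraded from $\Omega(t/k^2)$ to $\Omega(t/k)$ via Yehudayoff's triangular-discrimination replacement of the Pinsker step. You also correctly identify that the gain comes from comparing acceptance probabilities directly rather than bounding total variation by mutual information, which is the crux of avoiding the square-root loss.
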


Based on this, we prove the following lower bound.

\begin{theorem}[Multi-pass lower bound]
  Any streaming algorithm that solves \sinkt in $p$ passes must use $\Omega(n^{1/p}/p^2)$ space.
\end{theorem}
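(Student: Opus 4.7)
The plan is to prove this lower bound by a reduction from the tree pointer jumping problem $\tpj_{k,t}$ with $k = p+1$, invoking \Cref{thm:tpj}. The heart of the reduction is to encode a $\tpj$ instance, specified by $\phi : V(T) \to [t]$ on a $k$-level $t$-ary tree $T$, as an acyclic tournament $G$ on $n = \Theta(t^{k-1}) = \Theta(t^p)$ vertices whose unique sink reveals $\tpj_{k,t}(\phi)$.

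First, I would describe the construction of $G$. Its vertex set consists of the internal nodes of $T$ together with, for each leaf $\ell$, two copies $\ell_0$ and $\ell_1$. I view the tournament as $\Tou(\sigma)$, the canonical acyclic tournament of a total order $\sigma$ on these vertices, where $\sigma$ is a pre-order DFS traversal of $T$ in which at each internal node $y$ the subtree rooted at $\ch{y}{\phi(y)}$ is visited \emph{last} (with the $t-1$ non-pointer children ordered by a fixed rule), and at each leaf $\ell$ the copy $\ell_{1-\phi(\ell)}$ precedes $\ell_{\phi(\ell)}$. By construction, the last vertex of $\sigma$, and hence the unique sink of $G$, is $\ell^*_{\phi(\ell^*)}$, where $\ell^*$ is the leaf reached by pointer-chasing from the root. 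So reading off the subscript of the sink recovers $\tpj_{k,t}(\phi) = \phi(\ell^*)$.

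Next, I would partition the edges of $G$ among the $k$ players in a way that respects what each one knows. For any pair $\{u,v\}$, exactly one of the following holds in $T$: (i)~one endpoint is an ancestor of the other, in which case the edge points from ancestor to descendant independently of $\phi$; (ii)~$u$ and $v$ lie in two distinct subtrees of the children of some level-$j$ internal node $y$, in which case the orientation depends only on $\phi(y)$; or (iii)~$\{u,v\} = \{\ell_0,\ell_1\}$ for some leaf $\ell$, which depends on $\phi(\ell)$. Assign type-(ii) edges with LCA at level $j$ to Player~$j$, type-(iii) edges to Player~$k-1$, and the $\phi$-independent type-(i) edges arbitrarily (say to Player~$0$); each player can then produce all of its edges from just the $\phi$-values it holds.

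Finally, simulate a hypothetical $p$-pass, $S$-space streaming algorithm for $\sinkt$ on the concatenated edge stream in player order $P_{k-1}, P_{k-2}, \ldots, P_0$, matching $\tpj$'s round schedule. Each pass then corresponds to one round of $k$ messages of size $S$, for a maximum round cost of $kS$. Applying \Cref{thm:tpj} yields $kS = \Omega(t/k)$, i.e., $S = \Omega(t/k^2) = \Omega(n^{1/p}/p^2)$ after substituting $t = \Theta(n^{1/p})$ and $k = p+1$. The main obstacle is to verify carefully that every tournament edge falls into exactly one of the cases (i)--(iii) so that a single player generates it from the information it possesses, and to ensure the player-to-stream simulation preserves the $(k-1)$-round structure that \Cref{thm:tpj} requires; acyclicity of $G$ is automatic since it equals $\Tou(\sigma)$.
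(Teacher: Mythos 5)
Your proposal is correct and follows essentially the same approach as the paper: reduce from $\tpj_{k,t}$ with $k=p+1$, encode the instance as an acyclic tournament $\Tou(\sigma)$ in which the subtree (or coordinate) selected by $\phi$ is placed last at each level so that the unique sink identifies the pointer-chased leaf and its bit, partition edges among players by the level of their least common ancestor / first disagreement, and stream in the order $P_{k-1},\ldots,P_0$ to match the $\tpj$ round schedule. The only cosmetic deviation is that you include the internal tree nodes as extra tournament vertices and describe the ordering via a DFS pre-order, whereas the paper works only with the $2t^{k-1}$ ``leaf pair'' vertices and writes the same ordering out level by level via the permutations $\pi_{i_1,\ldots,i_{j-1}}$; since the internal nodes only contribute $O(t^{k-2})$ additional vertices and never become the sink, this changes nothing asymptotically.
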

\begin{proof}
  We reduce from $\tpj_{k,t}$, where $k = p+1$. We continue using the notations defined above. At a high level, we encode an instance of $\tpj$ in the directions of edges in a tournament digraph $G$, where $V(G)$ can be viewed as two copies of the set of leaves of $T$. Formally,
  \[
    V(G) = \{ \ang{i_1, \ldots, i_{k-1}, a} :\, \text{each } i_j \in [t] \text{ and } a \in \b \} \,.
  \]
  We assign each pair of distinct vertices $u,v \in V(G)$ to a {\em level} in $\{0, \ldots, k-1\}$ as follows. Suppose that $u = \ang{i_1, \ldots, i_k}$ and $v = \ang{i'_1, \ldots, i'_k}$. We assign $\{u,v\}$ to level $j-1$, where $j$ is the smallest index such that $i_j \ne i'_j$. Given an instance of $\tpj_{k,t}$, the players jointly create an instance of \sinkt as follows. For each $j$ from $k-1$ to $0$, in that order, Player~$j$ assigns directions for all pairs of vertices at level $j$, obtaining a set $E_j$ of directed edges, and then appends $E_j$ to a stream. The combined stream $E_{k-1} \circ \cdots \circ E_1 \circ E_0$ defines the tournament $G$. It remains to define each set $E_j$ precisely.

  The set $E_{k-1}$ encodes the bits $\phi(y)$ at the leaves $y$ of $T$ as follows.
  \begin{align} \label{eq:leaf-bits}
    \ifsoda
    E_{k-1} = \{ ( \ang{\mathbf{i}, 1-a}, \ang{\mathbf{i}, a} ) \in V(G)^2 :\, 
    \phi(\ch{z}{\mathbf{i}}) = a \} \,,
    \else
    E_{k-1} = \{ ( \ang{i_1, \ldots, i_{k-1}, 1-a}, \ang{i_1, \ldots, i_{k-1}, a} ) \in V(G)^2 :\, 
    \phi(\ch{z}{i_1, \ldots, i_{k-1}}) = a \} \,,
    \fi
  \end{align}
  Notice that if we ignore edge directions, $E_{k-1}$ is a perfect matching on $V(G)$.
  
  Now consider an arbitrary level $j \in \{0,\ldots,k-2\}$. Corresponding to each vertex $\ch{z}{i_1, \ldots, i_{j-1}}$ at level $j$ of $T$, we define the permutation $\pi_{i_1, \ldots, i_{j-1}} \colon [t] \to [t]$ thus:
  \ifsoda
  \begin{gather} \label{eq:pi-def}
    (\pi_{i_1, \ldots, i_{j-1}}(1), \ldots, \pi_{i_1, \ldots, i_{j-1}}(t)) \\
    = (1, \ldots, \ell-1, \ell+1, \ldots, t, \ell) \,, \notag\\
    \text{where } \ell = \phi( \ch{z}{i_1, \ldots, i_{j-1}} ) \notag\,.
  \end{gather}
  \else
  \begin{align} \label{eq:pi-def}
    (\pi_{i_1, \ldots, i_{j-1}}(1), \ldots, \pi_{i_1, \ldots, i_{j-1}}(t)) &= (1, \ldots, \ell-1, \ell+1, \ldots, t, \ell) \,, \notag\\
    \text{where } \ell &= \phi( \ch{z}{i_1, \ldots, i_{j-1}} ) \,.
  \end{align}
  \fi
  Using this, we define $E_j$ so as to encode the pointers at level $j$ as follows.
  \ifsoda
  \begin{align}
    E_j = \{& ( \ang{i_1, \ldots, i_{j-1}, i_j, \ldots, i_k}, \ang{i_1, \ldots, i_{j-1}, i'_j, \ldots, i'_k} ) \notag\\
    &\in V(G)^2 :\,
    \pi_{i_1, \ldots, i_{j-1}}^{-1}(i_j) < \pi_{i_1, \ldots, i_{j-1}}^{-1}(i'_j) \}  \label{eq:ej-def} \,.
  \end{align}
  \else  
  \begin{equation} \label{eq:ej-def}
    E_j = \{ ( \ang{i_1, \ldots, i_{j-1}, i_j, \ldots, i_k}, \ang{i_1, \ldots, i_{j-1}, i'_j, \ldots, i'_k} ) \in V(G)^2 :\,
    \pi_{i_1, \ldots, i_{j-1}}^{-1}(i_j) < \pi_{i_1, \ldots, i_{j-1}}^{-1}(i'_j) \} \,.
  \end{equation}
  \fi
  
  It should be clear that the digraph $(V(G), E_0 \cup E_1 \cup \cdots \cup E_{k-1})$ is a tournament. We argue that it is acyclic. Suppose, to the contrary, that $G$ has a cycle $\sigma$. Let $j \in \{0,\ldots,k-2\}$ be the smallest-numbered level of an edge on $\sigma$. Then there exist $h_1, \ldots, h_{j-1}$ such that every vertex on $\sigma$ is of the form $\ang{h_1, \ldots, h_{j-1}, i_j, \ldots, i_k}$. Let $v^{(1)}, \ldots, v^{(r)}$ be the vertices on $\sigma$ whose outgoing edges belong to level $j$. For each $q \in [r]$, let $v^{(q)} = \ang{h_1, \ldots, h_{j-1}, i^{(q)}_j, \ldots, i^{(q)}_k}$. Let $\hat\pi = \pi_{h_1, \ldots, h_{j-1}}$. According to \cref{eq:ej-def},
  \[
    \hat\pi^{-1}\big(i^{(1)}_j\big) < \hat\pi^{-1}\big(i^{(2)}_j\big) < \cdots <
    \hat\pi^{-1}\big(i^{(r)}_j\big) < \hat\pi^{-1}\big(i^{(1)}_j\big) \,,
  \]
  a contradiction.

  It follows that $G$ has a unique sink. Let $v = \ang{h_1, \ldots, h_{k-1}, a} \in V(G)$ be this sink. In particular, for each level $j \in \{0,\ldots,k-2\}$, all edges in $E_j$ involving $v$ must be directed towards $v$. According to \cref{eq:ej-def}, we must have $\pi_{h_1, \ldots, h_{j-1}}^{-1}(h_j) = t$, i.e., $\pi_{h_1, \ldots, h_{j-1}}(t) = h_j$. By \cref{eq:pi-def}, this gives $\phi(\ch{z}{h_1, \ldots, h_{j-1}}) = h_j$. Next, by \cref{eq:g-def}, this gives $g(\ch{z}{h_1, \ldots, h_{j-1}}) = \ch{z}{h_1, \ldots, h_j}$. Instantiating this observation for $j = 0, \ldots, k-2$, we have
  \ifsoda
  \begin{gather*}
    \ch{z}{h_1} = g(z),~~
    \ch{z}{h_1,h_2} = g(\ch{z}{h_1}),~~ \ldots \\
    \ldots,~~ \ch{z}{h_1,\ldots,h_{k-1}} = g(\ch{z}{h_1,\ldots,h_{k-2}}) \,,
  \end{gather*}
  \else
  \[
    \ch{z}{h_1} = g(z),~~
    \ch{z}{h_1,h_2} = g(\ch{z}{h_1}),~~ \ldots,~~
    \ch{z}{h_1,\ldots,h_{k-1}} = g(\ch{z}{h_1,\ldots,h_{k-2}}) \,,
  \]
  \fi
  i.e., $\ch{z}{h_1,\ldots,h_{k-1}} = g^{(k-1)}(z)$.

  At this point $h_1, \ldots, h_{k-1}$ have been determined, leaving only two possibilities for $v$. We now use the fact that the sole edge in $E_{k-1}$ involving $v$ must be directed towards $v$. According to \cref{eq:leaf-bits}, $\phi(\ch{z}{h_1,\ldots,h_{k-1}}) = a$. Invoking \cref{eq:g-def} again, $a = \phi(g^{(k-1)}(z)) = g^{(k)}(z) = \tpj_{k,t}(\phi)$.

  Thus, the players can read off the desired output $\tpj_{k,t}(\phi)$ from the identity of the unique sink of the constructed digraph $G$. Notice that $n := |V(G)| = 2t^{k-1}$. It follows that a $(k-1)$-pass streaming algorithm for \sinkt that uses $S$ bits of space solves $\tpj_{k,t}$ in $k-1$ rounds at a communication cost of $kS$. By \Cref{thm:tpj}, we have $S = \Omega(t/k^2) = \Omega(n^{1/(k-1)}/k^2)$.
\end{proof}

\subsection{Random Order Sink Finding}
\label{sec:sink:rand}
\ifsoda \mbox{}\smallskip 

\noindent
\fi
In this section we show that it is possible to find the sink of an acyclic tournament in one pass over a randomly order stream while using only $\polylog(n)$ space. The algorithm we consider is as follows:
  \begin{itemize}[itemsep=1pt]
    \item {\em Initialization:} Let $S$ be a random set of $s=200 \log n$ nodes.
    \item For $i=1$ to $k:=\log_2 \left (\frac{m}{200000 n\log n}\right )$:
    \begin{itemize}[topsep=2pt,itemsep=1pt]
      \item Ingest the next $c_i:=100 \cdot 2^i (n-1) \log n$ elements of the stream: For each $v\in S$, collect the set of edges $S_v$ consisting of all outgoing edges; throw away $S_v$ if it exceeds  size $220 \log n$
      \item Pick any $v\in S$, such that $|S_v|=
       (200\pm 20) \log n$ and let $S$ be the endpoints (other than $v$) of the edges in $S_v$
    \end{itemize}
    \item Ingest the next $m/1000$ elements: find $P$ the set of vertices $w$ such that there exists an edge $uw$ for some $u\in S$
    \item Ingest the remaining  $499m/500$ elements: Output any vertex in $P$ with no outgoing edges.
  \end{itemize}
  
\begin{theorem} \label{thm:sink-t-onepass}
  There is a single pass algorithm for \sinkt that uses $O(\polylog n)$ space and has failure probability at most $1/3$ under the assumption that the data stream is randomly ordered.
\end{theorem}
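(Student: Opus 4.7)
The plan is to exploit the random ordering so that, inductively across the loop, the algorithm's set $S$ remains a uniformly random subset of size $\approx 200\log n$ of the out-neighborhood $N^+(v_i)$ of a ``pivot'' vertex $v_i$ with $d(v_i) \in (1 \pm 0.2)\,n/2^i$. The key enabling fact is that in a uniformly random stream, $|S_v|$---the number of $v$'s outgoing edges in a prefix of length $c_i$---is hypergeometric with mean $c_i d(v)/m = 200\log n \cdot (d(v)\cdot 2^i/n)$. Thus the filter $|S_v| \in (180,220)\log n$ acts as an unbiased test for $d(v) \in (0.9\,n/2^i, 1.1\,n/2^i)$, up to $n^{-\Omega(1)}$ deviations from Chernoff.

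The inductive step uses the structural fact that in an acyclic tournament the outdegrees of $N^+(v_{i-1})$ are exactly $\{0,1,\ldots,d(v_{i-1})-1\}$. Hence a $\Theta(1)$ fraction of $N^+(v_{i-1})$ has outdegree in the target band $(0.9\,n/2^i,\, 1.1\,n/2^i)$, so by Chernoff applied to the uniform sample $S$, $\Theta(\log n)$ such candidates fall into the band w.h.p. For each such candidate $v$, $|S_v|$ concentrates around its mean $\approx 200\log n$, so the algorithm finds a valid pivot; conversely, any selected $v$ has $d(v)$ inside a slightly wider band w.h.p. The new $S$, being the heads of the uniformly sub-sampled $S_{v_i}$, remains a uniform random subset of $N^+(v_i)$, closing the induction.

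After $k = \log_2\bigl(m/(200000\, n\log n)\bigr) = \Theta(\log n)$ iterations, $d(v_k) = \Theta(\log n)$ and $S$ is a uniform random subset of $\approx 200\log n$ of $N^+(v_k)$. Since every $u \neq s^*$ has an edge to $s^*$, each of the $\approx 200\log n$ edges $(u,s^*)$ with $u\in S$ appears in the $P$-identification window of length $m/1000$ with probability $\approx 1/1000$, nearly independently; hence $\Pr[s^* \notin P] \le (1-1/1000)^{200\log n} = n^{-\Omega(1)}$. For the final phase, any non-sink $w \in P$ with outdegree $d(w) \ge 1$ has all outgoing edges land in the first $m/500$ positions of the stream (and thus be missed by the last $499m/500$ elements) with probability at most $(1/500)^{d(w)}$. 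Since the tournament contains exactly one vertex of each outdegree, the expected number of non-sinks in $P$ that escape detection is bounded by $\sum_{d \ge 1}(1/500)^d < 1/499$, and union-bounding with $\Pr[s^* \notin P]$ and the $O(k/n^{10})$ loop failure yields overall error well below $1/3$.

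On space: $S$ together with all $S_v$'s (each capped at $220\log n$ edges) uses $O(\log^3 n)$ bits, and $|P|$ is bounded by the edges from $S$ in the window, with expectation $\sum_{u\in S} d(u)\cdot 1/1000 = O(\log^2 n)$, so storing $P$ with a per-vertex ``seen-outgoing'' flag costs another $O(\log^3 n)$ bits, all polylog. The hard part will be rigorously maintaining the ``uniform random subset of $N^+(v_i)$'' invariant: the hypergeometric sub-sampling that produces each new $S$ is weakly correlated with which candidate pivot the algorithm selects from $S$, so one must condition on the high-probability typical events at every stage and verify that the compounded slack over $\Theta(\log n)$ iterations remains $o(1)$.
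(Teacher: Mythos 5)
Your proposal identifies all the right ingredients and follows the same algorithm: the hypergeometric test that $|S_v|\approx 200\log n$ certifies $\mathrm{d}^+(v)\approx n/2^i$; the structural fact that the out-degrees in $N^+(v)$ form the interval $\{0,\ldots,\mathrm{d}^+(v)-1\}$; the bound $\Pr[s^*\notin P]\le (1-1/1000)^{\Theta(\log n)}=n^{-\Omega(1)}$; and the geometric-series bound $\sum_{d\ge 1}(1/500)^d<1/499$ on false positives surviving the final window. These all match the paper.

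Where you diverge is in the mechanism for the inductive step. You propose to maintain a distributional invariant --- that $S$ stays a uniformly random subset of $N^+(v_i)$ --- and then apply Chernoff to that random sample. As you yourself flag at the end, this requires tracking how the conditioning from the pivot-selection rule interacts with the distribution of the sub-sample, and compounding the slack across $\Theta(\log n)$ rounds. The paper takes a cleaner route that sidesteps this entirely: its Claim~1 is a union bound over \emph{all} vertices $u$ and all rounds $i$, showing that $|X_{u,i}|$ (the number of $u$'s out-edges in segment $i$) concentrates simultaneously for every $u$; and its Claim~2 is a union bound over all candidate $u$ with rank near $n/2^{i-1}$, showing that segment~$i$ contains an edge from $u$ to some vertex with rank near $n/2^i$. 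Once both claims hold (a single high-probability event over the stream ordering), the algorithm's success is a purely deterministic consequence, so no distributional invariant needs to be propagated and no conditioning bookkeeping arises. In short, your concern in the last paragraph is a real one for your route, and the paper's answer to it is ``don't take that route'': replace the conditional-uniformity invariant with an everywhere-at-once concentration statement and reason deterministically afterward. With that substitution, the rest of your argument --- which is essentially the paper's --- goes through.
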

\begin{proof}

    We refer to the $c_i$ elements used in the iteration $i$ as the $i$th segment of the stream. For a node $u$, let $X_{u,i}$ be the number of outgoing edges from $u$ amongst the $i$th segment. The following claim follows from the Chernoff bound:
    \begin{claim}
    With high probability, for all $u$ with  $|\rk(u)-n/2^i|\geq 0.2 \cdot n/2^i$ then 
    \[|X_{u,i}-200\log n|> 0.1\cdot 200 \log n \ .\] With high probability, for all $u$ with $|\rk(u)-n/2^i|\leq 0.05 \cdot n/2^i$, then \[|X_{u,i}-200\log n|< 0.1\cdot 200 \log n \ .\]
    \end{claim}
    If follows from the claim that if after processing the $i$th segment of the stream there exists a $v$ such that $|S_v|=(200\pm 20) \log n$ then with high probability $\rk(u)=(1\pm 0.2) \cdot n/2^i$. We next need to argue that there exists such a $v$.
    
    \begin{claim}
    With high probability, for every node $u$ with $\rk(u)=(1\pm 0.2) \cdot n/2^{i-1}$, there exists an edge $uv$ in the $i$th segment such that $|\rk(v)-n/2^i|\leq 0.05 \cdot n/2^i$.
    \end{claim}
    \begin{proof}
    There are at least $0.01 \cdot n/2^i$ such edges. The probability that none of them exists in the $i$th segment is at most $(1-c_i/m)^{0.01 \cdot n/2^i}\leq 1/\poly(n)$.
    \end{proof}
    The above two claims allow us to argue by induction that we will have an element $u$ with $\rk(u)=(1\pm 0.2) \cdot n/2^i$ after the $i$th segment. At the end of the $k$th segment we have identified at least $(200-20)\log n$ vertices where every rank is at most $(1+0.2) \cdot n/2^k=O(\log n)$. With probability at least $1-1/\poly(n)$ one of these vertices includes an edge to the sink amongst the $(k+1)$ segment and hence the sink is in $P$ with high probability. There may be other vertices in $P$ but the following claim shows that we will identify any false positives while processing the final $499m/500$ elements of the stream.
    
    \begin{claim}
    With probability at least $1-1/499$, there exists at least once outgoing edge from every node except the sink amongst the last $499m/500$ elements of the stream
    \end{claim}
    \begin{proof}[Proof of Claim]
    The probability no outgoing edge from the an element of rank $r>0$ appears in the suffix of the stream is at most $(1-499/500)^r$ . Hence, by the union bound the probability that there exists an element of rank $r>0$ without an outgoing edge is at most $\sum_{r\geq 1} (1-499/500)^r=1/499.$
    \end{proof}
    
    
    
    
    
    
    
    
    
    
    
    
    

This concludes the proof of \Cref{thm:sink-t-onepass}.
\end{proof}


\section{Topological Ordering in Random Graphs} \label{sec:randgraphs}

%
We present results for computing a topological ordering of $G\sim \PlantDag_{n,\prb}$ (see \Cref{def:plantdag}). We first present an $O(\log n)$-pass algorithm using $\tO(n^{4/3})$ space. We then present a one-pass algorithm that uses $\tO(n^{3/2})$ space and requires the assumption that the stream is in random order.

\subsection{Arbitrary Order Algorithm}
\ifsoda \mbox{}\smallskip 

\noindent
\fi
In this section, we present two different algorithms. The first is appropriate when $\prb$ is large whereas the second is appropriate when $\prb$ is small. Combining these algorithms and considering the worst case value of $\prb$ yields the algorithm using  $\tO(n^{4/3})$ space.


\pparagraph{Algorithm for large \textit{\prb}}
The basic approach is to emulate QuickSort. We claim that we can find the relationship between any vertex $u$ among $n$ vertices and a predetermined vertex $v$ using three passes and ${O}(n+\prb^{-3} \log n)$ space. Assuming this claim, we can sort in $O(\log (\prb^2 n))$ passes and $\tO(n/\prb)$ space: we recursively partition the vertices and suppose at the end of a phase we have  sub-problems of sizes $n_1, n_2, n_3, \ldots $. Any sub-problem with at least $1/\prb^2$ vertices is then sub-divided by picking $\Theta(\log n)$ random  pivots (with replacement) within the sub-problems using the aforementioned three pass algorithm. There are at most $\prb^2 n$ such sub-problems. Hence, the total space required partition all the sub-problems in this way is at most 
\[{O}\left (\log n \sum_{i=1}^{\prb^2 n } (n_i+\prb^{-3} \log n)\right )= O(n  \prb^{-1} \log^2 n)   \ .\]
Note that the size of every sub-problem decreases by a factor at least $2$ at each step with high probability and hence after $\log (q^2 n)$ iterations, all sub-problems have at most $1/q^2$ vertices. Furthermore, each vertex degree is $O(1/\prb\cdot \log n)$ in each sub-problem. Hence, the entire remaining instance can be stored using $O(n/\prb\cdot \log n)$ space.

It remains to prove our three-pass claim. For this, we define the following families of sets:
\ifsoda
\begin{align*}
  L_i &= \{u: \exists \mbox{ $u$-to-$v$ path of length} \leq i\} \,, \\
  R_i &= \{u: \exists \mbox{ $v$-to-$u$ path of length} \leq i\} \,.
\end{align*}
\else
\[
  L_i = \{u: \exists \mbox{ $u$-to-$v$ path of length} \leq i\} \,, \quad
  R_i = \{u: \exists \mbox{ $v$-to-$u$ path of length} \leq i\} \,.
\]
\fi
Using two passes and $O(n \log n)$ space we can identify $L_2$ and $R_2$ using $O(n \log n)$ space. Let $U$ be the set of vertices not contained in $L_2\cup R_2$. The following lemma (which can be proved via Chernoff bounds) establishes that $L_2\cup R_2$ includes most of the vertices of the graph with high probability.

\begin{lemma}\label{lem:usmall}
With high probability, $|U|=O(\prb^{-2} \log n)$.
\end{lemma}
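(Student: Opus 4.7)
The plan is to show that any vertex $u$ that is far from $v$ in the underlying permutation $\pi$ almost certainly lies in $L_2 \cup R_2$, so $U$ can only contain vertices close to $v$. Concretely, let $d_u = |\pi(u) - \pi(v)|$. The goal will be to prove that for any constant $C$ sufficiently large, every $u$ with $d_u \geq C q^{-2} \log n$ satisfies $u \in L_2 \cup R_2$ with probability at least $1 - 1/n^{10}$; a union bound then gives that all such $u$ are captured, leaving at most $2 C q^{-2} \log n = O(q^{-2} \log n)$ vertices in $U$.

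First I would condition on $\pi$ and fix $u$ with $\pi(u) < \pi(v)$ and $d := d_u \geq 3$ (the case $\pi(u) > \pi(v)$ is symmetric, yielding $R_2$ instead of $L_2$, and the cases $d \in \{1,2\}$ put $u$ in $L_2$ deterministically due to the ``consecutive edge is always retained'' rule). The event $u \notin L_2$ is the intersection of
\begin{itemize}
    \item the direct edge $(u,v)$ is not retained, and
    \item for each of the $d-1$ intermediate vertices $w$ (those with $\pi(u)<\pi(w)<\pi(v)$), the length-two path $(u,w,v)$ is not present.
\end{itemize}
These events depend on disjoint sets of edges in $\Tou(\pi)$, hence are mutually independent. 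For each intermediate $w$, if $\pi(w)=\pi(u)+1$ then $(u,w)$ is forced and the path fails iff $(w,v)$ is absent, with probability $1-q$; symmetrically when $\pi(w)=\pi(v)-1$; and for the remaining $d-3$ intermediate vertices both edges are present independently with probability $q$, so the path fails with probability $1-q^2$. Multiplying,
\[
  \Pr[u \notin L_2 \mid \pi] \;\le\; (1-q)\cdot (1-q)^{2} \cdot (1-q^2)^{d-3} \;\le\; \exp\!\bigl(-q^2(d-3)\bigr).
\]

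Choosing $C$ so that $C q^{-2}\log n \geq 3 + 11 q^{-2}\ln n$ makes the right-hand side at most $n^{-10}$ for every $u$ with $d_u \geq C q^{-2}\log n$ and $\pi(u)<\pi(v)$; the symmetric argument handles $\pi(u)>\pi(v)$ using $R_2$. Taking a union bound over the at most $n$ such vertices, with probability $1 - O(n^{-9})$ every vertex with $d_u \geq Cq^{-2}\log n$ lies in $L_2 \cup R_2$. Consequently $U$ is contained in the set of vertices $u$ with $d_u < Cq^{-2}\log n$, which has size at most $2Cq^{-2}\log n = O(q^{-2}\log n)$, as claimed.

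The only mildly delicate point is the independence claim and the separate treatment of the two ``boundary'' intermediates $\pi(u)+1$ and $\pi(v)-1$ (which degenerate to the same vertex when $d=3$, and for which the probabilities improve from $1-q^2$ to $1-q$); beyond keeping these cases straight, the argument is a clean Chernoff-style computation as hinted in the excerpt.
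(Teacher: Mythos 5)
Your proof is correct and is precisely the kind of union-bound-plus-exponential-tail argument the paper's parenthetical hint ("which can be proved via Chernoff bounds") is gesturing at; the paper provides no proof of its own, so there is nothing to compare against beyond that hint. The structure is sound: condition on $\pi$, note that the direct edge and the $d-1$ two-step paths through the intermediate vertices use pairwise disjoint edge slots and are therefore independent, bound the survival probability by $(1-q)^3(1-q^2)^{d-3}\le e^{-q^2(d-3)}$, pick $C$ to drive this below $n^{-10}$, union-bound over all $u$, and conclude that $U$ is contained in the window of $O(q^{-2}\log n)$ vertices around $v$. All of this checks out, including the independence claim (all non-consecutive edges are retained independently, consecutive ones deterministically, and the direct edge $(u,v)$, the pairs $(u,w),(w,v)$ for distinct intermediates $w$ are pairwise disjoint edge sets).

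One tiny slip: the two boundary intermediates $\pi(u)+1$ and $\pi(v)-1$ degenerate to the same vertex when $d=2$, not $d=3$. At $d=3$ they are the two distinct intermediates and the $(1-q^2)^{d-3}$ factor is an empty product, so the formula $(1-q)^3(1-q^2)^{d-3}$ is still exactly right there. Since you dispose of $d\le 2$ separately (both consecutive edges of the forced length-two path are retained with probability one, so $u\in L_2$ deterministically), this does not affect the argument, but the parenthetical remark should say $d=2$.
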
 
In a third pass, we store every edge between vertices in $U$ and also compute $L_3$ and $R_3$. Computing $L_3$ and $R_3$ requires only $O(n\log n)$ space. There is an edge between each pair of vertices in $U$ with probability $\prb$ and hence, the expected number of edges between vertices in $U$ is at most $\prb|U|^2=O(\prb^{-3} \log^2 n)$. By an application of the Chernoff Bound, this bound also holds w.h.p.
Note that $L_3, R_3$, and the edges within $U$ suffice to determine whether $u\in L_\infty$ or $u\in R_\infty$ for all $u$. To see this first suppose $u\in L_\infty$ and that $(u,w)$ is the critical edge on the directed path from $u$ to $v$. Either $w \in L_2$ and therefore we deduce $u\in L_3$; or $u\in L_2$; or $u\not \in L_2$ and $w\not \in L_2$ and we therefore store the edge $(u,w)$.%

This establishes the following lemma.
\begin{lemma}\label{lem:bigp}
  There is a $O(\log n)$-pass, $\tO(n/\prb)$-space algorithm for \topo on a random input graph $G\sim \PlantDag_{n,\prb}$.
  \qed
\end{lemma}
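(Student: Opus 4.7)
The plan is to formalize the QuickSort-style emulation already sketched before the statement, pinning down the three-pass pivot subroutine, the recursion depth, and the global space accounting. First I would define the subroutine precisely: given a pivot $v$ and an active vertex set $S$, two successive frontier-expansion passes compute the reachability sets $L_2$ and $R_2$ in $O(|S|\log n)$ bits. The third pass simultaneously computes $L_3, R_3$ and records every edge both of whose endpoints lie in $U := S \setminus (L_2 \cup R_2 \cup \{v\})$. By \Cref{lem:usmall}, $|U| = O(\prb^{-2}\log n)$ w.h.p., so the expected number of $U$-internal edges is $\prb\binom{|U|}{2} = O(\prb^{-3}\log^2 n)$, and a Chernoff bound yields the same upper bound w.h.p.; total working memory for one invocation is $\tO(|S| + \prb^{-3})$ bits.

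The second step is to verify that $L_3$, $R_3$, and the stored $U$-internal edges suffice to label every $u \in S$ as preceding or succeeding $v$ in the unique topological ordering. Suppose $u \in L_\infty \setminus L_2$ and let $(u,w)$ be the critical edge on a shortest $u$-to-$v$ path, so $w \notin L_1$. Either $w \in L_2$, in which case $u \in L_3$ and the subroutine detects it, or both $u,w \in U$ and the stored edge $(u,w)$ together with $w$'s membership in the $U$-extended $L_2$-reachability certifies $u \in L_\infty$; the $R_\infty$ side is symmetric. Therefore the subroutine correctly partitions $S$ into the predecessors and successors of $v$.

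The third step is the recursion. In each phase, every currently active sub-problem independently selects $\Theta(\log n)$ random pivots and runs the subroutine on each in parallel. A standard median-style argument (analogous to the Mediocrity Lemma already used in \Cref{sec:fast-multi-pass}) shows that a uniformly random pivot shrinks the sub-problem containing any fixed vertex by a factor of at least $2$ with constant probability, so $\Theta(\log n)$ pivots drive a factor-$2$ shrinkage w.h.p.; after $O(\log(\prb^2 n))$ phases, every surviving sub-problem has at most $\prb^{-2}$ vertices. Within a single phase, summing the subroutine cost over all sub-problems gives $\sum_i \tO(n_i + \prb^{-3}) = \tO(n) + \tO(\prb^2 n \cdot \prb^{-3}) = \tO(n/\prb)$ bits. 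Once sub-problems have size at most $\prb^{-2}$, each has expected internal edge count $\prb\binom{\prb^{-2}}{2} = O(\prb^{-3})$, summing to $\tO(n/\prb)$ across all residual sub-problems, and can be stored in full and sorted in-memory using a constant number of extra passes.

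The main obstacle I anticipate is the union bound across the $\prb^2 n$ sub-problems and $O(\log n)$ phases: we need $|U|$ to be well-controlled, the $U$-internal edge count to match its expectation, and the factor-$2$ shrinkage to occur for every pivot choice simultaneously. Since \Cref{lem:usmall} and the auxiliary Chernoff bounds each give $1/\poly(n)$ failure probability, the union bound goes through, but the constants hidden inside the $\Theta(\log n)$ pivot count and inside \Cref{lem:usmall} must be chosen large enough to absorb a $\poly(n)$ number of dependent bad events.
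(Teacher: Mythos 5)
Your proposal follows the same route as the paper: a three-pass pivot subroutine that identifies $L_2, R_2$ in two passes, isolates the residual set $U$, stores all $U$-internal edges and computes $L_3, R_3$ in the third pass (using \Cref{lem:usmall} to bound $|U|$ and hence the stored edges by $\tO(\prb^{-3})$), followed by a QuickSort-style recursion with $\Theta(\log n)$ parallel pivots per phase, a constant-factor shrinkage argument yielding $O(\log(\prb^2 n))$ phases, a per-phase space sum of $\tO(n + \prb^2 n \cdot \prb^{-3}) = \tO(n/\prb)$, and a final stage that stores the sub-problems of size $\le \prb^{-2}$ in full. The only cosmetic divergence is that you take $(u,w)$ to be the first edge of a \emph{shortest} $u$-to-$v$ path rather than the planted Hamiltonian-path edge (the paper's ``critical edge''); either choice makes the chain $u \to w \to \cdots$ reach $L_2$ with every intermediate edge lying inside $U$, so your correctness argument for the subroutine is valid and matches the paper's.
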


\pparagraph{Algorithm for small \textit{\prb}}
We use only two passes. In the first pass, we compute the in-degree of every vertex. In the second, we store all edges between vertices where the in-degrees differ by at most $3\sqrt{c n \prb \cdot \ln n }$ where $c>0$ is a sufficiently large constant.

\begin{lemma}\label{lem:smallp}
  There is a two-pass, $\tO(n^{3/2} \sqrt{\prb})$-space algorithm for \topo on a random input graph $G\sim \PlantDag_{n,\prb}$.
\end{lemma}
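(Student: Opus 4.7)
The plan is to analyze the algorithm in three stages: (i) a concentration bound showing that $\din(u)$ is a good proxy for $\pi(u)$; (ii) a counting argument for the space used in the second pass; (iii) a post-processing procedure that converts the stored information into a valid topological ordering.

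First, I would set up the concentration. For a vertex $u$ with rank $r = \pi(u)$, the in-degree is $\indic\{r > 1\} + X$ where $X \sim \Bin(\max\{r-2,0\},\prb)$, since only the immediate predecessor edge is planted and the remaining $r-2$ possible back-predecessors are each present independently with probability $\prb$. A Chernoff bound gives $|\din(u) - \prb\,\pi(u)| \le C\sqrt{n\prb \ln n}$ simultaneously for every $u$, with probability $\ge 1 - 1/\poly(n)$, for some constant $C$. Calling the event that this happens $\cE$, I henceforth condition on $\cE$ and choose the constant $c$ in the threshold $\tau := 3\sqrt{cn\prb \ln n}$ so that $\tau \ge 2C\sqrt{n\prb\ln n}$.

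Next, the space bound. For any two vertices $u,v$, the Chernoff bound gives $|\din(u)-\din(v)| \ge \prb\,|\pi(u)-\pi(v)| - 2C\sqrt{n\prb\ln n}$, so if $|\din(u)-\din(v)| \le \tau$ then $|\pi(u)-\pi(v)| \le W := O(\sqrt{n\ln n / \prb})$. Hence the set of vertices in the ``in-degree window'' around $u$ is contained in a rank-window of size at most $2W$. The expected number of stored edges incident to $u$ is thus at most $2 + \prb \cdot 2W = O(\sqrt{n\prb \ln n})$ (the additive $2$ bounds the contribution of the two planted path edges, if any, inside the window). Summing over $u$ and applying another Chernoff bound shows that the total number of stored edges is $\tO(n^{3/2}\sqrt{\prb})$ w.h.p. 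Storing edges with vertex labels takes $O(\log n)$ bits each, and storing the in-degree table takes $O(n\log n)$ bits, which is absorbed into $\tO(n^{3/2}\sqrt{\prb})$ as long as $\prb \ge 1/n$ (the only interesting regime).

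Finally, for correctness, in post-processing I build an auxiliary digraph $\hat G$ on $V$ whose arcs are (a) every stored edge from the second pass, and (b) every pair $(u,v)$ with $\din(v) - \din(u) > \tau$. I claim that, on the event $\cE$, (1) $\hat G$ is a DAG, and (2) every topological order of $\hat G$ is a topological order of the input $G$. For (2), take any $(u,v) \in E(G)$; then $\pi(u) < \pi(v)$, so the concentration bound forbids $\din(u) - \din(v) > \tau$, meaning either $\din(v) - \din(u) > \tau$ (so $(u,v)$ is a type-(b) arc of $\hat G$) or $|\din(u)-\din(v)| \le \tau$ (so $(u,v)$ was stored and is a type-(a) arc). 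For (1), note that both types of arcs satisfy $\pi(u) < \pi(v)$: type-(a) arcs are genuine arcs of the DAG $G$, and type-(b) arcs have $\din(v)-\din(u) > \tau \ge 2C\sqrt{n\prb\ln n}$, which forces $\pi(v)>\pi(u)$ by the concentration bound. Any cycle in $\hat G$ would therefore give a strictly increasing cyclic sequence in $\pi$, a contradiction. A standard linear-time topological sort of $\hat G$ completes the algorithm.

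The main obstacle is choosing the threshold $\tau$ so that three competing demands are simultaneously met: $\tau$ large enough that every ``potentially misordered'' edge of $G$ is captured in the second pass, $\tau$ large enough that the in-degree comparisons on the ``far apart'' pairs are unambiguously correct, and $\tau$ small enough that the number of stored edges remains $\tO(n^{3/2}\sqrt{\prb})$. The Chernoff constant $C$ controls all three, and choosing $c$ a sufficiently large constant (depending on $C$) balances them.
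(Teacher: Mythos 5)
Your proposal is correct and follows essentially the same approach as the paper: compute in-degrees in pass one, store edges between vertices with in-degree gap at most $3\sqrt{cn\prb\ln n}$ in pass two, and use the same Chernoff concentration of $\din(u)$ around $\prb\cdot\pi(u)$ for both the correctness and the $\tO(n^{3/2}\sqrt{\prb})$ space count. The one place you diverge is the post-processing: the paper simply notes that all \emph{critical} edges (those on the planted Hamiltonian path) fall within the in-degree window and are therefore stored, which already pins down the unique topological order of $G$; you instead build an auxiliary digraph $\hat G$ from both the stored edges and the in-degree-ordered far pairs and argue it is a DAG with the same topological order. Both are valid; the paper's route is a shade more economical because $\PlantDag$ always plants the full Hamiltonian path, so capturing critical edges alone suffices, whereas your argument does not rely on that structural feature and would extend to DAGs whose transitive reduction is not a path. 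Your remark that the $O(n\log n)$ in-degree table is only absorbed when $\prb\geq 1/n$ is a fair technical caveat that the paper glosses over, though it is moot in the regime where this lemma is actually invoked (it is combined with the $\tO(n/\prb)$ algorithm, and the crossover is at $\prb=n^{-1/3}$).
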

\begin{proof}
  We show that, with high probability, the above algorithm collects all critical edges and furthermore only collects $\tO(n^{3/2} \sqrt{\prb})$ edges in total.
Let $u$ be the element of rank $r_u$. Note that 
$\din(u)$ has distribution $1+\Bin(r_u-2,\prb)$.
Let $X_u=\din(u)-1$. By an application of the Chernoff Bound,
\[
\prob{|X_u-(r_u-2)\prb|\geq \sqrt{c(r_u-2)\prb\ln n }} 
\leq 1/\poly(n) \,.
\]
Hence, w.h.p., $r_u=2+X_u/\prb \pm \sqrt{c n/\prb \cdot \ln n }$ for all vertices $u$.  Therefore, if $(u,v)$ is critical, then 
\ifsoda
\begin{align*}
  |X_u-X_v|
  &\le |X_u-(r_u-2)\prb| + \\
  &\qquad |(r_u-2)\prb-(r_v-2)\prb| + |X_v-(r_v-2)\prb| \\
  &\le 3\sqrt{c n \prb \cdot \ln n } \,.
\end{align*}
\else
\[
|X_u-X_v|  \leq  |X_u-(r_u-2)\prb|+|(r_u-2)\prb-(r_v-2)\prb| + |X_v-(r_v-2)\prb| 
 \leq   3\sqrt{c n \prb \cdot \ln n } \,.
\]
\fi
This ensures that the algorithm collects all critical edges.
For the space bound, we first observe that for an arbitrary pair of vertices $u$ and $v$, if $|X_u-X_v| \leq 3\sqrt{c n \prb \cdot \ln n }$ then   
\[
| r_u-r_v| \leq |X_u-X_v|/\prb + 2 \sqrt{c n/\prb \cdot \ln n } 
\leq 
8 \sqrt{c n/\prb \cdot \ln n }  \ . 
\]
Hence, we only store an edge between vertex $u$ and vertices whose rank differs by at most $8 \sqrt{c n/\prb \cdot \ln n }$. Since edges between such vertices are present with probability $\prb$, the expected number of edges stored incident to $u$ is  $8 \sqrt{c n \prb \cdot \ln n }$ and is $ O(\sqrt{n \prb \cdot \ln n })$ by an application of the Chernoff bounds. Across all vertices this means the number of edges stored is $ O(n^{3/2} \sqrt{\prb \cdot \ln n })$ as claimed. 
\end{proof}

Combining Lemma \ref{lem:bigp} and Lemma \ref{lem:smallp} yields the main theorem of this section.

\begin{theorem}
  There is an $O(\log n)$-pass algorithm for \topo on a random input $G\sim \PlantDag_{n,\prb}$ that uses $\tO(\min(n/\prb,n^{3/2}\sqrt{\prb})$ space. For the worst-case over $\prb$, this is $\tO(n^{4/3})$.
  \qed
\end{theorem}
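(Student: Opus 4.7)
The theorem is essentially a wrapper around the two preceding lemmas, so the plan is to combine them and then optimize over $\prb$. My plan is to invoke Lemma \ref{lem:bigp} to get the $\tO(n/\prb)$ branch in $O(\log n)$ passes, and Lemma \ref{lem:smallp} to get the $\tO(n^{3/2}\sqrt{\prb})$ branch in $2$ passes. To produce a single algorithm achieving the $\min$, I would first estimate $\prb$ in a preliminary pass using $O(\log n)$ space: since $\EE[|E(G)|] = (n-1) + \prb\binom{n-1}{2} - (n-1)\prb$, a simple edge count yields $\hat\prb$ concentrated around $\prb$ up to a $(1\pm o(1))$ factor whenever $\prb = \Omega(\log n / n^2)$ (and if $\prb$ is even smaller, then the graph is essentially a Hamiltonian path that can be recovered trivially in $\tO(n)$ space).

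Next, based on $\hat\prb$, I would branch: if $\hat\prb \ge n^{-1/3}$, run the large-$\prb$ algorithm of Lemma \ref{lem:bigp}; otherwise run the small-$\prb$ algorithm of Lemma \ref{lem:smallp}. Since both subroutines use at most $O(\log n)$ passes and the initial estimation uses a single additional pass, the overall pass complexity remains $O(\log n)$. The space used is the $\tO(\min(n/\prb,\, n^{3/2}\sqrt{\prb}))$ claimed in the theorem.

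For the worst-case bound, I would set the two functions equal: $n/\prb = n^{3/2}\sqrt{\prb}$ solves to $\prb^{3/2} = n^{-1/2}$, i.e., $\prb = n^{-1/3}$, at which point the common value is $n^{4/3}$. Since $n/\prb$ is decreasing in $\prb$ and $n^{3/2}\sqrt{\prb}$ is increasing, this crossover point is exactly where the minimum is maximized, giving the $\tO(n^{4/3})$ worst case.

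There is no real obstacle here — the content is in the two lemmas. The only mild subtleties are verifying that $\prb$ can be estimated well enough in one pass to reliably decide the branch (handled by Chernoff on the edge count, with the trivial Hamiltonian-path fallback for the ultra-sparse regime), and checking that neither lemma's guarantees degrade when we substitute $\hat\prb$ for $\prb$ in the threshold test (they don't, since each lemma's space bound is continuous in $\prb$ and we only need a constant-factor estimate).
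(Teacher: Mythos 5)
Your proposal is correct and matches the paper's (very terse) proof, which simply says that combining Lemma~\ref{lem:bigp} and Lemma~\ref{lem:smallp} and taking the better of the two yields the theorem. The one thing you add beyond the paper — the preliminary pass to estimate $\prb$ from the edge count before branching — is a reasonable clarification of a point the paper leaves implicit (the paper appears to treat $\prb$ as a known parameter of the model, since both subroutine algorithms use $\prb$ directly), but it does not change the approach; the crossover calculation $n/\prb = n^{3/2}\sqrt{\prb}$ at $\prb = n^{-1/3}$ giving $n^{4/3}$ is exactly as intended.
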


\subsection{Random Order Algorithm}
\ifsoda \mbox{}\smallskip 

\noindent
\fi
The {\em transitive reduction} of a DAG $G=(V,E)$ is the minimal subgraph $G^\red=(V,E')$ such that, for all $u,v\in V$, if $G$ has a $u$-to-$v$ path, then so does $G^\red$. So if $G$ has a Hamiltonian path, $G^\red$ {\em is} this path.

The one-pass algorithm assuming a random ordering of the edges is simply to maintain $G^\red$ as $G$ is streamed in, as follows. Let $S$ be initially empty. For each edge $(u,v)$ in the stream, we add $(u,v)$ to $S$ and then remove all edges $(u',v')$ where there is a $u'$-to-$v'$ path among the stored edges.

\begin{theorem}
\ifsoda
There is a one-pass algorithm that uses $\tO(\max_{\qhat\leq \prb} \min \{n/\qhat, n^2 \qhat\})$ space and solves \topo on an input $G\sim \PlantDag_{n,\prb}$ presented in random order. In the worst case this space bound is $O(n^{3/2})$.
\else
There is a one-pass $\tO(\max_{\qhat\leq \prb} \min \{n/\qhat, n^2 \qhat\})$-space algorithm for \topo on inputs $G\sim \PlantDag_{n,\prb}$ presented in random order. In the worst case this space bound is $\tO(n^{3/2})$.
\fi
\end{theorem}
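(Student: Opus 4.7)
The plan is to analyze the stated algorithm, which maintains, as it reads the stream, the transitive reduction $S_t := G_t^\red$ of the subgraph $G_t$ of edges seen by time $t$. Correctness is immediate from the definition of $\PlantDag_{n,\prb}$: since $G$ contains the planted directed Hamiltonian path, its transitive reduction $G^\red$ is exactly this path, from which the topological order can be read off. So the real content is the space bound, namely $\max_t |S_t| = \tO(\max_{\qhat\leq \prb}\min\{n/\qhat, n^2\qhat\})$.

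The plan for the space bound is to parameterize by $\alpha = t/m$ and set $\qhat := \alpha \prb$; note $\qhat \leq \prb$ automatically. Because the stream is in uniformly random order, $G_t$ is distributed (approximately) as the random subgraph of $G$ in which each Hamiltonian edge is retained with probability $\alpha$ and each non-Hamiltonian pair $(u,v)$ is retained with probability $\alpha \prb = \qhat$, with enough independence/negative-correlation to run Chernoff-type arguments. I would then establish two complementary upper bounds on $|S_t|$:

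\emph{Bound A (trivial).}~By construction $|S_t|\leq |G_t| = t$, and w.h.p.\ $m = O(\prb n^2)$, so $|S_t| \leq t \leq \alpha m = O(\qhat n^2)$.

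\emph{Bound B (transitive reduction is small).}~I would show $\EE|G_t^\red| = \tO(n/\qhat)$. The Hamiltonian edges contribute at most $n$. For each non-Hamiltonian pair $(u,v)$ with $v-u\geq 2$, the pair lies in $G_t^\red$ only if (i) $(u,v)\in G_t$, an event of probability $\qhat$, and (ii) no intermediate vertex $w\in(u,v)$ supplies a $u\to w\to v$ path inside $G_t$. Each $w$ supplies such a two-hop path with probability $\geq \qhat^2$, and across distinct $w$ the corresponding events are essentially independent (they use disjoint edge-pairs of $G$). A standard product bound then yields
\[
\Pr[(u,v)\in G_t^\red] \leq \qhat\,(1-\qhat^2)^{v-u-O(1)}.
\]
Summing over pairs gives $\EE|G_t^\red| \leq n + \sum_{u<v}\qhat(1-\qhat^2)^{v-u-O(1)} = \tO(n/\qhat)$.

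Combining Bounds A and B at each $t$ gives $|S_t| \leq \tO(\min\{n^2\qhat,\, n/\qhat\})$ in expectation. To upgrade this to a simultaneous high-probability bound over all $t$, I would discretize the stream into $O(\log n)$ dyadic time intervals (equivalently, dyadic values of $\qhat$), apply Chernoff concentration (or an edge-exposure Azuma bound) at each dyadic value, and union-bound. Inside a dyadic interval both Bound A and Bound B change by only a constant factor, so the worst case inside the interval is captured. Taking the max over the $O(\log n)$ dyadic values of $\qhat$, we get $\max_t |S_t| = \tO(\max_{\qhat\leq \prb}\min\{n/\qhat, n^2\qhat\})$ w.h.p. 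The worst-case-over-$\prb$ bound of $\tO(n^{3/2})$ then follows by balancing: $n/\qhat = n^2\qhat$ at $\qhat = n^{-1/2}$, achievable whenever $\prb \geq n^{-1/2}$, with strictly better bounds in the sparser regime where the constraint $\qhat\leq \prb$ is binding.

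The main obstacle I expect is the rigorous justification of the product bound $(1-\qhat^2)^{v-u-O(1)}$ for $\Pr[(u,v)\in G_t^\red]$, because the candidate two-hop shortcuts through different intermediate $w$'s share endpoints and so are not strictly independent; moreover, conditioning the random-order stream on having processed exactly $t$ edges introduces a mild negative correlation among the events $\{e\in G_t\}$. I would handle both by working on the underlying product measure that independently (a)~retains each non-Hamiltonian edge in $G$ with probability $\prb$, and (b)~assigns each edge of $G$ an independent uniform arrival time in $[0,1]$; then for a fixed $(u,v)$ the candidate shortcuts through distinct $w$'s use disjoint coordinates of this product measure, giving the required genuine independence. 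The negative-correlation issue from conditioning on $|G_t|=t$ is absorbed, as usual, by coupling with the independent version and losing only low-order factors, or by FKG-type monotonicity since $|G_t^\red|$ is a decreasing function of the set of arrived edges once the baseline edge set is fixed.
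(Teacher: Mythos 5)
Your proposal is correct and follows essentially the same approach as the paper: parameterize by the prefix length (equivalently $\qhat$), use the trivial $n^2\qhat$ bound, and show that the transitive reduction is $\tO(n/\qhat)$ via two-hop shortcuts through intermediate vertices. The only presentational difference is that you aggregate a per-pair geometric bound $\qhat(1-\qhat^2)^{v-u-O(1)}$ whereas the paper dichotomizes edges into ``short'' (rank gap $\le c\qhat^{-2}\log n$, counted directly) and ``long'' (eliminated w.h.p.\ by Chernoff on the number of surviving shortcuts), sidestepping the expectation-to-w.h.p.\ upgrade you flag as the remaining obstacle.
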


\begin{proof}
Consider the length-$T$ prefix of the stream where the edges of $G$ are
presented in random order. It will be convenient to write $T = n^2 \qhat$. We will
argue that the number of edges in the transitive reduction of this prefix is
$O(\min\{n/\qhat,\, n^2 \qhat\})$ with high probability; note the bound $n^2 \qhat$ follows trivially because the transitive reduction has at most $T$ edges. The result then follows by taking the maximum over all prefixes.

We say an edge $(u,v)$ of $G$ is \emph{short} if the difference between the
ranks is
$
  r_v-r_u \le \tau := c \qhat^{-2} \log n$ 
where $c$ is some sufficiently large constant. An edge that is not short is defined to be \emph{long}. Let $S$  be the number of short edges  in $G$ and let $M$ be the total number
of edges in $G$. Note that $\expec{S}\leq (n-1)+\prb\tau n$ and $\expec{M}=(n-1)+\prb\binom{n-1}{2}$. By the Chernoff bound, $S\leq 2\prb \tau n$ and $n^2 \prb/4 \leq M\leq n^2 \prb$ with high probability. Furthermore, the number of short edges in the prefix is expected to be $T\cdot S/M$ and, with high probability, is at most 
\[
  2T\cdot S/M \le \frac{4T \prb\tau n}{n^2 \prb/4}
  = 16cn/\qhat \cdot \log n\,.
\]
Now consider how many long edges are in the transitive reduction of the prefix. For any long edge $(u,v)$, let $X_{w}$ denote the event that $(u,w),(w,v)$ are both in the prefix. Note that the variables $\{X_w\}_{w:r_u+1\leq r_w\leq r_v-1}$ are negatively correlated and that
\[
  \prob{X_w=1} \ge (\prb T/M)^2/2 \ge  \qhat^2/2 \,.
\]

Hence, if $X=\sum_{w:r_u+1\leq r_w\leq r_v-1} X_w$ then 
\[
  \expec{X} \ge c \qhat^{-2} \log n \cdot  \qhat^2/2
  = c/2 \cdot \log n
\]
and so, by the Chernoff bound, $X>0$ with high probability and if this is the case, even if $(u,v)$ is in the prefix, it will not be in the transitive reduction of the prefix. Hence, by the union bound, with high probability no long edges exist in the transitive closure of the prefix.
\end{proof}

\section{Rank Aggregation}
\label{sec:rank-aggr}

Recall the \rank problem and the distance $d$ between permutations, defined in \Cref{sec:prelim}. To recap,
the distance between two orderings is the number of pairs of objects which are ranked differently by them, i.e.,
\[
  d(\pi, \sigma) := \sum_{a,b\in [n]} \indic\{ \pi(a) < \pi(b),\, \sigma(b) < \sigma(a)\} \,.
\]
Note that \rank is equivalent to finding the median of a set of $k$ points under this distance function, which can be shown to be metric. It follows that picking a random ordering from the $k$ input orderings provides a $2$-approximation for \rank. 

A different approach is to reduce \rank to the {\em weighted} feedback arc set problem on a tournament. This idea leads to a $(1+\eps)$-approximation via $\ell_1$-norm estimation in a way similar to the algorithm in Section~\ref{sec:fast-one-pass}.
%
 Define a vector $\bx$ of length $\binom{n}{2}$ indexed by pairs of vertices $\{a,b\}$ where 
 \[ 
 x_{a,b} = \sum_{i=1}^k \indic\{\sigma_i(a) < \sigma_i(b)\} \,, 
 \]
i.e., the number of input orderings that have $a<b$. Then for any ordering $\pi$ define a vector $\by^\pi$, where for each pair of vertices $\{a,b\}$,
\[ 
 y^\pi_{a,b} = k \cdot \indic\{\pi(a) < \pi(b)\} \,.
\]
It is easy to see that 
$
\|\bx-\by^\pi\|_1 = \cost(\pi)
$.

As in Section~\ref{sec:fast-one-pass}, our algorithm maintains an $\ell_1$-sketch $S\bx$ with accuracy $\eps/3$ and error $\delta = 1/(3\cdot n!)$. By \Cref{fact:l1}, this requires at most $O(\eps^{-2} n \log^2 n)$ space. In post-processing, the algorithm considers all $n!$ permutations $\pi$ and, for each of them, computes $S(\bx - \by^\pi) = S\bx - S\by^\pi$. It thereby recovers an estimate for $\|\bx - \by^\pi\|_1$ and finally outputs the ordering $\pi$ that minimizes this estimate. 

The analysis of this algorithm is essentially the same as  in Theorem~\ref{thm:fast-exp}. Overall, we obtain the following result.


\begin{theorem} \label{thm:rank-aggr-restated}
  There is a one-pass algorithm for rank aggregation that uses $O(\eps^{-2} n \log^2 n)$ space, returns a $(1+\eps)$-approximation with probability at least $\frac23$, but requires exponential post-processing time.
  \qed
\end{theorem}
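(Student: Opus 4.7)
The plan is a direct adaptation of the $\ell_1$-sketching argument from \Cref{thm:fast-exp}. First I would verify the key identity $\|\bx-\by^\pi\|_1 = \cost(\pi)$ for the vectors defined just before the theorem: each coordinate $\{a,b\}$ contributes $|x_{a,b}-y^\pi_{a,b}|$, which equals the number of input orderings $\sigma_i$ that disagree with $\pi$ on the pair $\{a,b\}$ (either $k - x_{a,b}$ or $x_{a,b}$, depending on how $\pi$ orients the pair). Summing over all pairs recovers $\sum_{i=1}^k d(\pi,\sigma_i) = \cost(\pi)$.

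During the streaming pass I would maintain a single $\ell_1$-sketch $S\bx$ with accuracy parameter $\eps/3$ and failure parameter $\delta = 1/(3 \cdot n!)$. By \Cref{fact:l1} this uses $O(\eps^{-2}\log(n!)\log n) = O(\eps^{-2} n \log^2 n)$ bits, matching the claimed space bound, with $O(\log n)$-bit sketch entries. The sketch is fed via $\pm 1$ updates: in the triple-stream format, each token $(a,b,i)$ supplies a direct increment to $x_{a,b}$; in the concatenation format, one buffers the current ordering $\sigma_i$ in $O(n\log n)$ bits, issues the corresponding $\binom{n}{2}$ increments, and then discards the buffer. Since $k = n^{O(1)}$, the total number of updates is polynomial in $n$, within the regime handled by \Cref{fact:l1}.

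In post-processing, for each of the $n!$ permutations $\pi$ of $[n]$ I would compute $S\by^\pi$ directly from the closed-form description of $\by^\pi$ in terms of $\pi$ and $k$, use linearity of $S$ to form $S(\bx-\by^\pi) = S\bx - S\by^\pi$, and invoke the estimator to obtain a $(1\pm\eps/3)$-approximation to $\cost(\pi)$. The algorithm then returns the $\pi$ minimizing this estimate. A union bound over the $n!$ queries gives overall success probability at least $1 - n!\cdot\delta = 2/3$, and conditioned on all estimates being simultaneously accurate, the returned $\pi$ satisfies $\cost(\pi) \le \tfrac{1+\eps/3}{1-\eps/3}\min_{\pi^*}\cost(\pi^*) \le (1+\eps)\min_{\pi^*}\cost(\pi^*)$ for suitably small $\eps$. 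Because the argument is essentially the same as for \fast in \Cref{thm:fast-exp}, there is no substantive technical obstacle; the only price is the exponential post-processing time inherent in enumerating all $n!$ orderings.
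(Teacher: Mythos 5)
Your proposal is correct and follows essentially the same route as the paper: the paper likewise reduces rank aggregation to $\ell_1$-distance between $\bx$ and $\by^\pi$, maintains a single $\ell_1$-sketch with accuracy $\eps/3$ and failure probability $1/(3\cdot n!)$, and in post-processing enumerates all $n!$ permutations to find the minimizer, citing \Cref{thm:fast-exp} for the analysis. Your added remarks on handling the two stream formats and the explicit $(1+\eps/3)/(1-\eps/3)\le 1+\eps$ step are fine details the paper leaves implicit.
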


\section*{Acknowledgements}

We thank Riko Jacob for a helpful discussion about the sink finding problem.
We thank the anonymous SODA~2020 reviewers for several helpful comments that improved
the presentation of the paper.

\bibliographystyle{abbrv}
\bibliography{refs}     

\appendix

\end{document}